\newenvironment{tbs}{%
   \small\tt
   \begin{itemize}}{\end{itemize}}
\newcommand{\btbs}{\begin{tbs}}                                                                      
\newcommand{\etbs}{\end{tbs}}
\newcommand{\hide}[1]{}
\newtheorem{theorem}{Theorem}[section]
\newtheorem{fact}[theorem]{Fact}
\newtheorem{proposition}[theorem]{Proposition}
\newtheorem{corollary}[theorem]{Corollary}
\newtheorem{defi}[theorem]{Definition}
\newtheorem{conv}[theorem]{Convention}
\newtheorem{rema}[theorem]{Remark}
\newtheorem{exam}[theorem]{Example}
\newenvironment{definition}{\begin{defi}\rm}{\hfill $\lhd$\end{defi}}
\newenvironment{convention}{\begin{conv}\rm}{\end{conv}}
\newenvironment{remark}{\begin{rema}\rm}{\hfill $\lhd$\end{rema}}
\newenvironment{proof}{\begin{trivlist}\item[]{\bf
Proof.}}{\hfill {\sc qed}\end{trivlist}}
\newenvironment{proofof}[1]{\begin{trivlist}\item[\hskip\labelsep{\bf
Proof~of~{#1}.\ }]}{\hspace*{\fill} {\sc qed}\end{trivlist}}
\newtheorem{claim2}{\sc Claim}
\newenvironment{claim}{\begin{claim2}\rm}{\end{claim2}\rm}
\newenvironment{claimfirst}{\setcounter{claim2}{0}
               \begin{claim2}\rm}{\end{claim2}\rm}
\newenvironment{pfclaim}{\begin{trivlist}\item[]{\sc Proof of
Claim}}{\hfill {\mbox{$\blacktriangleleft$}}\end{trivlist}}
\newcommand{\Dom}{\mathsf{Dom}}
\newcommand{\pto}{\rightharpoonup}
\newcommand{\nada}{\varnothing}
\newcommand{\sz}[1]{|#1|}
\newcommand{\vlist}[1]{\overline{{\mathbf{#1}}}}
\newcommand{\isdef}{\mathrel{:=}}
\newcommand{\osmodel}{\mathbb{D}}
\newcommand{\emodel}{{\osmodel_\nada}}
\newcommand{\rest}{\upharpoonright}
\newcommand{\umods}{\mathfrak{M}}
\newcommand{\fovar}{\mathsf{iVar}}
\newcommand{\defbnf}{\mathrel{::=}}
\newcommand{\foeq}{\approx}
\newcommand{\foneq}{\not\approx}
\newcommand{\qu}{\exists^\infty}
\newcommand{\dqu}{\forall^\infty}
\newcommand{\wqu}{\ensuremath{\mathbf{W}}\xspace}
\newcommand{\dbnf}{\nabla}
\newcommand{\dbnfofo}[1]{\dbnf_{\ofo}(#1)}
\newcommand{\dgbnfofo}[2]{\dbnf_{\ofo}(#1,#2)}
\newcommand{\dbnfofoe}[2]{\dbnf_{\ofoe}(#1,#2)}
\newcommand{\dbnfofoei}[3]{\dbnf_{\ofoei}(#1,#2,#3)}
\newcommand{\dbnfinf}[1]{\dbnf_{\!\!\infty}(#1)}
\newcommand{\mondbnfofo}[2]{\dbnf^{#2}_\ofo(#1)}
\newcommand{\mondgbnfofo}[3]{\dbnf^{#3}_\ofo(#1,#2)}
\newcommand{\mondbnfofoe}[3]{\dbnf^{#3}_{\ofoe}(#1,#2)}
\newcommand{\mondbnfofoei}[4]{\dbnf^{#4}_{\ofoei}(#1,#2,#3)}
\newcommand{\mondbnfinf}[2]{\dbnf^{#2}_\infty(#1)}
\newcommand{\posdbnfofo}[1]{\dbnf^+_{\ofo}(#1)}
\newcommand{\posdgbnfofo}[2]{\dbnf^+_{\ofo}(#1,#2)}
\newcommand{\posdbnfofoe}[2]{\dbnf^+_{\ofoe}(#1,#2)}
\newcommand{\posdbnfofoei}[3]{\dbnf^+_{\ofoei}(#1,#2,#3)}
\newcommand{\arediff}[1]{\mathrm{diff}(#1)}
\newcommand{\hs}{\heartsuit}
\newcommand{\llang}{\mathtt{L}} 
\newcommand{\ofo}{\mathtt{M}}
\newcommand{\ofoe}{\ofo\mathtt{E}}
\newcommand{\ofoei}{\ofoe^{\infty}}
\newcommand{\monot}[2]{\mathtt{Pos}_{#2}(#1)}
\newcommand{\cont}[2]{\mathtt{Con}_{#2}(#1)}
\newcommand{\univ}[1]{\mathtt{Univ}(#1)}
\newcommand{\aut}[1]{\mathtt{Aut}(#1)}
\newcommand{\tbas}[1]{#1^{*}}
\newcommand{\tmono}{\oslash}
\newcommand{\tcont}{\ominus}
\newcommand{\tmoda}{\circ}
\newcommand{\tmodb}{\bullet}
\newcommand{\tuniv}{\otimes}
\newcommand{\tinvq}{\oast}
\newcommand{\qr}{\mathtt{qr}}
\newcommand{\ext}[1]{\llbracket#1\rrbracket}
\newcommand{\efgame}{\mathrm{EF}}
\newcommand{\eloise}{\ensuremath{\exists}\xspace}
\newcommand{\abelard}{\ensuremath{\forall}\xspace}
\newcommand{\bbN}{\mathbb{N}}
\renewcommand{\phi}{\varphi}
\title{Model Theory of Monadic Predicate Logic with the Infinity Quantifier
}
\author{
Facundo Carreiro
\thanks{Institute for Logic, Language and Computation, Universiteit van Amsterdam,
   P.O. Box 94242, 1090 GE Amsterdam. E-mail: \url{contact@facundo.io}.}
\and Alessandro Facchini\thanks{Dalle Molle Institute for Artificial Intelligence (IDSIA),
Galleria 2, 6928 Manno (Lugano), Switzerland. E-mail: \url{alessandro.facchini@idsia.ch}.}
\and Yde Venema
\thanks{Institute for Logic, Language and Computation, Universiteit van Amsterdam,
   P.O. Box 94242, 1090 GE Amsterdam. E-mail: \url{y.venema@uva.nl}.}
\and Fabio Zanasi
\thanks{University College London,
66-72 Gower Street, WC1E 6BT London, United Kingdom. E-mail: \url{f.zanasi@ucl.ac.uk}.}
}
\date{\today}
\begin{document}

\maketitle



\begin{abstract}

This paper establishes model-theoretic properties of $\ofoei$, a variation of monadic first-order logic that features the generalised quantifier $\qu$ (`there are infinitely many').

We provide syntactically defined fragments of $\ofoei$ characterising four different semantic properties of $\ofoei$-sentences: (1) being monotone and (2) (Scott) continuous in a given set of monadic predicates; (3) having truth preserved under taking submodels or (4) invariant under taking quotients. In each case, we produce an effectively defined map that 
translates an arbitrary sentence $\varphi$ to a sentence $\varphi^{\sf p}$ belonging to the
corresponding syntactic fragment, with the property that $\varphi$ is equivalent to $\varphi^{\sf p}$ precisely when it has the associated semantic property. 

Our methodology is first to provide these results in the simpler setting of monadic first-order logic with ($\ofoe$) and without ($\ofo$) equality, and then move to $\ofoei$ by including the generalised quantifier $\qu$ into the picture.

As a corollary of our developments, we obtain that the four semantic properties above are decidable for $\ofoei$-sentences. Moreover, our results are directly relevant to the characterisation of automata and expressiveness modulo bisimilirity for variants of monadic second-order logic. This application is developed in a companion paper.
\end{abstract}



\section{Introduction}\label{sec:intro}

Model theory investigates the relationship between formal languages and 
semantics. 
From this perspective, among the most important results are the so called 
\emph{preservation theorems}. 
Such results typically characterise a certain language as the fragment of 
another, richer language satisfying a certain model-theoretic property.
In doing so, they therefore  link the syntactic shape of a formula with the 
semantic properties of the class of models  it defines. 
In the case of classical first-order logic, notable examples are the 
{\L}o\'s-Tarski theorem,  stating that a first-order  formula is equivalent to a universal one if and only if the class of its models is closed under taking submodels, and Lyndon's theorem, stating that a first-order formula is equivalent to  one for which each occurrence of a relation symbol $R$  is positive if and only if it is monotone with respect to the interpretation of $R$ (see e.g. \cite{Hodges1993}).

The aim of this paper is to show that similar results also hold when considering the predicate logic $\ofoei$ that allows only monadic predicate symbols and no function symbols, but that goes beyond standard first-order logic with equality in that it features the generalised quantifier `there are infinitely many'. 
 
Generalised quantifiers were introduced by Mostowski in \cite{Mostowski1957}, 
and in a more general sense by Lindstr\"{o}m in \cite{perlindstrom1966first},
the main motivation being the observation that standard first-order quantifiers 
`there are some' and `for all' are not sufficient for expressing some basic 
mathematical concepts. 
Since then, they have attracted a lot of interests, insomuch that their study
constitutes nowadays a well-established field of logic with important 
ramifications in disciplines such as linguistics and computer 
science.\footnote{%
   For an overview see e.g. 
   \cite{van1995directions,vaananen1997generalized,sep-generalized-quantifiers}.
   For an introduction to the model theory of generalised quantifiers, the 
   interested reader can consult for 
   instance~\cite[Chapter~10]{vaananen2011models}.
   }.

Despite the fact that the absence of polyadic predicates clearly restricts its expressing power, monadic first-order logic (with identity) displays nice properties, both from a computational and a model-theoretic point of view. Indeed, the  
satisfiability problem becomes decidable  \cite{Behmann1922,Loewenheim1915}, and, in addition of an immediate application of {\L}o\'s-Tarski  and Lyndon's theorems, one can also obtain a Lindstr\"om like characterisation result  \cite{tharp1973characterization}.
Moreover, adding the possibility of quantifying over predicates does not increase the expressiveness of the language \cite{ackermann1954solvable}, meaning that when restricted to monadic predicates, monadic second order logic collapses into first-order logic. 

For what concerns monadic first-order logic extended with an infinity quantifier,
in \cite{Mostowski1957} Mostowski, already proved its decidability, whereas from 
work of V\"a\"an\"anen  \cite{vaananen77} we know that its expressive power 
coincides with that of weak monadic second-order logic restricted to monadic 
predicates, that is monadic first-order logic extended with a second order 
quantifier ranging over finite sets\footnote{%
   Extensions of monadic first-order logic with other generalised quantifiers
   have also been studied (see 
   e.g.~\cite{slomson1968monadic,caicedo1981extensions}).
   }.

\subsection*{Preservation results and proof outline.}

A preservation result involves some fragment $\llang_{\mathfrak{P}}$ of a given
yardstick logic $\llang$, related to a certain semantic property $\mathfrak{P}$. 
It is usually formulated as
\begin{equation}\label{eq:intro-0}
\phi  \in \llang \text{ has the property } \mathfrak{P} \text{ iff } 
\phi \text{ is equivalent to some } \phi' \in \llang_{\mathfrak{P}}.
\end{equation}

In this work, our main yardstick logic will be $\ofoei$. 
Table \ref{tab:0} summarises the semantic properties
($\mathfrak{P}$) we are going to consider,  
the corresponding expressively complete fragment ($\llang_{\mathfrak{P}}$) and preservation theorem.

\begin{table}[h!]
\begin{center}
\begin{tabular}{|c|c|c|}
\hline
$\mathfrak{P}$				
   & $\llang_{\mathfrak{P}}$ 		
   & Preservation Theorem 
\\ \hline \hline
     Monotonicity     
   & Positive fragment         
   & Theorem \ref{t:mono}                        
\\ (Definition \ref{def:mono})		
   & $\monot{\ofoei}{}$
   &
\\ \hline	
     Continuity     
   & Continuous fragment  
   & Theorem \ref{t:cont}            
\\ (Definition \ref{def:cont})       	
   & $\cont{\ofoei}{}$        
   &                              
\\\hline
     Preservation under submodels
   & Universal fragment    	 
   &          Theorem \ref{t:univ}                    
\\ (Definition \ref{d:inv}(\ref{d:inv}))    
   &          $\univ{\ofoei}{}$          
   &                              
\\ \hline
     Invariance under quotients& Monadic first-order logic
   & Theorem \ref{t:qinv}                 
\\ (Definition \ref{d:inv}(\ref{d:inv}))   	
   & $\ofo$             	
   &                              
\\\hline
\end{tabular}
\caption{A summary of our preservation theorems}
\label{tab:0}
\end{center}
\end{table}

The proof of each  preservation theorem is composed of two parts. 
The first, simpler one concerns 
the claim that each sentence in the fragment satisfies the concerned property. 
It is usually proved  by induction on the structure of the sentence.  
The other direction is the  \emph{expressive completeness statement}, stating 
that within the considered logic, the fragment is expressively complete for the 
property. 
Its verification generally requires more effort. 
In this paper, we will actually verify a stronger expressive completeness 
statement. 
Namely, for each semantic property $\mathfrak{P}$ and corresponding fragment
$\llang_{\mathfrak{P}}$ from Table \ref{tab:0}, we are going to provide an 
effective translation operation $(\cdot)^{\sf p}: \ofoei \to 
\llang_{\mathfrak{P}}$ such that \begin{equation}\label{eq:intro-i}
\text{if }\phi \in \ofoei\text{ has the property } \mathfrak{P} \text{ then }
\phi \text{ is equivalent to } \phi^{\sf p}.
\end{equation}
Since the satisfiability problem for $\ofoei$ is decidable and the translation
$(\cdot)^{\sf p}$ is effectively computable, we obtain, as an immediate 
corollary of (\ref{eq:intro-i}), that for each property $\mathfrak{P}$ listed
in Table \ref{tab:0}
\begin{equation}\label{eq:intro-ii}
\text{the problem whether a $\ofoei$-sentence satisfies property $\mathfrak{P}$
or not is decidable.}
\end{equation}

The proof of each instance of (\ref{eq:intro-i})  will follow an uniform 
pattern, analogous to the one employed in the aim of obtaining similar results
in the context of the modal $\mu$-calculus \cite{Jan96,d2000logical,FV12}.
The crux of the adopted proof method is that, extending known results on monadic
first-order logic, for each sentence $\phi$ in $ \ofoei$ it is possible to compute
a logically equivalent sentence in \emph{basic normal norm}. 
Such normal forms will take the shape of a disjunction $\bigvee \dbnf_{\ofoei}$, 
where each disjunct $\dbnf_{\ofoei}$ characterises a class of models of $\phi$
satisfying the same set of ${\ofoei}$-sentences of equal quantifier rank as 
$\phi$. 
Based on this, it will therefore be enough to define an effective translation
$(\cdot)^{\sf p}$ for sentences in normal form, point-wise in each disjunct 
$\dbnf_{\ofoei}$, and  then verify that it indeed satisfies (\ref{eq:intro-i}). 

As a corollary of the employed proof method, we thus obtain effective normal
forms for sentences satisfying the considered property.

In addition to $\ofoei$, in this paper we are also going to consider monadic 
first-order logic with and without equality, denoted respectively by $\ofoe$ 
and $\ofo$.
Table \ref{tab:1} shows a summary of the expressive completeness and normal 
form results presented in this paper. 

\begin{table}[h!]
\begin{center}
\begin{tabular}{cc|c|c|c|c|}
\cline{3-5}
& & \multicolumn{3}{ c| }{Language} \\ \cline{3-5}
& 									& $\ofo$ 					& $\ofoe$ 				& $\ofoei$  			\\ \cline{2-5}
& \multicolumn{1}{ |c| }{ Normal forms}		& Fact \ref{fact:ofonormalform} &  Thm. \ref{thm:bnfofoe} 	& Thm. \ref{thm:bfofoei} \\ \hline
\multicolumn{1}{ |c  }{\multirow{2}{*}{Monotonicity} }&
\multicolumn{1}{ |c| }{Completeness}   & Prop. \ref{p:fomon} & Prop. \ref{p:monofoeismonot} & Prop. \ref{p:mono-ofoei} \\ \cline{2-5}		
\multicolumn{1}{ |c  }{}                        &
\multicolumn{1}{ |c| }{Normal forms} 	& Cor. \ref{cor:ofopositivenf}	 		& Cor. \ref{cor:ofoepositivenf}		& Cor. \ref{cor:ofoeipositivenf}  \\ \cline{1-5}		
\multicolumn{1}{ |c  }{\multirow{2}{*}{Continuity} }&
\multicolumn{1}{ |c| }{Completeness}   & Prop. \ref{prop:ofocont}  & Fact \ref{fact:vb}  & Prop. \ref{lem:ofoeictrans} \\ \cline{2-5}		
\multicolumn{1}{ |c  }{}                        &
\multicolumn{1}{ |c| }{Normal forms} 	& Cor. \ref{cor:ofocontinuousnf}	 		& --		& Cor. \ref{cor:ofoeicontinuousnf}  \\ \cline{1-5}			
\multicolumn{1}{ |c  }{\multirow{2}{*}{Preservation under submodels} }&
\multicolumn{1}{ |c| }{Completeness}   & \multicolumn{3}{ c| }{Prop. \ref{p:univ2} }  \\ \cline{2-5}		
\multicolumn{1}{ |c  }{}                        &
\multicolumn{1}{ |c| }{Normal forms} 	& Cor. \ref{cor:univ}(\ref{cor:ofo})	 		& Cor. \ref{cor:univ}(\ref{cor:ofoe})		& Cor. \ref{cor:univ}(\ref{cor:ofoei})  \\ \cline{1-5}	
\multicolumn{1}{ |c  }{\multirow{2}{*}{Invariance under quotients } }&
\multicolumn{1}{ |c| }{Completeness}   & Prop. \ref{p:m-qinv} &\multicolumn{2}{ c| }{Prop. \ref{p:invq} }  \\ \cline{2-5}		
\multicolumn{1}{ |c  }{}                        &
\multicolumn{1}{ |c| }{Normal forms} 	& Fact \ref{fact:ofonormalform} & \multicolumn{2}{ c| }{Cor. \ref{cor:qinv} }  \\ \cline{1-5}	
\end{tabular}
\caption{An overview of our expressive completeness and normal form results.}
\label{tab:1}
\end{center}
\end{table}

\subsection*{Application of obtained results: the companion paper}
\emph{Parity automata} are finite-state systems playing a crucial role in 
obtaining decidability and expressiveness results in fixpoint logic (see e.g. 
\cite{vardi2008automata}).
They are specified by a finite set of states $A$, a distinguished, initial state
$a \in A$, a function $\Omega$ assigning to each states a priority (a natural 
number), and a transition function $\Delta$ whose co-domain is usually given by
a monadic logic in which  the set of (monadic) predicates  coincides with $A$.
Hence, each monadic logic $\llang$ induces its own class of automata 
$\aut{\llang}$.

A landmark result in this area is Janin and Walukiewicz's theorem stating that
the bisimulation-invariant fragment of monadic second order logic coincides 
with the modal $\mu$-calculus \cite{Jan96}, and the proof of this result is an 
interesting mix of the theory of parity automata and the model theory of
monadic predicate logic.
First, preservation and normal forms results are used to verify that (on tree 
models) $\aut{\monot{\ofoe}{}}$ is the class of automata characterising the
expressive power of monadic second order logic \cite{Walukiewicz96}, whereas 
$\aut{\monot{\ofo}{}}$ corresponds to the modal $\mu$-calculus \cite{JaninW95},
where $\monot{\llang}{}$ denote the positive fragment of the monadic logic 
$\llang$. 
Then, Janin-Walukieiwcz's expressiveness theorem is a consequence of these 
automata characterisations and the fact that positive monadic first-order 
logic without equality provides the quotient-invariant fragment of positive
monadic first-order logic with equality (see Theorem \ref{t:inv1}).

In our companion paper \cite{companionWEAK}, we provide a Janin-Walukiewicz 
type characterisation result for \emph{weak} monadic second order logic.
Our proof, analogously to the case of full monadic second order logic discussed 
previously,  crucially employs preservation and normal form results for $\ofoei$
listed in  Tables \ref{tab:0} and \ref{tab:1}.

\subsection*{Other versions}
Results in this paper first appeared in the first author's PhD thesis
(\cite[Chapter 5]{carreiro2015fragments}); this journal version largely expands
material first published as part of the conference 
papers \cite{DBLP:conf/lics/FacchiniVZ13,carreiro2014weak}. 
In particular, the whole of Section \ref{sec:inv} below contains new results.



\section{Basics}

In this section we provide the basic definitions of the monadic predicate 
liftings that we study in this paper.
Throughout this paper we fix a finite set $A$ of objects that we shall refer to 
as \emph{(monadic) predicate symbols} or \emph{names}.

We shall also assume an infinite set $\fovar$ of \emph{individual variables}.

\begin{definition}
Given a finite set $A$ we define a \emph{(monadic) model} to be a pair
$\osmodel = (D,V )$ consisting of a set $D$,  which we call the \emph{domain} of 
$\osmodel$, and an interpretation or \emph{valuation} $V : A \to \wp (D)$.
The class of all models will be denoted by $\umods$.
\end{definition}

\begin{remark}~
Note that we make the somewhat unusual choice of allowing the domain of a
monadic model to be \emph{empty}.
In view of the applications of our results to automata theory (see 
Section \ref{sec:intro}) this choice is very natural, even if it means that 
some of our proofs here become more laborious in requiring an extra check.
Observe that there is exactly one monadic model based on the empty domain;
we shall denote this model as $\emodel \isdef (\nada, \nada)$. 
\end{remark}

\begin{definition}
Observe that a valuation $V: A \to \wp (D)$ can equivalently be presented via 
its associated \emph{colouring} $V^{\flat}:D \to \wp(A)$ given by
\[
V^{\flat}(d) \isdef \{a \in A \mid d \in V(a)\}.
\]
We will use these perspectives interchangeably, calling the set $V^{\flat}(d)
\subseteq A$ the \emph{colour} or \emph{type} of $d$.
In case $D = \nada$, $V^{\flat}$ is simply the empty map.
\end{definition}

In this paper we study three languages of monadic predicate logic: the languages 
$\ofoe$ and $\ofo$ of monadic first-order logic with and without equality, 
respectively, and the extension $\ofoei$ of $\ofoe$ with the 
generalised quantifiers $\qu$ and $\dqu$.
Probably the most concise definition of the full language of monadic predicate 
logic would be given by the following grammar:
\[
\varphi \defbnf a(x)
\mid x \foeq y
\mid \neg \varphi
\mid \varphi \lor \varphi
\mid \exists x.\varphi
\mid \qu x.\varphi,
\]
where $a \in A$ and $x$ and $y$ belong to the set $\fovar$ of individual 
variables.
In this set-up we would need to introduce the quantifiers $\forall$ and $\dqu$ 
as abbreviations of $\neg \exists \neg$ and $\neg \qu \neg$, respectively.
However, for our purposes it will be more convenient to work with a variant of 
this language where all formulas are in negation normal form; that is, we only 
permit the occurrence of the negation symbol $\neg$ in front of an atomic 
formula.
In addition, for technical reasons we will add $\bot$ and $\top$ as constants,
and we will write $\neg (x \foeq y)$ as $x \foneq y$.
Thus we arrive at the following definition of our syntax.

\begin{definition}
The set $\ofoei(A)$ of \emph{monadic formulas} is given by the following grammar:
\[
\varphi \defbnf
\top \mid \bot 
\mid a(x)
\mid \neg a(x)
\mid x \foeq y
\mid x \foneq y
\mid \varphi \lor \varphi
\mid \varphi \land \varphi
\mid \exists x.\varphi
\mid \forall x.\varphi
\mid \qu x.\varphi
\mid \dqu x.\varphi
\]
where $a \in A$ and $x,y\in \fovar$. 
The language $\ofoe(A)$ of \emph{first-order logic with equality} is defined as
the fragment of $\ofoei(A)$ where occurrences of the generalised quantifiers 
$\qu$ and $\dqu$ are not allowed:
\[
\varphi \defbnf
\top \mid \bot 
\mid a(x)
\mid \neg a(x)
\mid x \foeq y
\mid x \foneq y
\mid \varphi \lor \varphi
\mid \varphi \land \varphi
\mid \exists x.\varphi
\mid \forall x.\varphi
\]
Finally, the language $\ofo(A)$ of \emph{first-order logic} is the equality-free
fragment of $\ofoe(A)$; that is, atomic formulas of the form $x \foeq y$ and 
$x \foneq y$ are not permitted either:
\[
\varphi \defbnf
\top \mid \bot 
\mid a(x)
\mid \neg a(x)
\mid \varphi \lor \varphi
\mid \varphi \land \varphi
\mid \exists x.\varphi
\mid \forall x.\varphi
\]

In all three languages we use the standard definition of free and bound
variables, and we call a formula a \emph{sentence} if it has no free variables.
In the sequel we will often use the symbol $\llang$ to denote either of the 
languages $\ofo$, $\ofoe$ or $\ofoei$.

For each of the languages $\llang \in \{ \ofo, \ofoe, \ofoei \}$, we define the 
\emph{positive fragment} $\llang^{+}$ of $\llang$ as the language obtained by 
almost the same grammar as for $\llang$, but with the difference that we do not
allow negative formulas of the form $\neg a(x)$.
\end{definition}

\noindent

The semantics of these languages is given as follows. 

\begin{definition}
The semantics of the languages $\ofo, \ofoe$ and $\ofoei$ is given in the form
of a truth relation $\models$ between models and sentences of the language.
To define the truth relation on a model $\osmodel = (D ,V)$, we distinguish 
cases.

\begin{description}
\item[Case $D=\nada$:] 
We define the \emph{truth relation} $\models$ on the empty model $\emodel$ for
all formulas that are Boolean combinations of sentences of the form $Qx. \phi$,
where $Q \in \{ \exists, \qu, \forall, \dqu \}$ is a quantifier.
The definition is by induction on the complexity of such sentences; the 
``atomic'' clauses, where the sentence is of the form $Qx. \phi$, is as follows:
\[\begin{array}{lll}
 \emodel \not\models Qx. \phi 
   & \text{if}\quad Q \in \{ \exists,  \qu \},&
\\   
\emodel \models Qx. \phi 
   & \text{if}\quad Q \in \{ \forall,  \dqu \}.&
\end{array}\]
The clauses for the Boolean connectives are as expected.

\item[Case $D \neq \nada$:] 
In the (standard) case of a non-empty model $\osmodel$, we extend the truth 
relation to arbitrary formulas, involving assignments of individual variables 
to elements of the domain.
That is, given a model $\osmodel = (D,V)$, an assignment $g :\fovar\to D$ and 
a formula $\phi \in \ofoei(A)$ we define the \emph{truth} relation $\models$ 
by a straightforward induction on the complexity of $\phi$.
Below we explicitly provide the clauses of the quantifiers:
\[\begin{array}{lll}
   \osmodel,g \models \exists x.\varphi 
   & \text{iff}\quad \osmodel,g [x\mapsto d] \models \varphi 
     \text{ for some $d\in D$},
\\   \osmodel,g \models \forall x.\varphi 
   & \text{iff}\quad \osmodel,g [x\mapsto d] \models \varphi 
     \text{ for all $d\in D$},
\\   \osmodel,g \models \qu x.\varphi 
   & \text{iff}\quad \osmodel,g [x\mapsto d] \models \varphi 
     \text{ for infinitely many $d\in D$},
\\   \osmodel,g \models \dqu x.\varphi 
   & \text{iff}\quad \osmodel,g [x\mapsto d] \models \varphi 
     \text{ for all but at most finitely many $d\in D$}.
\end{array}\]
The clauses for the atomic formulas and for the Boolean connectives are standard.
\end{description}
In what follows, when discussing the truth of $\phi$ on the empty model, we
always implicitly assume that $\phi$ is a sentence.
\end{definition}

As mentioned in the introduction, general quantifiers such as $\qu$ and $\dqu$
were introduced by Mostowski~\cite{Mostowski1957}, who proved the decidability 
for the language obtained by  extending $\ofo$ with such quantifiers. 
The decidability of the full language $\ofoei$ was then proved by Slomson
in \cite{slomson1968monadic}.\footnote{%
   The argument in \cite{slomson1968monadic} is given in terms of the so called
   \emph{Chang quantifier} but is easily seen to work also for $\qu$ and $\dqu$.
   Both Mostowski's and Slomson's decidability results can be extended to the 
   case of the empty domain.}
The case for $\ofo$ and $\ofoe$ goes back already to
\cite{Behmann1922,Loewenheim1915}.

\begin{fact}\label{f:decido}
For each logic $\llang \in \{ \ofo, \ofoe, \ofoei \}$, the problem of whether a
given $\llang$-sentence $\phi$ is satisfiable, is decidable.
\end{fact}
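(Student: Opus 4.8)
The plan is to establish decidability uniformly across the three logics by reducing satisfiability to a finite combinatorial condition on the finitely many \emph{types} over $A$. Recall that $A$ is finite, so there are only $2^{\sz{A}}$ possible colours $V^{\flat}(d) \subseteq A$ that an element can carry. The key observation is that in a purely monadic language, an element $d$ can only be "seen" by a formula through its colour $V^{\flat}(d)$ and, in the presence of equality and the infinity quantifiers, through the \emph{multiplicity} with which each colour occurs (none, finitely many of a given count, or infinitely many). This means that the truth of any sentence depends only on a finite amount of information about the model, which is what will drive the decision procedure.

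First I would make precise the relevant abstraction of a model. For $\ofo$ (no equality), the truth of a sentence depends only on which colours are \emph{realised}, i.e.\ on the set $\{V^{\flat}(d) \mid d \in D\} \subseteq \wp(A)$; an individual variable can only test the colour of the element it names, and without equality distinct variables cannot be forced apart beyond their colours. For $\ofoe$, equality lets a sentence of quantifier rank $n$ count realising elements of each colour up to threshold $n$, so the relevant invariant is the map sending each colour to $\min(k, n+1)$ where $k$ is the number of elements of that colour. For $\ofoei$, the quantifiers $\qu$ and $\dqu$ additionally distinguish "finitely many" from "infinitely many," so the invariant records for each colour whether it is realised $0$ times, exactly $m$ times for some finite $m$ up to the threshold, or infinitely many times. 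In each case there are only finitely many such invariants, and I would verify by a routine induction on formula structure that truth is determined by the invariant --- this is where the empty-domain clauses of the semantics need the extra care flagged in the remark above.

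The heart of the argument is then a \emph{small/bounded model property}: a sentence $\phi$ of quantifier rank $n$ is satisfiable iff it is satisfied by some model whose invariant is drawn from the finite set described above. Given such a bound, the decision procedure enumerates all candidate invariants (finitely many), and for each one checks whether $\phi$ holds in a canonical model realising that invariant --- using a countably infinite witness set for any colour marked "infinitely many" and finitely many witnesses otherwise. Since evaluating $\phi$ against a finitely presented invariant is itself a terminating recursive computation (each quantifier ranges over finitely many colour-classes rather than over individual elements), the whole procedure halts and correctly decides satisfiability.

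I expect the main obstacle to be the precise bookkeeping for the infinity quantifiers together with the empty-domain convention. Showing that "infinitely many" can always be witnessed canonically requires care: one must argue that if $\phi$ has \emph{any} model in which some colour is realised infinitely often, then the canonical countable model with that colour realised by $\bbN$ also satisfies $\phi$, and conversely that a colour realised only finitely often in every model cannot be "padded" without changing the truth value of a $\dqu$-subformula. Establishing this equivalence of truth under the invariant abstraction --- effectively a back-and-forth or Ehrenfeucht--Fra\"iss\'e style argument adapted to the monadic setting with $\qu$ and $\dqu$ --- is the technically delicate step, whereas the reduction to finiteness and the enumeration of invariants are straightforward once that equivalence is in hand.
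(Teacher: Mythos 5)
Your argument is sound in outline, but it is worth knowing that the paper does not prove this statement at all: it is stated as a \emph{Fact} and attributed to the literature (Behmann and L\"owenheim for $\ofo$ and $\ofoe$, Mostowski for the extension of $\ofo$ with the infinity quantifier, and Slomson for the full $\ofoei$). What you have written is therefore a self-contained proof where the paper offers a citation. Your route --- abstract a model to a profile recording, for each of the $2^{\sz{A}}$ colours, its multiplicity truncated at a threshold (with an extra ``infinite'' value for $\ofoei$), show that truth of sentences of bounded quantifier rank depends only on this profile, and then enumerate the finitely many profiles --- is exactly the machinery the paper builds later in Section 3 for its normal form theorems: your invariants are the equivalence classes of $\sim^=_k$ and $\sim^\infty_k$, and the ``technically delicate step'' you flag is discharged there by Propositions \ref{prop:connofoe} and \ref{prop:connolque} via Ehrenfeucht--Fra\"iss\'e games. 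One point in your favour: since the paper's proof of Theorem \ref{thm:bnfofoe} itself invokes Fact \ref{f:decido} to test equivalence of candidate normal forms, a proof of decidability must not rest on the normal form theorems; yours does not, as it uses only the invariant lemmas. Two small things to tighten if you were to write this out: (i) the truncation threshold should be pinned down relative to the quantifier rank exactly as in the definition of $\sim^=_k$ (your $\min(k,n+1)$ is off by a constant from the paper's convention, harmless but worth fixing); and (ii) the claim that evaluating $\phi$ on a canonical (possibly infinite) model is a terminating computation needs the observation that any two unnamed elements of the same colour are interchangeable, so each quantifier can be restricted to finitely many choices --- one previously named element or one fresh representative per colour class, with $\qu$ reduced to asking whether the witness set contains a colour class of infinite multiplicity. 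With those details supplied, your proof is correct.
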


\noindent
In the remainder of the section we fix some further definitions and notations,
starting with some useful syntactic abbreviations.

\begin{definition}
Given a list $\vlist{y} = y_1\cdots y_n$ of individual variables, we use the
formula
\[
\arediff{\vlist{y}} \isdef
\bigwedge_{1\leq m < m^{\prime} \leq n} (y_m \foneq y_{m^{\prime}})
\]
to state that the elements $\vlist{y}$ are all distinct.
An \emph{$A$-type} is a formula of the form 
\[
\tau_{S}(x) \isdef
\bigwedge_{a\in S}a(x) \land \bigwedge_{a\in A\setminus S}\lnot a(x).
\]
where $S \subseteq A$.
Here and elsewhere we use the convention that $\bigwedge \nada = \top$ (and 
$\bigvee \nada = \bot$).
The \emph{positive} $A$-type $\tau_{S}^+(x)$ only bears positive information, 
and is defined as 
\[
\tau_{S}^+(x) \isdef \bigwedge_{a\in S}a(x).
\]
Given a one-step model $\osmodel = (D,V)$ we define
\[
\sz{S}_\osmodel \isdef |\{d\in D \mid \osmodel \models \tau_S(d) \}|
\]
as the number of elements of $\osmodel$ that realise the type $\tau_{S}$.
\end{definition}

We often blur the distinction between the formula $\tau_{S}(x)$ and the subset 
$S \subseteq A$, calling $S$ an $A$-type as well.
Note that we have $\osmodel \models \tau_S(d)$ iff $V^{\flat}(d) = S$, so that
we may refer to $V^{\flat}(d)$ as the \emph{type of} $d \in D$ indeed.

\begin{definition}
The quantifier rank $\qr(\varphi)$ of a formula $\varphi \in \ofoei$ (hence also 
for $\ofo$ and $\ofoe$) is defined as follows:
\[\begin{array}{llll}
\qr(\varphi) &\isdef & 0 
  & \text{if $\varphi$ is atomic},
\\ \qr(\neg\psi) & \isdef & \qr(\psi)
\\ \qr(\psi_{1}\mathrel{\hs}\psi_{2}) & \isdef & \max\{\qr(\psi_1),\qr(\psi_2)\}
  & \text{where } \hs \in \{ \land,\lor\}
\\ \qr(Qx.\psi) & \isdef & 1+\qr(\psi),
  & \text{where } Q \in \{\exists,\forall,\qu,\dqu\}
\end{array}\]
Given a monadic logic $\llang$ we write $\osmodel \equiv_k^{\llang} \osmodel'$
to indicate that the models $\osmodel$ and $\osmodel'$ satisfy exactly the same
sentences $\varphi \in \llang$ with $\qr(\varphi) \leq k$. 
We write $\osmodel \equiv^{\llang} \osmodel'$ if $\osmodel \equiv_k^{\llang}
\osmodel'$ for all $k$.
When clear from context, we may omit explicit reference to $\llang$.
\end{definition}

\begin{definition}
A \emph{partial isomorphism} between two models $(D,V )$ and $(D',V ')$ is a 
partial function $f: D \pto D'$ which is injective and satisfies that 
$d \in V (a) \Leftrightarrow f(d) \in V '(a)$ for all $a\in A$ and $ d\in 
\Dom(f)$.
Given two sequences $\vlist{d} \in D^k$ and $\vlist{d'} \in {D'}^k$ we use 
$f:\vlist{d} \mapsto \vlist{d'}$ to denote the partial function $f:D\pto D'$ 
defined as $f(d_i) \isdef d'_i$. 
We will take care to avoid cases where there exist $d_i,d_j$ such that $d_i = 
d_j$ but $d'_i \neq d'_j$.
\end{definition}

Finally, for future reference we briefly discuss the notion of \emph{Boolean 
duals}.
We first give a concrete definition of a dualisation operator on the set of 
monadic formulas.

\begin{definition}\label{def:concreteduals} 
The \emph{(Boolean) dual} $\phi^{\delta} \in {\ofoei}(A)$ of $\phi\in 
{\ofoei}(A)$ is the formula given by:
\begin{align*}
 (a(x))^{\delta} & \isdef  a(x) 
 & (\lnot a(x))^{\delta} & \isdef  \lnot a(x) 
\\ (\top)^{\delta} & \isdef  \bot 
  & (\bot)^{\delta} & \isdef  \top 
\\  (x \approx y)^{\delta} & \isdef  x \not\approx y 
  & (x \not\approx y)^{\delta}& \isdef  x \approx y 
\\ (\phi \wedge \psi)^{\delta} &\isdef  \phi^{\delta} \vee \psi^{\delta} 
  &(\phi \vee \psi)^{\delta}& \isdef  \phi^{\delta} \wedge \psi^{\delta}
\\ (\exists x.\psi)^{\delta} &\isdef  \forall x.\psi^{\delta} 
  &(\forall x.\psi)^{\delta} &\isdef  \exists x.\psi^{\delta} 
\\ (\exists^{\infty} x.\psi)^{\delta} &\isdef \forall^{\infty} x.\psi^{\delta} 
  &(\forall^{\infty} x.\psi)^{\delta} &\isdef  \exists^{\infty} x.\psi^{\delta}
\end{align*}
\end{definition}

\begin{remark}
Where $\llang\in\{\ofo,\ofoe,\ofoei\}$, observe that if $\phi \in \llang(A)$ 
then $\phi^{\delta} \in \llang(A)$. 
Moreover, the operator preserves positivity of the predicates, that is, if $\phi
\in \llang^+(A)$ then $\phi^{\delta} \in \llang^+(A)$.
\end{remark}

\noindent 
The following proposition states that the formulas $\phi$ and $\phi^{\delta}$ 
are Boolean duals.
We omit its proof, which is a routine check.

\begin{proposition}\label{props:duals}
Let $\phi \in \ofoei(A)$ be a monadic formula.
Then $\phi$ and $\phi^{\delta}$ are indeed Boolean duals, in the sense that for
every monadic model $(D,V)$ we have that
\[
(D,V) \models \phi \text{ iff } (D,V^{c}) \not\models \phi^{\delta},
\]
where $V^{c}: A \to \wp (D)$ is the valuation given by $V^{c}(a) \isdef 
D \setminus V(a)$.
\end{proposition}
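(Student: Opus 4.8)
The plan is to prove Proposition \ref{props:duals} by a straightforward induction on the structure of the formula $\phi$, simultaneously for all assignments $g$ (so that the quantifier cases go through). The precise statement I would induct on is: for every model $(D,V)$ and every assignment $g : \fovar \to D$,
\[
(D,V),g \models \phi \text{ iff } (D,V^c),g \not\models \phi^{\delta}.
\]
The proposition itself is the special case where $\phi$ is a sentence (so $g$ is irrelevant). First I would dispose of the atomic cases. For $a(x)$, by definition of $V^c$ we have $g(x) \in V(a)$ iff $g(x) \notin V^c(a)$, and since $(a(x))^{\delta} = a(x)$ this is exactly $(D,V),g \models a(x)$ iff $(D,V^c),g \not\models a(x)$. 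The case $\neg a(x)$ is symmetric. For $x \foeq y$, note the dual is $x \foneq y$ and the valuation plays no role, so both sides reduce to the tautology ``$g(x) = g(y)$ iff not $g(x) \neq g(y)$''; the $x \foneq y$ case is identical. The constants $\top,\bot$ are immediate from $(\top)^{\delta} = \bot$ and $(\bot)^{\delta} = \top$.

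For the inductive step I would handle the Boolean connectives using the De Morgan shape of the dualisation. For conjunction, $(D,V),g \models \phi \wedge \psi$ iff $(D,V),g \models \phi$ and $(D,V),g \models \psi$, which by the induction hypothesis is equivalent to $(D,V^c),g \not\models \phi^{\delta}$ and $(D,V^c),g \not\models \psi^{\delta}$, i.e. $(D,V^c),g \not\models \phi^{\delta} \vee \psi^{\delta} = (\phi \wedge \psi)^{\delta}$; disjunction is dual. For the quantifiers I would use that each is sent to its semantic dual: $\exists \leftrightarrow \forall$ and $\qu \leftrightarrow \dqu$. For instance $(D,V),g \models \exists x.\psi$ iff there is $d \in D$ with $(D,V),g[x \mapsto d] \models \psi$, which by the induction hypothesis fails iff for all $d$ we have $(D,V^c),g[x \mapsto d] \models \psi^{\delta}$, i.e. iff $(D,V^c),g \models \forall x.\psi^{\delta}$; negating gives the claim. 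The $\qu$ case is the only one with genuine content: $(D,V),g \models \qu x.\psi$ iff $\psi$ holds for infinitely many witnesses $d$, so by the induction hypothesis $\psi^{\delta}$ (under $V^c$) fails for infinitely many $d$, which is exactly the negation of ``$\psi^{\delta}$ holds for all but finitely many $d$'', i.e. of $(D,V^c),g \models \dqu x.\psi^{\delta}$; here one uses that ``cofinitely many'' is the literal Boolean complement of ``infinitely many fail''.

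The only subtlety, and the step I expect to require the promised ``extra check'', is the empty model $D = \nada$, where truth is defined directly on Boolean combinations of quantified sentences rather than via assignments. There I would verify the biconditional for the atomic quantified sentences $Qx.\phi$ using the stipulated clauses: for $Q \in \{\exists,\qu\}$ we have $\emodel \not\models Qx.\phi$ while $(Qx.\phi)^{\delta}$ has leading quantifier in $\{\forall,\dqu\}$, so $\emodel \models (Qx.\phi)^{\delta}$, and $\emodel = \emodel^c$ since the empty valuation is its own complement; hence $\emodel \models Qx.\phi$ iff $\emodel^c \not\models (Qx.\phi)^{\delta}$ holds vacuously, and symmetrically for $Q \in \{\forall,\dqu\}$. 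The Boolean combinations over these atoms then follow by the same De Morgan argument as above. Since the whole argument is a routine structural induction with the quantifier clauses matching the dualisation table by construction, no step is truly hard; the proposition is correctly flagged in the text as ``a routine check'', and the empty-domain base case is the only place demanding care.
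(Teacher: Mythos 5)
Your proof is correct and is precisely the routine structural induction (strengthened to arbitrary assignments, with the dualisation clauses matching the semantic clauses case by case) that the paper alludes to when it omits the proof as ``a routine check''; the empty-domain case is handled exactly as the paper's stipulated clauses require, using that the empty valuation is its own complement. Nothing further is needed.
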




\section{Normal forms}
\label{sec:normalforms}

In this section we provide normal forms for the logics $\ofo$, $\ofoe$ and 
$\ofoei$. 
These normal forms will be pivotal for characterising the different fragments of 
these logics in later sections.
\begin{convention}
Here and in the sequel it will often be convenient to blur the distinction 
between lists and sets.
For instance, identifying the list $\vlist{T} = T_{1}\cdots T_{n}$ with the
set $\{ T_{1}, \ldots, T_{n} \}$, we may write statements like $S \in \vlist{T}$ 
or $\Pi \subseteq \vlist{T}$. 
Moreover, given a finite set $\Phi = \{\phi_1, \dots, \phi_n\}$, we write $\phi_1 \land \dots \land \phi_n$ as $\bigwedge \Phi$, and $\phi_1 \lor \dots \lor \phi_n$ as $\bigvee \Phi$. If $\Phi$ is empty, we set as usual $\bigwedge \Phi = \top$ and $\bigvee \Phi = \bot$.
Finally, notice that we write $\bigvee_{1\leq m < m^{\prime} \leq n} (y_m \foeq y_{m^{\prime}}) \lor \psi$ as $\arediff{\vlist{y}} \to \psi$. 
\end{convention}


\subsection{Normal form for $\ofo$}

We start by introducing a normal form for monadic first-order logic without 
equality.

\begin{definition}\label{def:bfofo}
Given sets of types $\Sigma, \Pi \subseteq \wp(A)$, we define the following 
formulas:
\[\begin{array}{lll}
   \dgbnfofo{\Sigma}{\Pi} &\isdef&
   \bigwedge_{S\in\Sigma} \exists x. \tau_S(x) \land 
   \forall x. \bigvee_{S\in\Pi} \tau_S(x)
\\ \dbnfofo{\Sigma}      &\isdef& \dgbnfofo{\Sigma}{\Sigma}
\end{array}\]
A sentence of $\ofo(A)$ is in \emph{basic form} if it is 
a disjunction of formulas of the form $\dbnfofo{\Sigma}$.
\end{definition}

Clearly the meaning of the formula $\dbnfofo{\Sigma}$ is that $\Sigma$ is a 
complete description of the collection of types that are realised in a monadic 
model. 
Notice that $  \dgbnfofo{\Sigma}{\Pi} = \dbnfofo{\Sigma} = \forall x. \bot$, for 
$\Sigma =\Pi = \nada$.

\medskip

\noindent
Every $\ofo$-formula is effectively equivalent to a formula in basic form.

\begin{fact}\label{fact:ofonormalform}
There is an effective procedure that transforms an arbitrary $\ofo$-sentence 
$\phi$ into an equivalent formula $\tbas{\phi}$ in basic form.
\end{fact}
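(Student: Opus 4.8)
The plan is to prove this by induction on the structure of the $\ofo$-sentence $\phi$, exhibiting at each step an effective transformation that yields a disjunction of formulas $\dbnfofo{\Sigma}$. Since $\phi$ is a sentence, its only atomic subformulas that can occur outside the scope of a quantifier are $\top$ and $\bot$ (an atom $a(x)$ has a free variable), so the base cases are immediate: $\top$ is equivalent to the disjunction $\bigvee_{\Sigma \subseteq \wp(A)} \dbnfofo{\Sigma}$ ranging over all sets of types, while $\bot$ is the empty disjunction. For the inductive step, disjunction is trivial—the basic form is closed under $\lor$ by simply concatenating the disjuncts. The two genuine cases are conjunction and existential quantification.

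The key difficulty is that the inductive hypothesis gives normal forms only for \emph{sentences}, whereas $\exists x.\psi$ and $\forall x.\psi$ are built from formulas $\psi$ with a free variable $x$. To handle this I would strengthen the induction to formulas with (at most) one free variable, proving that any such $\psi(x)$ is equivalent to a disjunction of formulas of the form $\tau_S(x) \land \dbnfofo{\Sigma}$, where $\tau_S(x)$ pins down the type of the free variable and $\dbnfofo{\Sigma}$ describes the rest of the model. Because $\ofo$ has no equality and only monadic predicates, the free variable $x$ cannot interact with the quantified variables except through its type, so this ``type-of-$x$ plus global-type-content'' shape is preserved under the connectives. Conjunction is handled by distributing over the disjunctions and then merging two conjuncts $\tau_S(x)\land \dbnfofo{\Sigma}$ and $\tau_{S'}(x)\land\dbnfofo{\Sigma'}$: this is satisfiable only if $S = S'$, and then the conjunction of $\dbnfofo{\Sigma}$ and $\dbnfofo{\Sigma'}$ is itself in basic form, being equivalent to $\dbnfofo{\Sigma}$ if $\Sigma = \Sigma'$ and to $\bot$ otherwise (since each $\dbnfofo{\Sigma}$ fixes the realised types exactly).

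For the quantifier cases, I would push $\exists x$ and $\forall x$ through the disjunction $\bigvee_i \big(\tau_{S_i}(x) \land \dbnfofo{\Sigma_i}\big)$. For $\exists x.\big(\tau_S(x)\land \dbnfofo{\Sigma}\big)$, note that asserting an element of type $S$ exists simply forces $S \in \Sigma$, so this is equivalent to $\dbnfofo{\Sigma}$ if $S\in\Sigma$ and to $\bot$ otherwise; existential quantification then distributes over $\lor$. The universal case is obtained by duality, using the dualisation operator from Definition~\ref{def:concreteduals} and Proposition~\ref{props:duals}, or handled directly in the same style. Throughout, I must keep track of the empty-domain model $\emodel$: the disjunct $\dbnfofo{\nada} = \forall x.\bot$ is exactly the one true on $\emodel$, and one checks that the transformation respects the empty-model semantics given in the paper.

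The main obstacle I anticipate is bookkeeping rather than conceptual: setting up the strengthened induction hypothesis for one-free-variable formulas correctly, and verifying that the ``merge'' of two $\dbnfofo{\Sigma}$ and $\dbnfofo{\Sigma'}$ under conjunction really collapses to $\dbnfofo{\Sigma}$ or $\bot$—this relies essentially on the fact that $\dbnfofo{\Sigma}$ gives a \emph{complete} description of the realised types (both a lower bound via the $\exists$-conjuncts and an upper bound via the $\forall$-conjunct), so two inconsistent complete descriptions have no common model. Effectivity is then automatic, since every step above is a finite, syntactically computable rewriting over the finite set $\wp(A)$ of types.
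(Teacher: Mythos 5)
Your overall strategy---a syntactic quantifier-elimination argument by structural induction---is sound and genuinely different from the paper's route (the paper derives Fact~\ref{fact:ofonormalform} from the semantic machinery it sets up for $\ofoe$: a finite-index equivalence relation on models, characteristic formulas for its classes, and an Ehrenfeucht-Fra\"iss\'e game, with effectiveness obtained by guess-and-check using decidability of satisfiability). However, as stated your induction does not go through: strengthening the hypothesis to formulas with \emph{at most one} free variable is not enough, because a structural induction must pass through \emph{all} subformulas of $\phi$, and the matrix of an inner quantifier can have several free variables. For instance, in the sentence $\forall x.\exists y.(a(x) \lor b(y))$ the subformula $a(x)\lor b(y)$ has two free variables, so your inductive hypothesis never applies to it and you cannot reach the stage where you eliminate $\exists y$. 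The repair is standard but must be made explicit: prove, for formulas with free variables among $x_1,\dots,x_n$, equivalence to a disjunction of formulas $\bigwedge_{i}\tau_{S_i}(x_i)\land\dbnfofo{\Sigma}$ (with each $S_i\in\Sigma$ for satisfiability), and then show that quantifying out $x_n$ preserves this shape. Your treatment of $\exists$ and of the merge of two $\dbnfofo{\Sigma}$'s under conjunction carries over verbatim to this setting, but the $\forall$ case needs one extra observation you only gesture at: $\forall$ does not distribute over $\lor$, so you must first group disjuncts by their (mutually exclusive) sentence parts $\dbnfofo{\Sigma}$ and pull the quantifier inside each group, where $\forall x.\bigvee_{S\in G}\tau_S(x)$ relative to $\dbnfofo{\Sigma}$ reduces to $\top$ or $\bot$ according to whether $\Sigma\subseteq G$.

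With that repair your argument is correct and arguably more self-contained than the paper's: it yields an explicit rewriting procedure and does not invoke the decidability of $\ofo$-satisfiability, whereas the paper's method has the advantage of producing the quantifier-rank-indexed equivalence classes that it reuses for the $\ofoe$ and $\ofoei$ normal forms. Note also that your syntactic route relies essentially on the absence of equality (the type of a free variable is the only way it can interact with quantified variables), which is precisely why the paper switches to the game-theoretic method for $\ofoe$.
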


This observation is easy to prove using Ehrenfeucht-Fra\"iss\'e games -- proof
sketches can be found in~\cite[Lemma 16.23]{ALG02} 
and~\cite[Proposition 4.14]{Venema2014} --, and the decidability of the satisfiability problem for $\ofo$ (Fact \ref{f:decido}). 
We omit a full proof because it is very similar to the following more complex 
cases.


\subsection{Normal form for $\ofoe$}

When considering a normal form for $\ofoe$, the fact that we can `count types'
using equality yields a more involved basic form.

\begin{definition}
We say that a formula $\phi \in \ofoe(A)$ is in \emph{basic form} if
$\phi = \bigvee \dbnfofoe{\vlist{T}}{\Pi}$ where each disjunct is of the
form
\begin{equation*}
\dbnfofoe{\vlist{T}}{\Pi} = 
\exists \vlist{x}.\big(\arediff{\vlist{x}} \land \bigwedge_i \tau_{T_i}(x_i) 
  \land \forall z.(\arediff{\vlist{x},z} 
  \to \bigvee_{S\in \Pi} \tau_S(z))\big)
\end{equation*}
such that $\vlist{T} \in \wp(A)^k$ for some $k$ and $\Pi \subseteq \vlist{T}$.
\end{definition}

We prove that every sentence of monadic first-order logic with equality is
equivalent to a formula in basic form. 
Although this result seems to be folklore, we provide a detailed proof because 
some of its ingredients will be used later, when we give a normal form for 
$\ofoei$. 
We start by defining the following relation between monadic models.

\begin{definition}
For every $k \in \bbN$ we define the relation $\sim^=_k$ on the class $\umods$
of monadic models by putting
\begin{eqnarray*}
\osmodel \sim^=_k \osmodel' \Longleftrightarrow 
  \forall S\subseteq A \ \big(
    |S|_\osmodel = |S|_{\osmodel'} < k \text{ or } 
    |S|_\osmodel,|S|_{\osmodel'} \geq k 
    \big),
\end{eqnarray*}
where $\osmodel$ and $\osmodel'$ are arbitrary monadic models. 
\end{definition}

Intuitively, two models are related by $\sim^=_k$ when their type information 
coincides `modulo~$k$'.
Later on we prove that this is the same as saying that they cannot be 
distinguished by a sentence of $\ofoe$ with quantifier rank at most $k$. 
As a special case, observe that any two monadic models are related by 
$\sim^{=}_{0}$.

For the moment, we record the following properties of these relations.

\begin{proposition}\label{props:eqrelofoe} The following hold:
\begin{enumerate}
\itemsep 0 pt
\item\label{props:eqrelofoe:i} 
The relation $\sim^=_k$ is an equivalence relation of finite index.
\item\label{props:eqrelofoe:ii} 
Every $E \in \umods/{\sim^=_k}$ is characterised by a sentence $\phi^=_E \in 
\ofoe(A)$ with $\qr(\phi^=_E) = k$.
\end{enumerate}
\end{proposition}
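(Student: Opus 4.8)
The plan is to dispatch the two parts separately, noting at the outset that the $\sim^=_k$-class of a model $\osmodel$ is entirely captured by its \emph{truncated count profile}, the map $g_\osmodel \colon \wp(A) \to \{0,1,\dots,k-1,\infty\}$ defined by $g_\osmodel(S) \isdef |S|_\osmodel$ when $|S|_\osmodel < k$ and $g_\osmodel(S) \isdef \infty$ otherwise. Directly from the definition of $\sim^=_k$ one has $\osmodel \sim^=_k \osmodel'$ iff $g_\osmodel = g_{\osmodel'}$, so part~(\ref{props:eqrelofoe:i}) reduces to the observation that equality of profiles is an equivalence relation. I would nonetheless spell out reflexivity, symmetry and transitivity by a short case split on whether each $|S|$ lies below the threshold $k$; the only slightly delicate point is transitivity, where one exploits that $|S|_{\osmodel'} < k$ and $|S|_{\osmodel'} \ge k$ cannot hold simultaneously. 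Finiteness of the index is then immediate, since there are at most $(k+1)^{2^{|A|}}$ possible profiles and hence at most that many classes.

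For part~(\ref{props:eqrelofoe:ii}), I would fix a representative $\osmodel_0 \in E$ and build a sentence that pins down, type by type, exactly the information recorded by $g_{\osmodel_0}$. The basic building block is the $\ofoe$-sentence
\[
\lambda^n_S \isdef \exists x_1 \cdots x_n.\,\big(\arediff{x_1 \cdots x_n} \land \bigwedge_{1 \le i \le n} \tau_S(x_i)\big),
\]
asserting that at least $n$ distinct elements realise the type $S$ and having quantifier rank $n$. For each $S \subseteq A$ I then set $\psi_S \isdef \lambda^{|S|_{\osmodel_0}}_S \land \lnot \lambda^{|S|_{\osmodel_0}+1}_S$ (``exactly $|S|_{\osmodel_0}$ realisations'') when $|S|_{\osmodel_0} < k$, and $\psi_S \isdef \lambda^k_S$ (``at least $k$ realisations'') when $|S|_{\osmodel_0} \ge k$, rewriting everything into negation normal form. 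Taking $\phi^=_E$ to be the conjunction of all the $\psi_S$, a straightforward unwinding of the counting semantics shows that an arbitrary model $\osmodel'$ satisfies $\phi^=_E$ exactly when $g_{\osmodel'} = g_{\osmodel_0}$, i.e.\ exactly when $\osmodel' \in E$; the empty model is covered without change, since there every $\lambda^n_S$ with $n \ge 1$ fails.

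The remaining issue --- and essentially the only one demanding attention --- is to secure quantifier rank \emph{exactly} $k$ rather than merely at most $k$. Each ``exactly $n$'' conjunct has rank $n+1 \le k$ and each ``at least $k$'' conjunct has rank $k$, so $\qr(\phi^=_E) \le k$; but if $g_{\osmodel_0}$ avoids the values $k-1$ and $\infty$ altogether (for instance on $\emodel$ with $k \ge 2$) the maximum could drop below $k$. I would therefore conjoin to $\phi^=_E$ the tautology $\forall x_1 \cdots \forall x_k.\,\top$, which has quantifier rank $k$, is valid in every model (including $\emodel$), and hence pins the rank at exactly $k$ without altering the class defined. This padding, together with the transitivity case-split in part~(\ref{props:eqrelofoe:i}), is where the minor bookkeeping lies; the rest is routine.
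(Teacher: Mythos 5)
Your proof is correct and follows essentially the same route as the paper: a type-by-type counting specification of the $\sim^=_k$-class, where your ``exactly $n$'' conjunct $\lambda^n_S \land \lnot\lambda^{n+1}_S$ is just an equivalent reformulation (of the same quantifier rank) of the paper's embedded universal clause $\forall z.\,\arediff{x_1,\dots,x_{n},z} \to \lnot\tau_{S}(z)$. Your explicit treatment of part~(1) (which the paper omits) and the padding conjunct $\forall x_1 \cdots \forall x_k.\top$ to secure rank \emph{exactly} $k$ are harmless and in fact slightly more careful than the paper's bare claim that $\qr(\phi^=_E)=k$.
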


\begin{proof}
We only prove the second statement,
and first we consider the case where $k=0$.
The equivalence relation $\sim^{=}_{0}$ has the class $\umods$ of all monadic 
models as its unique equivalence class, so here we may define $\phi^{=}_{\umods}
\isdef \top$.

From now on we assume that $k>0$.
Let $E \in \umods/{\sim^=_k}$ and let $\osmodel \in E$ be a representative. 
Call $S_1,\dots,S_n \subseteq A$ to the types such that $|S_i|_\osmodel = n_i< k$
and $S'_1,\dots,S'_m \subseteq A$ to those satisfying $|S'_i|_\osmodel \geq k$. 
Note that the union of all the $S_i$ and $S'_i$ yields all the possible 
$A$-types, and that if a type $S_{j}$ is not realised at all, we take $n_j = 0$. 
Now define
\begin{align*}
\phi^=_E \quad \isdef \quad  
   & \bigwedge_{i\leq n} \Big(\exists x_1,\dots,x_{n_i}.
   \arediff{x_1,\dots,x_{n_i}} \ \land \ \bigwedge_{j\leq n_i} \tau_{S_i}(x_j)
\\ & \qquad\qquad\qquad \land 
   \forall z. \arediff{x_1,\dots,x_{n_i},z} \to \lnot\tau_{S_i}(z)\Big)\ 
\\ & \land \bigwedge_{i\leq m} \big(\exists x_1,\dots,x_k.\arediff{x_1,\dots,x_k} \land
    \bigwedge_{j\leq k} \tau_{S'_i}(x_j) \big),
\end{align*}
where we understand that any conjunct of the form $\exists x_1,
\dots,x_{l}.\psi$ with $l = 0$ is simply omitted (or, to the same effect, 
defined as $\top$).
It is easy to see that $\qr(\phi^=_E) = k$ and that $\osmodel' \models 
\phi^=_E$ iff $\osmodel' \in E$. 
Intuitively, $\phi^=_E$ gives a specification of $E$ ``type by type''; 
in particular observe that $\phi^=_\emodel \equiv \forall x. \bot$.
\end{proof}

Next we recall a (standard) notion of Ehrenfeucht-Fra\"{\i}ss\'e game for 
$\ofoe$ which will be used to establish the connection between ${\sim^=_k}$ and 
$\equiv_k^{\ofoe}$.

\begin{definition}
Let $\osmodel_0 = (D_0,V_0)$ and $\osmodel_1 = (D_1,V_1)$ be monadic models. 
We define the game $\efgame^=_k(\osmodel_0,\osmodel_1)$ between \abelard and 
\eloise.
If $\osmodel_i$ is one of the models we use $\osmodel_{-i}$ to denote the other
model. 
A position in this game is a pair of sequences $\vlist{s_0} \in D_0^n$ and 
$\vlist{s_1} \in D_1^n$ with $n \leq k$. 
The game consists of $k$ rounds where in round $n+1$ the following steps are 
made:
\begin{enumerate}[1.]
\itemsep 0 pt \parsep 0 pt
\item \abelard chooses an element $d_i$ in one of the $\osmodel_i$;
\item \eloise responds with an element $d_{-i}$ in the model $\osmodel_{-i}$.
\end{enumerate}
In this way, the sequences $\vlist{s_i} \in D_i^n$ of elements chosen up to 
round $n$ are extended to ${\vlist{s_i}' \isdef  \vlist{s_i}\cdot d_i}$. 
Player \eloise survives the round iff she does not get stuck and the function
$f_{n+1}: \vlist{s_0}' \mapsto \vlist{s_1}'$ is a partial isomorphism of monadic 
models. 
Finally, player \eloise wins the match iff she survives all $k$ rounds.

Given $n\leq k$ and $\vlist{s_i} \in D_i^n$ such that $f_n:\vlist{s_0}\mapsto
\vlist{s_1}$ is a partial isomorphism, we write $\efgame_{k}^=(\osmodel_0,
\osmodel_1)@(\vlist{s_0},\vlist{s_1})$ to denote the (initialised) game where 
$n$ moves have been played and $k-n$ moves are left to be played.
\end{definition}

\begin{proposition}\label{prop:connofoe}
The following are equivalent:
\begin{enumerate}
\itemsep 0 pt \parsep 0 pt
\item\label{prop:connofoe:i} $\osmodel_0 \equiv_k^{\ofoe} \osmodel_1$,
\item\label{prop:connofoe:ii} $\osmodel_0 \sim_k^= \osmodel_1$,
\item\label{prop:connofoe:iii} \eloise has a winning strategy in 
   $\efgame_k^=(\osmodel_0,\osmodel_1)$.
	\end{enumerate}
\end{proposition}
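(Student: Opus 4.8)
The plan is to establish the cycle of implications $(\ref{prop:connofoe:i}) \Rightarrow (\ref{prop:connofoe:ii}) \Rightarrow (\ref{prop:connofoe:iii}) \Rightarrow (\ref{prop:connofoe:i})$, following the standard Ehrenfeucht--Fra\"iss\'e template but keeping careful track of the type counts that drive $\sim^=_k$.

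For $(\ref{prop:connofoe:i}) \Rightarrow (\ref{prop:connofoe:ii})$ I would invoke the characterising sentences already in hand. Let $E \in \umods/{\sim^=_k}$ be the class of $\osmodel_0$, and let $\phi^=_E$ be the sentence provided by Proposition~\ref{props:eqrelofoe}(\ref{props:eqrelofoe:ii}), so that $\qr(\phi^=_E) = k$ and $\osmodel' \models \phi^=_E$ iff $\osmodel' \in E$. Since $\osmodel_0 \models \phi^=_E$ and this sentence has quantifier rank $k$, assumption $(\ref{prop:connofoe:i})$ forces $\osmodel_1 \models \phi^=_E$ as well, whence $\osmodel_1 \in E$ and therefore $\osmodel_0 \sim^=_k \osmodel_1$.

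The core of the argument is $(\ref{prop:connofoe:ii}) \Rightarrow (\ref{prop:connofoe:iii})$, where I would give \eloise an explicit strategy governed by an invariant on the residual type counts. At a position after $n$ rounds, write $U_i \subseteq D_i$ for the set of (distinct) elements selected in $\osmodel_i$, let $f_n \colon U_0 \to U_1$ be the induced correspondence, and for each type $S \subseteq A$ let $r_i(S)$ count the elements of type $S$ in $D_i \setminus U_i$. \eloise maintains that $f_n$ is a partial isomorphism and that, for every $S$, either $r_0(S) = r_1(S) < k-n$ or $r_0(S), r_1(S) \geq k-n$. The base case $n = 0$ is exactly the hypothesis $\osmodel_0 \sim^=_k \osmodel_1$. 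For the inductive step, when \abelard plays an element of type $S$ in one model: if it lies in the already-selected set, \eloise answers with its $f_n$-partner; otherwise she must answer with an unused element of the same type $S$ in the other model, which the invariant guarantees to exist, and a short case analysis on whether $r_i(S)$ sits below or above the threshold $k-n$ shows that the two residual counts remain matched modulo $k-(n+1)$. Since matched elements always share their type and the answer is chosen fresh, $f_{n+1}$ stays a partial isomorphism, so \eloise survives all $k$ rounds.

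Finally, for $(\ref{prop:connofoe:iii}) \Rightarrow (\ref{prop:connofoe:i})$ I would run the usual induction: I would prove, for every $m \leq k$, that whenever \eloise has a winning strategy in the game with $m$ rounds left at a position $(\vlist{s_0}, \vlist{s_1})$, the two models agree on all $\ofoe$-formulas of quantifier rank at most $m$ whose free variables are interpreted through $f$. The atomic case is precisely the partial-isomorphism condition, the Boolean cases are immediate, and each quantifier strips one round off the game (the $\forall$ case being dual to the $\exists$ case). Instantiated at the empty position with $m = k$, this yields $(\ref{prop:connofoe:i})$; throughout I would treat the empty-domain model as a degenerate but legitimate position. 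I expect the main obstacle to be the threshold bookkeeping in $(\ref{prop:connofoe:ii}) \Rightarrow (\ref{prop:connofoe:iii})$: verifying that decrementing a residual count preserves the ``$<k-n$ versus $\geq k-n$'' dichotomy as $n$ increases is exactly where the modular counting does its work, and it is easy to be off by one.
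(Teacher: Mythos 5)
Your proposal is correct and follows essentially the same route as the paper: (1)$\Rightarrow$(2) via the characterising sentences $\phi^=_E$ of Proposition~\ref{props:eqrelofoe}, a round-by-round survival argument for (2)$\Rightarrow$(3), and the standard loaded induction on formulas for (3)$\Rightarrow$(1). The only cosmetic difference is that you maintain an explicit invariant on residual type counts against the threshold $k-n$, whereas the paper argues each round directly from the global counts $|S|_{\osmodel_i}$ together with the bound $n<k$ on the number of elements already played; the two amount to the same bookkeeping.
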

\begin{proof}
Step~(\ref{prop:connofoe:i}) to~(\ref{prop:connofoe:ii}) is direct by
Proposition~\ref{props:eqrelofoe}. 
For~(\ref{prop:connofoe:ii}) to~(\ref{prop:connofoe:iii}) we give a winning 
strategy for \eloise in $\efgame_k^=(\osmodel_0,\osmodel_1)$ by showing the 
following claim.
\begin{claimfirst}
Let $\osmodel_0 \sim_k^= \osmodel_1$ and $\vlist{s_i} \in D_i^n$ be such that
$n<k$ and $f_n:\vlist{s_0}\mapsto\vlist{s_1}$ is a partial isomorphism; then 
\eloise can survive one more round in 
$\efgame_{k}^=(\osmodel_0,\osmodel_1)@(\vlist{s_0},\vlist{s_1})$.
\end{claimfirst}

\begin{pfclaim}
Let \abelard pick $d_i\in D_i$ such that the type of $d_i$ is $T \subseteq A$. 
If $d_i$ had already been played then \eloise picks the same element as before
and $f_{n+1} = f_n$.
If $d_i$ is new and $|T|_{\osmodel_i} \geq k$ then, as at most $n<k$ elements 
have been played, there is always some new $d_{-i} \in D_{-i}$ that \eloise can 
choose to match $d_i$. 
If $|T|_{\osmodel_i} = m < k$ then we know that $|T|_{\osmodel_{-i}} = m$. 
Therefore, as $d_i$ is new and $f_n$ is injective, there must be a $d_{-i} \in 
D_{-i}$ that \eloise can choose. 
\end{pfclaim}
	
Step~(\ref{prop:connofoe:iii}) to~(\ref{prop:connofoe:i}) is a standard
result~\cite[Corollary 2.2.9]{fmt} which we prove anyway because we will need 
to extend it later. 
We prove the following loaded statement.
\begin{claim}
Let $\vlist{s_i} \in D_i^n$ and $\phi(z_1,\dots,z_n) \in \ofoe(A)$ be such 
that $\qr(\phi) \leq k-n$. 
If \eloise has a winning strategy in the game 
$\efgame_k^=(\osmodel_0,\osmodel_1)@(\vlist{s_0},\vlist{s_1})$ then 
$\osmodel_0 \models \phi(\vlist{s_0})$ iff 
$\osmodel_1 \models \phi(\vlist{s_1})$.
\end{claim}
	
\begin{pfclaim}
If $\phi$ is atomic the claim holds because of $f_n:\vlist{s_0}\mapsto 
\vlist{s_1}$ being a partial isomorphism. 
The Boolean cases are straightforward.
Let $\phi(z_1,\dots,z_n) = \exists x. \psi(z_1,\dots,z_n,x)$ and suppose
$\osmodel_0 \models \phi(\vlist{s_0})$. 
Hence, there exists $d_0 \in D_0$ such that $\osmodel_0 \models 
\psi(\vlist{s_0},d_0)$.
By hypothesis we know that \eloise has a winning strategy for 
$\efgame_k^=(\osmodel_0,\osmodel_1)@(\vlist{s_0},\vlist{s_1})$. 
Therefore, if \abelard picks $d_0\in D_0$ she can respond with some $d_1\in D_1$ 
and have a winning strategy for 
$\efgame_{k}^=(\osmodel_0,\osmodel_1)%
@(\vlist{s_0}{\cdot}d_0,\vlist{s_1}{\cdot}d_1)$.
By induction hypothesis, because $\qr(\psi) \leq k- (n+1)$, we have that 
$\osmodel_0 \models \psi(\vlist{s_0},d_0)$ iff $\osmodel_1 \models 
\psi(\vlist{s_1},d_1)$ and hence 
$\osmodel_1 \models \exists x.\psi(\vlist{s_1},x)$. 
The opposite direction is proved by a symmetric argument. 
\end{pfclaim}

\noindent
We finish the proof of the proposition by combining these two claims.
\end{proof}

\begin{theorem}
\label{thm:bnfofoe}
There is an effective procedure that transforms an arbitrary $\ofoe$-sentence 
$\phi$ into an equivalent formula $\tbas{\phi}$ in basic form.
\end{theorem}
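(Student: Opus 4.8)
The plan is to exploit Proposition~\ref{prop:connofoe}. Set $k := \max(\qr(\phi),1)$, so that $\qr(\phi)\le k$; then whenever $\osmodel\sim^=_k\osmodel'$ the two models agree on all $\ofoe$-sentences of quantifier rank at most $k$, in particular on $\phi$. Hence the class of models of $\phi$ is a union of $\sim^=_k$-classes. Taking $k\ge 1$ (rather than $k=\qr(\phi)$, which could be $0$) costs nothing and avoids a degenerate disjunct below. Since $\sim^=_k$ has finite index by Proposition~\ref{props:eqrelofoe}(\ref{props:eqrelofoe:i}), it suffices to attach to each class $E\in\umods/{\sim^=_k}$ a single basic-form disjunct characterising exactly $E$, and then to disjoin those disjuncts over the classes $E$ all of whose members satisfy $\phi$.

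For the disjunct, fix a representative $\osmodel\in E$ and set $c_S := \min(\sz{S}_\osmodel, k)$ for each type $S\subseteq A$ (this depends only on $E$). Let $\vlist{T}_E$ be the list enumerating each type $S$ with multiplicity $c_S$, and put $\Pi_E := \{\, S : \sz{S}_\osmodel \ge k \,\}$. As $k\ge 1$, every $S\in\Pi_E$ occurs in $\vlist{T}_E$, so $\Pi_E\subseteq\vlist{T}_E$ and $\dbnfofoe{\vlist{T}_E}{\Pi_E}$ is a legitimate disjunct. The central lemma is that $\osmodel'\models\dbnfofoe{\vlist{T}_E}{\Pi_E}$ iff $\osmodel'\sim^=_k\osmodel$. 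For the forward direction, the existential witnesses force $\sz{S}_{\osmodel'}\ge c_S$ for every $S$, while the universal conjunct $\forall z.(\arediff{\vlist{x},z}\to\bigvee_{S\in\Pi_E}\tau_S(z))$ forces every element outside the witness set to have a type in $\Pi_E$; hence for $S\notin\Pi_E$ every element of type $S$ is witnessed, giving $\sz{S}_{\osmodel'}=c_S=\sz{S}_\osmodel<k$, whereas for $S\in\Pi_E$ we get $\sz{S}_{\osmodel'}\ge k$ and $\sz{S}_\osmodel\ge k$ — exactly the clauses defining $\sim^=_k$. For the converse, in $\osmodel'$ one picks $c_S$ distinct elements of each type $S$ as the witnesses $\vlist{x}$ and checks that any remaining element has a type realised at least $k$ times, i.e.\ lying in $\Pi_E$.

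With the lemma in hand I would set $\tbas{\phi} := \bigvee\{\, \dbnfofoe{\vlist{T}_E}{\Pi_E} : E\in\umods/{\sim^=_k},\ E\models\phi \,\}$. Correctness is then immediate: every model lies in a unique $\sim^=_k$-class, satisfies exactly the corresponding disjunct, and satisfies $\phi$ precisely when that class does; the empty disjunction (the case where no class satisfies $\phi$) is $\bot$, as desired. Effectiveness is where decidability enters: the $\sim^=_k$-classes correspond bijectively to the finitely many maps $S\mapsto c_S\in\{0,\dots,k\}$ and can be enumerated, and for each class the finite model realising each type $S$ exactly $c_S$ times is a representative, so ``$E\models\phi$'' reduces to model-checking $\phi$ on a finite model (alternatively, to the unsatisfiability of $\dbnfofoe{\vlist{T}_E}{\Pi_E}\wedge\neg\phi$, which is decidable by Fact~\ref{f:decido}).

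The main obstacle is the exact-characterisation lemma, and within it the bookkeeping that makes the universal conjunct pin down the counts of the ``$<k$'' types precisely rather than merely bounding them. Care is also needed for the degenerate class realising no type at all, which is captured by the empty list $\vlist{T}_E$ together with $\Pi_E=\nada$, yielding the disjunct $\forall z.\bot$ defining the empty model; this is exactly the reason for taking $k\ge 1$, and it dovetails with the observation $\phi^=_\emodel\equiv\forall x.\bot$ recorded after Proposition~\ref{props:eqrelofoe}.
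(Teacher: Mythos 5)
Your proof is correct and follows essentially the same route as the paper: both decompose $\ext{\phi}$ into $\sim^=_k$-classes via Proposition~\ref{prop:connofoe} and attach to each class a disjunct $\dbnfofoe{\vlist{T}}{\Pi}$ built from the same data (each type listed with multiplicity $\min(\sz{S}_\osmodel,k)$, with $\Pi$ the set of types realised at least $k$ times). The only cosmetic differences are that you prove the class-characterisation lemma for $\dbnfofoe{\vlist{T}_E}{\Pi_E}$ directly instead of passing through the intermediate formula $\phi^=_E$, and you obtain effectiveness by enumerating the classes and model-checking finite representatives rather than by the paper's guess-and-check appeal to Fact~\ref{f:decido}; both are sound.
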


\begin{proof}
Let $\qr(\psi) = k$ and let $\ext{\psi}$ be the class of models satisfying 
$\psi$.
As $\umods/{\equiv_k^{\ofoe}}$ is the same as $\umods/{\sim_k^=}$ by
Proposition~\ref{prop:connofoe}, it is easy to see that $\psi$ is equivalent to 
$\bigvee \{ \phi^=_E \mid E \in \ext{\psi}/{\sim_k^=} \}$.
Now it only remains to see that each $\phi^=_E$ is equivalent to the sentence
$\dbnfofoe{\vlist{T}}{\Pi}$ for some $\vlist{T},\Pi \subseteq \wp(A)$ with
$\Pi \subseteq \vlist{T}$.

The crucial observation is that we will use $\vlist{T}$ and $\Pi$ to give a 
specification of the types ``element by element''. 
Let $\osmodel$ be a representative of the equivalence class $E$. 
Call $S_1,\dots,S_n \subseteq A$ to the types such that $|S_i|_\osmodel = n_i 
< k$ and $S'_1,\dots,S'_m \subseteq A$ to those satisfying $|S'_j|_\osmodel 
\geq k$. 
The size of the sequence $\vlist{T}$ is defined to be $(\sum_{i=1}^n n_i) + 
k\times m$ where $\vlist{T}$ contains exactly $n_i$ occurrences of type $S_i$ 
and at least $k$ occurrences of each $S'_j$.
On the other hand we set $\Pi \isdef \{S'_1,\dots,S'_m\}$. 
It is straightforward to check that $\Pi \subseteq \vlist{T}$ and $\phi^=_E$ is
equivalent to $\dbnfofoe{\vlist{T}}{\Pi}$. (Observe however, that the quantifier rank of the latter is only bounded by
$k\times 2^{|A|} + 1$.) 
In particular $\phi^=_\emodel \equiv \dbnfofoe{\nada}{\nada} = \forall x. \bot$.

The effectiveness of the procedure hence follows from the fact that, given the previous bound on the size of a normal form, it is possible to
non-deterministically guess the number of disjuncts, types and associated parameters for each conjunct and repeatedly check whether the formulas $\phi$
and    $\bigvee \dbnfofoe{\vlist{T}}{\Pi}$ are equivalent,  this latter problem being decidable by Fact \ref{f:decido}.
\end{proof}


\subsection{Normal form for $\ofoei$}

The logic $\ofoei$ extends $\ofoe$ with the capacity to tear apart finite and 
infinite sets of elements. 
This is reflected in the normal form for $\ofoei$ by adding extra information
to the normal form of $\ofoe$.

\begin{definition}\label{def:basicform_fofoei}
We say that a formula $\phi \in \ofoei(A)$ is in \emph{basic form} if 
$\phi = \bigvee \dbnfofoei{\vlist{T}}{\Pi}{\Sigma}$ where each disjunct is 
of the form
\[
\dbnfofoei{\vlist{T}}{\Pi}{\Sigma} \isdef \dbnfofoe{\vlist{T}}{\Pi \cup \Sigma} 
  \land \dbnfinf{\Sigma}
\]
where
\[
\dbnfinf{\Sigma} \isdef  
\bigwedge_{S\in\Sigma} \qu y.\tau_S(y) \land 
  \dqu y.\bigvee_{S\in\Sigma} \tau_S(y).
\]
Here $\vlist{T} \in \wp(A)^{k}$ for some $k$, and $\Pi,\Sigma \subseteq \wp(A)$ 
are such that $\Sigma \cup \Pi \subseteq \vlist{T}$.
\end{definition}

Intuitively, the formula $\dbnfinf{\Sigma}$ says that (1) for every type $S\in
\Sigma$, there are infinitely many elements satisfying $S$ and (2) only finitely
many elements do not satisfy any type in $\Sigma$.
As a special case, the formula $\dbnfinf{\nada}$ expresses that the model is 
finite.
A short argument reveals that, intuitively, every disjunct of the form
$\dbnfofoei{\vlist{T}}{\Pi}{\Sigma}$ expresses that any monadic model satisfying
it admits a partition of its domain in three parts:
\begin{enumerate}[(i)]
\itemsep 0 pt
\item distinct elements $t_1,\dots,t_n$ with respective types $T_1,\dots,T_n$,
\item finitely many elements whose types belong to $\Pi$, and
\item for each $S\in \Sigma$, infinitely many elements with type $S$.
\end{enumerate}
Observe that basic formulas of $\ofoe$ are \emph{not} basic formulas of 
$\ofoei$.

In the same way as before, we define an equivalence relation $\sim^\infty_k$ 
on monadic models which refines $\sim^=_{k}$ by adding information about the 
(in-)finiteness of the types.

\begin{definition}
For every $k \in \bbN$ we define the relation $\sim^{\infty}_k$ on the class 
$\umods$ of monadic models by putting
\begin{eqnarray*}
  \osmodel \sim^\infty_0 \osmodel' 
  & \Longleftrightarrow 
  & \text{always}
\\\osmodel \sim^\infty_{k+1} \osmodel' 
  & \Longleftrightarrow 
  & \forall S\subseteq A \ \big(
    |S|_\osmodel = |S|_{\osmodel'} < k 
    \text{ or }     k \leq |S|_\osmodel,|S|_{\osmodel'} < \omega
     \text{ or }    |S|_\osmodel,|S|_{\osmodel'} \geq \omega 
    \big),
\end{eqnarray*}
where $\osmodel$ and $\osmodel'$ are arbitrary monadic models. 
\end{definition}

\begin{proposition}\label{props:eqrelolque} The following hold:
\begin{enumerate}
\itemsep 0 pt
\item 
The relation $\sim^\infty_k$ is an equivalence relation of finite index.
\item 
The relation $\sim^\infty_k$ is a refinement of $\sim^=_k$.
\item 
Every $E \in \umods/{\sim^\infty_k}$ is characterised by a sentence
   $\phi^\infty_E \in \ofoei(A)$ with $\qr(\phi) = k$.
	\end{enumerate}
\end{proposition}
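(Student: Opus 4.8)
The plan is to mirror the proof of Proposition~\ref{props:eqrelofoe}, enriching the counting argument so that it also records, for each type, whether that type is realised infinitely often. For the first item, reflexivity, symmetry and transitivity are read off directly from the defining clause: for a fixed $S$ the three alternatives ``$\sz{S}_{\osmodel}=\sz{S}_{\osmodel'}<k$'', ``$k\leq\sz{S}_{\osmodel},\sz{S}_{\osmodel'}<\omega$'' and ``$\sz{S}_{\osmodel},\sz{S}_{\osmodel'}\geq\omega$'' each assert that $\osmodel$ and $\osmodel'$ land in the same cell of a fixed partition of the possible cardinalities, and a conjunction over all $S$ of such ``same-cell'' conditions is patently an equivalence relation. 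Finite index then follows because a $\sim^{\infty}_k$-block is completely determined by the map sending each of the $2^{\sz{A}}$ types $S$ to one of the finitely many values $0,1,\dots,k-1$, ``finite but $\geq k$'', or ``infinite''; there are at most $(k+2)^{2^{\sz{A}}}$ such maps.

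For the second item I would compare the two definitions clause by clause. For each type $S$ the trichotomy allowed by $\sim^{\infty}_k$ refines the dichotomy allowed by $\sim^{=}_k$: collapsing the ``finite but large'' and the ``infinite'' alternatives into a single ``cardinality above the threshold'' alternative recovers exactly the clause defining $\sim^{=}_k$ (matching the thresholds used in the two definitions). Hence related models agree on the coarser information as well, so every $\sim^{\infty}_k$-block is contained in a single $\sim^{=}_k$-block, which is the asserted refinement.

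The third item carries the real work, and I would proceed as in Proposition~\ref{props:eqrelofoe}. Fix a class $E$, pick a representative $\osmodel\in E$, and split the types into three groups: those realised fewer than $k$ times, with exact counts $n_S$; those realised at least $k$ times but only finitely often; and those realised infinitely often, collecting the last group into a set $\Sigma$. I would then take $\phi^{\infty}_E$ to be the conjunction of (a) an $\ofoe$-style specification à la $\phi^{=}_E$, pinning down the precise number of witnesses for the small types and demanding at least $k$ witnesses for the finite-but-large ones, and (b) the infinity block $\dbnfinf{\Sigma}=\bigwedge_{S\in\Sigma}\qu y.\tau_S(y)\wedge\dqu y.\bigvee_{S\in\Sigma}\tau_S(y)$, which at once forces each type in $\Sigma$ to be infinite and caps the elements whose type lies outside $\Sigma$ to finitely many (so that the finite-but-large types stay finite). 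The two implications are then checked as before: $\osmodel'\in E$ gives $\osmodel'\models\phi^{\infty}_E$ directly from the construction, while conversely any model of $\phi^{\infty}_E$ realises each type with a cardinality in the same cell as $\osmodel$, hence is $\sim^{\infty}_k$-equivalent to it. Finally I would compute the quantifier rank, observing that the existential blocks contribute at most $k$ and the $\qu$- and $\dqu$-conjuncts only $1$, so that $\qr(\phi^{\infty}_E)=k$ after trivial padding, with the empty model treated as the degenerate case $\phi^{\infty}_{\emodel}$ (equivalently $k=0$).

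The step I expect to be the main obstacle is precisely this third item, and within it the genuinely new ingredient relative to Proposition~\ref{props:eqrelofoe}: the $\dqu$-``cap'' conjunct. One must verify that $\dqu y.\bigvee_{S\in\Sigma}\tau_S(y)$ together with the existential demands really does characterise ``finitely many elements lie outside $\Sigma$, each of a prescribed finite-but-large type'' without accidentally over- or under-constraining any cardinality, and that the whole bookkeeping of the three cardinality tiers against the target quantifier rank, including the empty-domain convention, goes through cleanly.
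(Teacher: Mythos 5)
Your proposal is correct and follows essentially the same route as the paper: the paper (which only writes out item 3 explicitly, treating items 1 and 2 as routine) defines $\phi^\infty_E \isdef \phi^=_{E'} \land \dbnfinf{\{S_1,\dots,S_n\}}$, where $E'$ is the $\sim^=_k$-class of a representative and $S_1,\dots,S_n$ are the types realised infinitely often --- exactly your conjunction of the $\ofoe$-style counting specification with the infinity block, with the same quantifier-rank bookkeeping and the same degenerate treatment of $\emodel$.
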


\begin{proof}
We only prove the last point, for $k>0$. 
Let $E \in \umods/{\sim^\infty_k}$ and let $\osmodel \in E$ be a representative 
of the class. 
Let $E' \in \umods/{\sim^=_k}$ be the equivalence class of $\osmodel$ with 
respect to $\sim^=_k$.
Let $S_1,\dots,S_n \subseteq A$ be all the types such that $|S_i|_\osmodel \geq 
\omega$, and define
\[
\phi^\infty_E \isdef  \phi^=_{E'} \land \dbnfinf{\{S_1,\dots,S_n\}} .
\]
It is not difficult to see that $\qr(\phi^\infty_E) = k$ and that $\osmodel'
\models \phi^\infty_E$ iff $\osmodel' \in E$.  
In particular $\phi^\infty_\emodel \equiv \dbnfofoei{\nada}{\nada}{\nada} =
\forall x. \bot \land \dqu y. \bot$.
\end{proof}

Now we give a version of the Ehrenfeucht-Fra\"ss\'e game for $\ofoei$. 
This game, which extends $\efgame^=_k$ with moves for $\qu$, is the
adaptation of the Ehrenfeucht-Fra\"{\i}ss\'e game for monotone generalised 
quantifiers found in~\cite{krawczyk1976ehrenfeucht} to the case of full monadic 
first-order logic. 

\begin{definition}
Let $\osmodel_0 = (D_0,V_0)$ and $\osmodel_1 = (D_1,V_1)$ be monadic models. 
We define the game $\efgame^\infty_k(\osmodel_0,\osmodel_1)$ between \abelard
and \eloise. 
A position in this game is a pair of sequences $\vlist{s_0} \in D_0^n$ and
$\vlist{s_1} \in D_1^n$ with $n \leq k$. 
The game consists of $k$ rounds, where in round $n+1$ the following steps are 
made. First \abelard chooses to perform one of the following types of moves:
\begin{enumerate}[(a)]
\itemsep 0 pt \parsep 0 pt \topsep 0 pt \parskip 0 pt%
\item second-order move:
\begin{enumerate}[1.]
\itemsep 0 pt \parsep 0 pt \topsep 0 pt \parskip 0 pt \partopsep 0 pt%
\item \abelard chooses an infinite set $X_i \subseteq D_i$;
\item \eloise responds with an infinite set $X_{-i} \subseteq D_{-i}$;
\item \abelard chooses an element $d_{-i} \in X_{-i}$;
\item \eloise responds with an element $d_i \in X_i$.
\end{enumerate}
\item first-order move:
\begin{enumerate}[1.]
\itemsep 0 pt
\parsep 0 pt
\item \abelard chooses an element $d_i \in D_i$;
\item \eloise responds with an element $d_{-i} \in D_{-i}$.
\end{enumerate}
\end{enumerate}
The sequences $\vlist{s_i} \in D_i^n$ of elements chosen up to round $n$ are 
then extended to ${\vlist{s_i}' \isdef  \vlist{s_i}\cdot d_i}$. 
\eloise survives the round iff she does not get stuck and the function $f_{n+1}:
\vlist{s_0}' \mapsto \vlist{s_1}'$ is a partial isomorphism of monadic models.
\end{definition}

\begin{proposition}\label{prop:connolque}
The following are equivalent:
\begin{enumerate}
\itemsep 0 pt
\item\label{prop:connolque:i} 
$\osmodel_0 \equiv_k^{\ofoei} \osmodel_1$,
\item\label{prop:connolque:ii}
$\osmodel_0 \sim_k^\infty \osmodel_1$,
\item\label{prop:connolque:iii}
\eloise has a winning strategy in $\efgame_k^\infty(\osmodel_0,\osmodel_1)$.
\end{enumerate}
\end{proposition}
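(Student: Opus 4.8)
The plan is to establish the cycle of implications (\ref{prop:connolque:i})$\Rightarrow$(\ref{prop:connolque:ii})$\Rightarrow$(\ref{prop:connolque:iii})$\Rightarrow$(\ref{prop:connolque:i}), exactly as in the proof of Proposition~\ref{prop:connofoe}. The step from (\ref{prop:connolque:i}) to (\ref{prop:connolque:ii}) is immediate from Proposition~\ref{props:eqrelolque}: the $\sim^\infty_k$-class $E$ of $\osmodel_0$ is characterised by a sentence $\phi^\infty_E \in \ofoei(A)$ of quantifier rank $k$ and $\osmodel_0 \models \phi^\infty_E$, so the assumption $\osmodel_0 \equiv_k^{\ofoei} \osmodel_1$ forces $\osmodel_1 \models \phi^\infty_E$, whence $\osmodel_1 \in E$, i.e. $\osmodel_0 \sim^\infty_k \osmodel_1$. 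The remaining two implications are obtained by upgrading the two claims used for $\ofoe$, the only genuinely new ingredient being the treatment of the second-order (infinity) move.

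For (\ref{prop:connolque:ii})$\Rightarrow$(\ref{prop:connolque:iii}) I would prove the survival claim: if $\osmodel_0 \sim^\infty_k \osmodel_1$, $n<k$, and $f_n\colon \vlist{s_0}\mapsto\vlist{s_1}$ is a partial isomorphism, then \eloise can survive one more round of $\efgame_k^\infty(\osmodel_0,\osmodel_1)@(\vlist{s_0},\vlist{s_1})$; iterating this from the empty position yields a winning strategy. When \abelard makes a first-order move the argument is verbatim that of Proposition~\ref{prop:connofoe}, using that fewer than $k$ elements have been committed: for a type realised at least $k$ times (finitely or infinitely) in $\osmodel_i$, the defining clause of $\sim^\infty_k$ yields the same lower bound for $\osmodel_{-i}$ and hence a fresh matching witness, while for a type realised fewer than $k$ times the counts agree exactly, so injectivity of $f_n$ provides a match. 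The new case is the second-order move. There \abelard picks an infinite $X_i\subseteq D_i$; since $A$ is finite, some single type $S$ has infinitely many representatives in $X_i$, so $|S|_{\osmodel_i}\geq\omega$ and, by $\sim^\infty_k$, also $|S|_{\osmodel_{-i}}\geq\omega$. \eloise answers with the infinite set $X_{-i}$ of all not-yet-played elements of type $S$ in $D_{-i}$; then whatever $d_{-i}\in X_{-i}$ \abelard drills down to has type $S$ and is new, and \eloise replies with one of the infinitely many fresh type-$S$ elements of $X_i$, keeping $f_{n+1}$ a partial isomorphism.

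For (\ref{prop:connolque:iii})$\Rightarrow$(\ref{prop:connolque:i}) I would prove the loaded claim by induction on $\phi(z_1,\dots,z_n)\in\ofoei(A)$ with $\qr(\phi)\leq k-n$: if \eloise wins $\efgame_k^\infty(\osmodel_0,\osmodel_1)@(\vlist{s_0},\vlist{s_1})$ then $\osmodel_0\models\phi(\vlist{s_0})$ iff $\osmodel_1\models\phi(\vlist{s_1})$. The atomic, Boolean and $\exists/\forall$ cases are exactly as in Proposition~\ref{prop:connofoe}, the latter using first-order moves. For $\phi=\qu x.\psi$, assume $\osmodel_0\models\phi(\vlist{s_0})$ and let $X_0\subseteq D_0$ be the infinite set of witnesses of $\psi(\vlist{s_0},\cdot)$; we let \abelard play the second-order move $X_0$ and let \eloise answer, by her winning strategy, with an infinite $X_1\subseteq D_1$. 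For every $d_1\in X_1$, \abelard may drill to $d_1$ and \eloise responds with some $d_0\in X_0$ while retaining a winning strategy; the induction hypothesis applied to $\psi$ (of rank $\leq k-(n+1)$) gives $\osmodel_1\models\psi(\vlist{s_1},d_1)$ because $\osmodel_0\models\psi(\vlist{s_0},d_0)$. As $X_1$ is infinite this shows $\osmodel_1\models\qu x.\psi(\vlist{s_1},x)$, and the converse follows symmetrically. The case $\phi=\dqu x.\psi$ is handled dually: a failure of $\dqu x.\psi$ on one side supplies an infinite set of counter-witnesses for \abelard to play, and the same drill-down argument transports infinitely many counter-witnesses to the other side, contradicting validity there.

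The main obstacle is the correct bookkeeping around the second-order move. In (\ref{prop:connolque:ii})$\Rightarrow$(\ref{prop:connolque:iii}) the subtlety is that \eloise must answer an \emph{arbitrary} infinite set by an infinite set \emph{all} of whose elements she can later match, which is why collapsing to a single infinitely realised type $S$ is essential; in (\ref{prop:connolque:iii})$\Rightarrow$(\ref{prop:connolque:i}) the delicate point is that the conclusion `$\osmodel_1\models\qu x.\psi$' requires $\psi(\vlist{s_1},\cdot)$ to hold of \emph{every} element of the infinite set $X_1$, which is exactly what quantifying over all of \abelard's drill-down choices against a fixed winning strategy delivers. Everything else is a routine adaptation of the $\ofoe$ argument.
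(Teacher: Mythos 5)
Your proposal is correct and follows the paper's overall architecture exactly: the cycle (\ref{prop:connolque:i})$\Rightarrow$(\ref{prop:connolque:ii})$\Rightarrow$(\ref{prop:connolque:iii})$\Rightarrow$(\ref{prop:connolque:i}), with a one-round survival claim for the second implication and a loaded induction for the third. The one place where you genuinely diverge is the second-order move in (\ref{prop:connolque:ii})$\Rightarrow$(\ref{prop:connolque:iii}), and your treatment is simpler than the paper's. The paper has \eloise answer \abelard's infinite set $X_i$ with a carefully assembled $X_{-i}$ that (a) contains the $f_n$-partners of the played elements of $X_i$ and (b) matches, type by type, the presence of unplayed elements of each type, so that she can answer \emph{any} drill-down choice of \abelard. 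You instead observe that, $A$ being finite, some single type $S$ is realised infinitely often in $X_i$, hence infinitely often in $\osmodel_{-i}$ by $\sim_k^\infty$, and let \eloise answer with the fresh type-$S$ elements of $D_{-i}$ only. This works because the survival condition of $\efgame_k^\infty$ constrains only the pair of \emph{elements} ultimately chosen (via the partial-isomorphism requirement on $f_{n+1}$), not the relationship between the two sets; by restricting $X_{-i}$ to fresh elements of a single infinitely realised type you guarantee in advance that every drill-down by \abelard can be matched by a fresh type-$S$ element of $X_i$. What the paper's heavier construction buys is nothing extra for this particular game, so your shortcut is a legitimate economy. Two further minor differences: you argue the $\qu$-case of the loaded claim directly (quantifying over all of \abelard's drill-downs) where the paper argues by contradiction, which is purely cosmetic; and you explicitly sketch the $\dqu$-case, which the paper leaves implicit --- your dual argument (transporting an infinite set of counter-witnesses) is correct as stated.
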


\begin{proof}
Step~(\ref{prop:connolque:i}) to~(\ref{prop:connolque:ii}) is direct by 
Proposition~\ref{props:eqrelolque}. For~(\ref{prop:connolque:ii}) 
to~(\ref{prop:connolque:iii}) we show the following.

\begin{claimfirst}
Let $\osmodel_0 \sim_k^\infty \osmodel_1$ and $\vlist{s_i} \in D_i^n$ be such
that $n<k$ and $f_n:\vlist{s_0}\mapsto\vlist{s_1}$ is a partial isomorphism.
Then \eloise can survive one more round in $\efgame_{k}^\infty(\osmodel_0,
\osmodel_1)@(\vlist{s_0},\vlist{s_1})$.
\end{claimfirst}
\begin{pfclaim}
We focus on the second-order moves because the first-order moves are the same as
in the corresponding Claim of Proposition~\ref{prop:connofoe}. 
Let \abelard choose an infinite set $X_i \subseteq D_i$, we would like \eloise
to choose an infinite set $X_{-i} \subseteq D_{-i}$ such that the following 
conditions hold:

\begin{enumerate}[(a)]
\parskip 0pt
\item\label{it:piso} 
The map $f_n$ is a well-defined partial isomorphism between the restricted
monadic models $\osmodel_0{\rest}X_0$ and $\osmodel_1{\rest}X_1$,

\item\label{it:equiv}
For every type $S$ there is an element $d\in X_i$ of type $S$ which is 
\emph{not} connected by $f_n$ iff there is such an element in $X_{-i}$.
\end{enumerate}	
\begin{figure}[ht]
\centering
\includegraphics[scale=0.7]{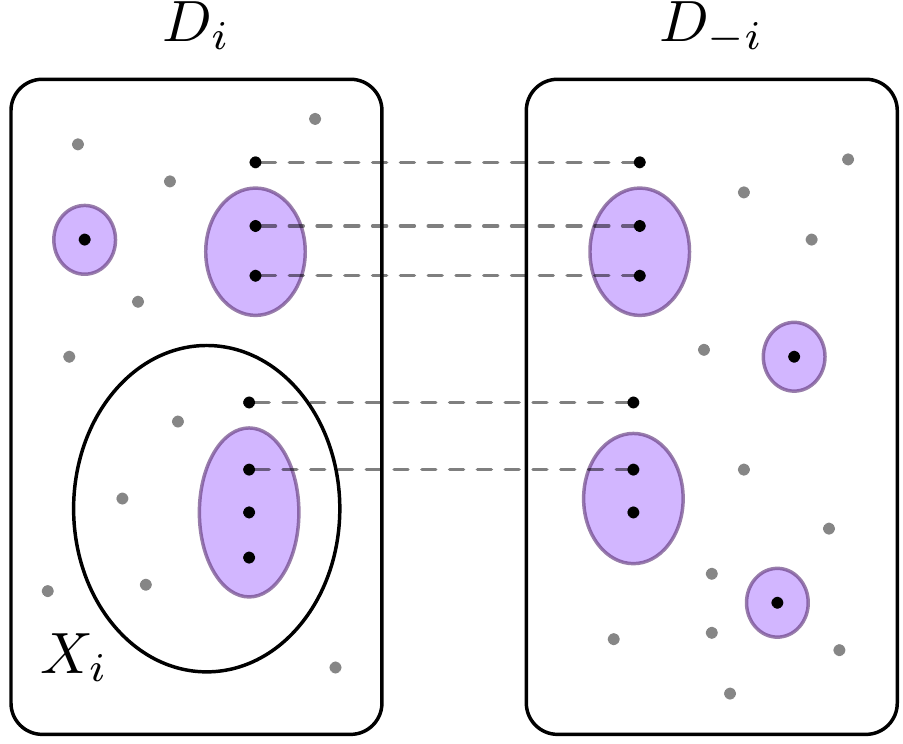}
\caption{Elements of type $S$ have coloured background.}
\label{fig:efinf}
\end{figure}

First we prove that such a set $X_{-i}$ exists. 
To satisfy item~\eqref{it:piso} \eloise just needs to add to $X_{-i}$ the 
elements connected to $X_i$ by $f_n$; this is not a problem.

For item~\eqref{it:equiv} we proceed as follows: for every type $S$ such that 
there is an element $d\in X_i$ of type $S$, we add a new element $d'\in D_{-i}$
of type $S$ to $X_{-i}$. 
To see that this is always possible, observe first that $\osmodel_0 
\sim_k^\infty \osmodel_1$ implies $\osmodel_0 \sim_k^= \osmodel_1$. 
Using the properties of this relation, we divide in two cases:
\begin{itemize}
\item If $|S|_{D_i} \geq k$ we know that $|S|_{D_{-i}} \geq k$ as well. 
From the elements of $D_{-i}$ of type $S$, at most $n<k$ are used by $f_n$. 
Hence, there is at least one $d'\in D_{-i}$ of type $S$ to choose from.

\item If $|S|_{D_i} < k$ we know that $|S|_{D_{i}} = |S|_{D_{-i}}$. 
From the elements of $D_{i}$ of type $S$, at most $|S|_{D_{i}}-1$ are used by 
$f_n$. 
(The reason for the `$-1$' is that we are assuming that we have just chosen a 
$d\in X_i$ which is not in $f_n$.) 
Using that $|S|_{D_{i}} = |S|_{D_{-i}}$ and that $f_n$ is a partial isomorphism 
we can again conclude that there is at least one $d'\in D_{-i}$ of type $S$ to 
choose from.
\end{itemize}
	
Finally, we need to show that \eloise can choose $X_{-i}$ to be infinite.
To see this, observe that $X_{i}$ is infinite, while there are only finitely 
many types.
Hence there must be some $S$ such that $|S|_{X_i} \geq \omega$. 
It is then safe to add infinitely many elements for $S$ in $X_{-i}$ while 
considering point~\eqref{it:equiv}. 
Moreover, the existence of infinitely many elements satisfying $S$ in $D_{-i}$
is guaranteed by $\osmodel_0 \sim_k^\infty \osmodel_1$.

Having shown that \eloise can choose a set $X_{-i}$ satisfying the above 
conditions, it is now clear that using point~\eqref{it:equiv} \eloise can 
survive the ``first-order part'' of the second-order move we were considering.
This finishes the proof of the claim.
\end{pfclaim}

\noindent
Returning to the proof of Proposition~\ref{prop:connolque}, for 
step~(\ref{prop:connolque:iii}) to~(\ref{prop:connolque:i}) we prove the following.
	
\begin{claim}
Let $\vlist{s_i} \in D_i^n$ and $\phi(z_1,\dots,z_n) \in \ofoei(A)$ be such
that $\qr(\phi) \leq k-n$. 
If \eloise has a winning strategy in 
$\efgame_k^\infty(\osmodel_0,\osmodel_1)@(\vlist{s_0},\vlist{s_1})$ then 
$\osmodel_0 \models \phi(\vlist{s_0})$ iff 
$\osmodel_1 \models \phi(\vlist{s_1})$.
\end{claim}
	
\begin{pfclaim}
All the cases involving operators of $\ofoe$ are the same as in 
Proposition~\ref{prop:connofoe}. 
We prove the inductive case for the generalised quantifier. 
Let $\phi(z_1,\dots,z_n)$ be of the form $\qu x.\psi(z_1,\dots,z_n,x)$ and 
let $\osmodel_0 \models \phi(\vlist{s_0})$. 
Hence, the set $X_{0} \isdef \{ d_{0} \in D_{0} \mid \osmodel_0 \models 
\psi(\vlist{s_0},d_0) \}$ is infinite.

By assumption \eloise has a winning strategy in 
$\efgame_k^\infty(\osmodel_0,\osmodel_1)@(\vlist{s_0},\vlist{s_1})$.
Therefore, if \abelard plays a second-order move by picking $X_0 \subseteq D_0$
she can respond with some infinite set $X_1 \subseteq D_1$. 
We claim that 
$\osmodel_1 \models \psi(\vlist{s_1},d_1)$ for every $d_1\in X_1$. 
First observe that if this holds then the set $X'_1 \isdef  \{ d_1 \in D_1 \mid 
\osmodel_1 \models \psi(\vlist{s_1},d_1)\}$ must be infinite, and hence 
$\osmodel_1 \models \qu x.\psi(\vlist{s_1},x)$.

Assume, for a contradiction, that $\osmodel_1 \not\models \psi(\vlist{s_1},d'_1)$
for some $d'_1\in X_1$. 
Let \abelard play this $d'_1$ as the second part of his move. 
Then, as \eloise has a winning strategy, she will respond with some $d'_0 \in 
X_0$ for which she has a winning strategy in 
$\efgame_{k}^\infty(\osmodel_0,\osmodel_1)%
  @(\vlist{s_0}{\cdot}d'_0,\vlist{s_1}{\cdot}d'_1)$. 
But then by our induction hypothesis, which applies since $\qr(\psi) \leq
k-(n+1)$, we may infer from $\osmodel_1 \not\models \psi(\vlist{s_1},d'_1)$
that $\osmodel_0 \not\models \psi(\vlist{s_0},d'_0)$.
This clearly contradicts the fact that $d'_{0} \in X_{0}$.
\end{pfclaim}
	
\noindent
Combining the claims finishes the proof of the proposition.
\end{proof}

\begin{theorem}
\label{thm:bfofoei}
There is an effective procedure that transforms an arbitrary $\ofoei$-sentence 
$\phi$ into an equivalent formula $\tbas{\phi}$ in basic form.
\end{theorem}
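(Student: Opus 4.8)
The plan is to reproduce the proof of Theorem~\ref{thm:bnfofoe} almost verbatim, substituting the relation $\sim^\infty_k$ for $\sim^=_k$ and invoking the $\ofoei$-specific Propositions~\ref{props:eqrelolque} and~\ref{prop:connolque} in place of their $\ofoe$ counterparts. Fix an $\ofoei$-sentence $\phi$ and let $k$ be its quantifier rank; I may assume $k \geq 1$, since a sentence of rank $0$ is equivalent to $\top$ or $\bot$ and one can always work with the relation $\sim^\infty_{k'}$ for any $k' \geq k$ (as $\ext{\phi}$, being a union of $\equiv_k^{\ofoei}$-classes, is a union of $\sim^\infty_{k'}$-classes). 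By Proposition~\ref{prop:connolque} the quotient $\umods/{\equiv_k^{\ofoei}}$ equals $\umods/{\sim^\infty_k}$, so $\ext{\phi}$ is a union of $\sim^\infty_k$-classes; since $\sim^\infty_k$ has finite index and each class $E$ is defined by the characteristic sentence $\phi^\infty_E$ (Proposition~\ref{props:eqrelolque}), we obtain
\[
\phi \;\equiv\; \bigvee \{ \phi^\infty_E \mid E \in \ext{\phi}/{\sim^\infty_k} \}.
\]
It then suffices to rewrite each $\phi^\infty_E$ as a single basic-form disjunct $\dbnfofoei{\vlist{T}}{\Pi}{\Sigma}$.

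For this rewriting I would exploit the way $\phi^\infty_E$ was built in Proposition~\ref{props:eqrelolque}, namely as $\phi^\infty_E = \phi^=_{E'} \land \dbnfinf{\Sigma}$, where $E'$ is the $\sim^=_k$-class of a representative $\osmodel \in E$ and $\Sigma = \{ S \subseteq A \mid |S|_\osmodel \geq \omega \}$. By the argument in the proof of Theorem~\ref{thm:bnfofoe}, the $\ofoe$-sentence $\phi^=_{E'}$ is equivalent to a single $\ofoe$-disjunct $\dbnfofoe{\vlist{T}}{\Pi'}$, in which $\Pi'$ is exactly the set of types realised at least $k$ times in $\osmodel$, $\vlist{T}$ lists the remaining `small' elements together with sufficiently many copies of the `large' types, and $\Pi' \subseteq \vlist{T}$. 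Every infinite type is realised at least $k$ times, so $\Sigma \subseteq \Pi'$; setting $\Pi \isdef \Pi' \setminus \Sigma$ we get $\Pi \cup \Sigma = \Pi'$ and $\Sigma \cup \Pi = \Pi' \subseteq \vlist{T}$, whence
\[
\phi^\infty_E \;\equiv\; \dbnfofoe{\vlist{T}}{\Pi \cup \Sigma} \land \dbnfinf{\Sigma} \;=\; \dbnfofoei{\vlist{T}}{\Pi}{\Sigma},
\]
which is of the required shape. The degenerate class $E = \{\emodel\}$ produces $\dbnfofoei{\nada}{\nada}{\nada}$, exactly as recorded after Proposition~\ref{props:eqrelolque}. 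Effectiveness is then argued as in Theorem~\ref{thm:bnfofoe}: the construction bounds the number of disjuncts and the size of each triple $(\vlist{T},\Pi,\Sigma)$ in terms of $k$ and $|A|$, so one can nondeterministically guess a disjunction of basic-form disjuncts within this bound and check it equivalent to $\phi$, the equivalence test being decidable by Fact~\ref{f:decido}.

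I expect that almost all of the genuine work has already been discharged in Propositions~\ref{props:eqrelolque} and~\ref{prop:connolque}; in particular the only essentially new ingredient over the $\ofoe$ case is the second-order, set-choosing move analysed in the Claim of Proposition~\ref{prop:connolque}, which is precisely what licenses replacing $\equiv_k^{\ofoei}$ by $\sim^\infty_k$. Consequently the hardest part of the theorem itself is the bookkeeping that the assembled conjunction lands exactly in the basic form of Definition~\ref{def:basicform_fofoei}: one must confirm the inclusion $\Sigma \subseteq \Pi'$ (so that the $\ofoe$-component can legitimately be presented with index $\Pi \cup \Sigma$), check that the $\dqu$-clause of $\dbnfinf{\Sigma}$ is consistent with the universal clause of $\dbnfofoe{\vlist{T}}{\Pi \cup \Sigma}$, and handle the rank-$0$ and empty-domain corner cases correctly.
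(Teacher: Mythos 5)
Your proposal is correct and follows essentially the same route as the paper's own proof: reduce to the disjunction of the characteristic sentences $\phi^\infty_E$ via Proposition~\ref{prop:connolque}, then rewrite each $\phi^\infty_E = \phi^=_{E'} \land \dbnfinf{\Sigma}$ as a single disjunct $\dbnfofoei{\vlist{T}}{\Pi}{\Sigma}$ by applying Theorem~\ref{thm:bnfofoe} to $\phi^=_{E'}$ and splitting $\Pi'$ as $\Pi \uplus \Sigma$. Your justification that $\Sigma \subseteq \Pi'$ (every infinitely realised type is realised at least $k$ times) is in fact slightly more explicit than the paper's remark that the formula would otherwise be inconsistent.
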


\begin{proof}
This can be proved using the same argument as in Theorem~\ref{thm:bnfofoe} but
based on Proposition \ref{prop:connolque}. 
Hence we only focus on showing that $\phi_E^\infty \equiv 
\dbnfofoei{\vlist{T}}{\Pi}{\Sigma}$ for some 
$\vlist{T},\Pi,\Sigma \subseteq \wp(A)$ such that $\Sigma \cup \Pi \subseteq
\vlist{T}$, where $\phi_E^\infty$ is the sentence characterising
$E \in \umods/{\sim^\infty_k}$ from 
Proposition \ref{props:eqrelolque}(\ref{props:eqrelofoe:ii}). 
Recall that
\[
\phi^\infty_E = \phi^=_{E'} \land \dbnfinf{\Sigma}
\]
where $\Sigma$ is the collection of types that are realised by infinitely many 
elements.
Using Theorem~\ref{thm:bnfofoe} on $\phi^=_{E'}$ we know that this is 
equivalent to
\[
\phi^\infty_E = \dbnfofoe{\vlist{T}}{\Pi'} \land \dbnfinf{\Sigma}
\]
where $\Pi' \subseteq \vlist{T}$.
Observe that we may assume that $\Sigma \subseteq \Pi$, otherwise the formula 
would be inconsistent.
Now separate $\Pi'$ as $\Pi' = \Pi \uplus \Sigma$ where $\Pi \isdef 
\Pi'\setminus\Sigma$ consists of the types that are satisfied by finitely many 
elements.
Then we find 
\[
\phi^\infty_E \equiv 
\dbnfofoe{\vlist{T}}{\Pi\cup\Sigma} \land \dbnfinf{\Sigma}.
\]
Therefore, we can conclude that $\phi^\infty_E \equiv 
\dbnfofoei{\vlist{T}}{\Pi}{\Sigma}$.
\end{proof}

\noindent
The following slightly stronger normal form will be useful in later 
chapters.

\begin{proposition}\label{prop:bfofoei-sigmapi}
For every sentence in the basic form $\bigvee \dbnfofoei{\vlist{T}}{\Pi}{\Sigma}$
it is possible to assume, without loss of generality, that $\Sigma \subseteq 
\Pi \subseteq \vlist{T}$.
\end{proposition}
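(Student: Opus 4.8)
The plan is to observe that the parameter $\Pi$ enters the disjunct $\dbnfofoei{\vlist{T}}{\Pi}{\Sigma}$ only through the union $\Pi \cup \Sigma$, so that enlarging $\Pi$ to $\Pi \cup \Sigma$ changes nothing, not merely semantically but syntactically. First I would dispose of the inclusion $\Pi \subseteq \vlist{T}$: this is already guaranteed, since the definition of basic form requires $\Sigma \cup \Pi \subseteq \vlist{T}$, whence in particular $\Pi \subseteq \vlist{T}$ and also $\Sigma \subseteq \vlist{T}$. Thus the only genuinely new demand is $\Sigma \subseteq \Pi$.

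Next, for each disjunct I would set $\Pi' \isdef \Pi \cup \Sigma$ and replace $\dbnfofoei{\vlist{T}}{\Pi}{\Sigma}$ by $\dbnfofoei{\vlist{T}}{\Pi'}{\Sigma}$. Unpacking the definition gives
\[
\dbnfofoei{\vlist{T}}{\Pi'}{\Sigma} = \dbnfofoe{\vlist{T}}{\Pi' \cup \Sigma} \land \dbnfinf{\Sigma},
\]
and since $\Pi' \cup \Sigma = (\Pi \cup \Sigma) \cup \Sigma = \Pi \cup \Sigma$, the right-hand side is literally $\dbnfofoe{\vlist{T}}{\Pi \cup \Sigma} \land \dbnfinf{\Sigma} = \dbnfofoei{\vlist{T}}{\Pi}{\Sigma}$. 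Hence the rewritten disjunct is not just equivalent but identical to the original, so the whole disjunction $\bigvee \dbnfofoei{\vlist{T}}{\Pi}{\Sigma}$ is unchanged.

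Finally I would verify that the new parameters satisfy the desired constraints. By construction $\Sigma \subseteq \Pi \cup \Sigma = \Pi'$, and from $\Pi \subseteq \vlist{T}$ together with $\Sigma \subseteq \vlist{T}$ we obtain $\Pi' = \Pi \cup \Sigma \subseteq \vlist{T}$; thus $\Sigma \subseteq \Pi' \subseteq \vlist{T}$, as required. I do not anticipate any real obstacle: the statement is a bookkeeping observation about the normal form, reflecting the fact that the types in $\Sigma$, being realised by infinitely many elements, are in any case among the types permitted by the universal conjunct of $\dbnfofoe{\vlist{T}}{\Pi \cup \Sigma}$, so recording them in $\Pi$ as well comes for free.
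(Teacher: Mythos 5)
Your proof is correct and is essentially the paper's own argument: both replace $\Pi$ by $\Pi\cup\Sigma$ and observe that, since the disjunct only sees $\Pi$ through the union $\Pi\cup\Sigma$, the formula $\dbnfofoei{\vlist{T}}{\Pi\cup\Sigma}{\Sigma}$ coincides with $\dbnfofoei{\vlist{T}}{\Pi}{\Sigma}$. Your remark that the identity is syntactic rather than merely an equivalence is a slight sharpening, but the route is the same.
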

\begin{proof}
This is direct from observing that $\dbnfofoei{\vlist{T}}{\Pi}{\Sigma}$ is 
equivalent to $\dbnfofoei{\vlist{T}}{\Pi\cup\Sigma}{\Sigma}$. 
To check it we just unravel the definitions and observe that
$\dbnfofoe{\vlist{T}}{\Pi \cup \Sigma} \land \dbnfinf{\Sigma}$ is equivalent to
$\dbnfofoe{\vlist{T}}{\Pi \cup \Sigma \cup \Sigma} \land \dbnfinf{\Sigma}$.
\end{proof}



\section{Monotonicity}
\label{sec-mono}

In this section we provide our first characterisation result, which concerns the
notion of monotonicity. 

\begin{definition}\label{def:mono}
Let $V$ and $V'$ be two valuations on the same domain $D$, then we say that $V'$
is a \emph{$B$-extension of} $V$, notation: $V \leq_{B} V'$, if $V(b) \subseteq
V'(b)$ for every $b \in B$, and $V(a) = V'(a)$ for every $a \in A \setminus B$.

Given a monadic logic $\llang$ and a formula $\phi \in \llang(A)$ we say 
that $\phi$ is \emph{monotone in $B \subseteq A$} if 
\begin{equation}
\label{eq:mono}
(D,V),g \models \phi \text{ and } V \leq_{B} V' \text{ imply } 
(D,V'),g \models \phi,
\end{equation}
for every pair of monadic models $(D,V)$ and $(D,V')$ and every assignment 
$g:\fovar\to D$.
\end{definition}

\begin{remark}\label{rem:monotprodeach}
It is easy to prove that a formula is monotone in $B \subseteq A$ if and only if 
it is monotone in every $b \in B$. 
\end{remark}

The semantic property of monotonicity can usually be linked to the syntactic 
notion of positivity.
Indeed, for many logics, a formula $\phi$ is monotone in $a \in A$ iff 
$\phi$ is equivalent to a formula where all occurrences of $a$ have a positive
polarity, that is, they are situated in the scope of an even number of 
negations.

\begin{definition}
For $\llang \in \{ \ofo, \ofoe \}$ we define the fragment of $A$-formulas that 
are \emph{positive} in all predicates in $B$, in short: the \emph{$B$-positive 
formulas} by the following grammar:
\[
\phi \defbnf  \psi \mid b(x) 
  \mid \phi \land \phi \mid \phi \lor \phi
  \mid \exists x.\phi \mid \forall x.\phi, 
\]
where $b \in B$ and $\psi \in \llang(A\setminus B)$ (that is, there are no 
occurrences of any $b \in B$ in $\psi$).
Similarly, the $B$-positive fragment of $\ofoei$ is given by
\[
\phi \defbnf  \psi \mid b(x) 
  \mid \phi \land \phi \mid \phi \lor \phi
  \mid \exists x.\phi \mid \forall x.\phi 
  \mid \qu x.\phi \mid \dqu x.\phi, 
\]
where $b\in B$ and $\psi \in \ofoei(A\setminus B)$. 

In all three cases, we let $\monot{\llang(A)}{B}$ denote the set of $B$-positive
sentences.
\end{definition}

Note that the difference between the fragments $\monot{\ofo(A)}{B}$ and 
$\monot{\ofoe(A)}{B}$ lies in the fact that in the latter case, the `$B$-free'
formulas $\psi$ may contain the equality symbol. 
Clearly $\monot{\llang(A)}{A}= \llang^+$.

\begin{theorem}
\label{t:mono}
Let $\phi$ be a sentence of the monadic logic $\llang(A)$, where $\llang \in 
\{ \ofo, \ofoe, \ofoei \}$.
Then $\phi$ is monotone in a set $B \subseteq A$ if and only if there is a 
equivalent formula $\phi^{\tmono} \in \monot{\llang(A)}{B}$.
Furthermore, it is decidable whether a sentence $\phi \in \llang(A)$ has this 
property or not.
\end{theorem}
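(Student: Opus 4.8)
The plan is to prove both directions of the equivalence, exploiting the normal form machinery developed in Section~\ref{sec:normalforms}. The ``if'' direction is the easy half: I would show by a straightforward induction on the structure of $B$-positive formulas that every sentence in $\monot{\llang(A)}{B}$ is monotone in $B$. In view of Remark~\ref{rem:monotprodeach}, it suffices to treat a single predicate $b \in B$. The atomic cases are immediate ($b(x)$ is visibly monotone in $b$, and any $B$-free $\psi$ is unaffected when only the interpretation of predicates in $B$ grows), and the inductive cases for $\land, \lor, \exists, \forall$ (and, in the $\ofoei$ case, $\qu, \dqu$) all preserve monotonicity because these connectives and quantifiers are themselves monotone in their arguments.

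For the ``only if'' direction — the expressive completeness statement — the plan is to follow the uniform strategy announced in the introduction: reduce to basic form and then translate disjunct by disjunct. By Fact~\ref{fact:ofonormalform}, Theorem~\ref{thm:bnfofoe}, or Theorem~\ref{thm:bfofoei} (according to which $\llang$ we are in), I may assume $\phi$ is already presented as a disjunction of basic disjuncts. The idea is to define an effective translation $(\cdot)^{\tmono}$ that acts pointwise on each disjunct, replacing each negative literal $\neg b(x)$ with $b \in B$ by $\top$ — equivalently, dropping the conjunct $\neg b(x)$ from the relevant type formulas $\tau_S$, thereby relaxing the type specification so that it only records positive information about the predicates in $B$. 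Concretely, in a disjunct $\dbnfofoei{\vlist{T}}{\Pi}{\Sigma}$ the types $T_i$, and the types in $\Pi$ and $\Sigma$, are each intersected with their positive content relative to $B$: a type $S$ is replaced by its ``$B$-positivisation'', which forgets whether predicates of $B$ outside $S$ hold. The resulting sentence $\phi^{\tmono}$ visibly lies in $\monot{\llang(A)}{B}$ and is clearly entailed by $\phi$ (since the positivisation only weakens the type constraints). The crux is then to verify the converse entailment \emph{under the assumption that $\phi$ is monotone in $B$}: if $(D,V') \models \phi^{\tmono}$, one builds a $B$-smaller valuation $V \leq_B V'$ that still satisfies the original disjunct of $\phi$ — intuitively by ``switching off'' the predicates of $B$ at exactly those elements where the positivised type permits it — and then monotonicity of $\phi$ lifts satisfaction back up to $(D,V')$.

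I expect the main obstacle to lie precisely in this last step, namely showing that the witness valuation $V \le_B V'$ satisfying the original (non-positivised) disjunct can always be constructed. The difficulty is bookkeeping in the $\ofoei$ case: a basic disjunct constrains not only which types are realised but \emph{how many} elements realise each type (the distinguished witnesses $t_1,\dots,t_n$, the finitely-many-$\Pi$ part, and the infinitely-many-$\Sigma$ part from the tripartite description following Definition~\ref{def:basicform_fofoei}). When I decrease the valuation on $B$ to recover a fine-grained original type, I must check that the cardinality bookkeeping is respected — that a type realised by infinitely many elements in $V'$ can be matched by infinitely many elements carrying the correct full type under $V$, and similarly for the finite counts, possibly after redistributing elements among the $B$-variants of a positivised type. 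The guarantee that this redistribution can be done \emph{consistently with monotonicity} is exactly where the hypothesis that $\phi$ (and hence each relevant disjunct, up to equivalence) is monotone in $B$ does its work, and making that interaction precise is the technical heart of the argument.

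Finally, decidability is immediate: the translation $(\cdot)^{\tmono}$ is effectively computable, and by the equivalence just established, $\phi$ is monotone in $B$ if and only if $\phi \equiv \phi^{\tmono}$; since equivalence of $\llang$-sentences is decidable by Fact~\ref{f:decido} (via satisfiability of $\phi \leftrightarrow \phi^{\tmono}$), the property is decidable.
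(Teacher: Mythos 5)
Your proposal is correct and follows essentially the same route as the paper (Proposition~\ref{p:monoismonot} for the easy direction, and Propositions~\ref{p:fomon}, \ref{p:monofoeismonot}, \ref{p:mono-ofoei} for completeness): reduce to basic normal form, replace each type $\tau_S$ by its $B$-positive variant $\tau^{B}_S$ disjunct by disjunct, and for the converse entailment shrink the valuation to one $U \leq_B V$ satisfying the original disjunct and then invoke monotonicity to lift satisfaction back to $(D,V)$. One small correction of emphasis: the cardinality bookkeeping you flag as the technical heart is resolved purely combinatorially --- each element is reassigned exactly the $B$-positive type it witnesses in the partition, so the finite counts and the infinitely-many-$\Sigma$ sets are preserved without any redistribution --- and the hypothesis that $\phi$ is monotone in $B$ is used only in the final lifting step from $(D,U)$ to $(D,V)$, not in constructing the witness valuation.
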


The `easy' direction of the first claim of the theorem is taken care of by the following 
proposition.

\begin{proposition}
\label{p:monoismonot}
Every formula $\phi \in \monot{\llang(A)}{B}$ is monotone in $B$,
where $\llang$ is one of the logics $\{ \ofo, \ofoe, \ofoei \}$.
\end{proposition}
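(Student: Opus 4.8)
The plan is to proceed by a straightforward induction on the structure of the $B$-positive formula $\phi$, establishing the monotonicity implication \eqref{eq:mono} for every assignment $g$ simultaneously, so that the induction hypothesis remains available for subformulas with free variables. Before the induction I would dispose of the empty model: since $\emodel$ carries the unique valuation $\nada$, any two valuations on the empty domain coincide, so $V \leq_{B} V'$ forces $V = V'$ and \eqref{eq:mono} holds trivially (indeed there are no assignments $g:\fovar\to\nada$ at all). For the remainder I fix a non-empty domain $D$ and valuations $V \leq_{B} V'$, and show that $(D,V),g \models \phi$ entails $(D,V'),g \models \phi$ for all $g$.

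The two base cases are handled as follows. When $\phi = b(x)$ with $b \in B$, the assumption $(D,V),g \models b(x)$ means $g(x) \in V(b)$; since $V(b) \subseteq V'(b)$ we obtain $g(x) \in V'(b)$, which is exactly $(D,V'),g \models b(x)$. When $\phi = \psi$ with $\psi \in \llang(A \setminus B)$, no predicate from $B$ occurs in $\psi$, and since $V$ and $V'$ agree on every $a \in A \setminus B$, the truth value of $\psi$ is unchanged when passing from $V$ to $V'$; here I would invoke the routine locality fact that the truth of a formula depends only on the interpretation of the predicate symbols occurring in it, giving $(D,V),g \models \psi$ iff $(D,V'),g \models \psi$. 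Note that this base case in fact yields an equivalence, which is stronger than the required implication, and that it simultaneously absorbs the equality atoms $x\foeq y$ and $x\foneq y$, since these mention no predicate at all.

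The inductive cases for $\land$, $\lor$, $\exists$ and $\forall$ are immediate, since each connective and quantifier is preserved by applying the induction hypothesis pointwise (for $\exists$ one retains the witness, for $\forall$ one ranges over the same domain $D$). The cases deserving a moment's attention, and the only ones specific to $\ofoei$, are the generalised quantifiers. By the induction hypothesis applied to the immediate subformula $\phi_1$, the satisfaction set $\{ d \in D \mid (D,V), g[x \mapsto d] \models \phi_1 \}$ is contained in $\{ d \in D \mid (D,V'), g[x \mapsto d] \models \phi_1 \}$. For $\phi = \qu x.\phi_1$ this suffices, since a superset of an infinite set is infinite; for $\phi = \dqu x.\phi_1$ one observes dually that a superset of a cofinite set is cofinite, so the `all but finitely many' condition is preserved. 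I expect this monotone-quantifier step to be the only non-mechanical point of the argument, all other steps being the usual structural bookkeeping; the proof is uniform across $\ofo$, $\ofoe$ and $\ofoei$, the two generalised-quantifier clauses simply being absent for $\ofo$ and $\ofoe$.
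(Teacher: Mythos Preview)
Your proof is correct and follows essentially the same approach as the paper's: a structural induction after disposing of the empty-domain case, with the generalised quantifiers $\qu$ and $\dqu$ handled via the monotone preservation of the satisfaction set (the paper phrases this with explicit witness sets $I$ and $C$, you phrase it as ``superset of infinite/cofinite is infinite/cofinite'', which amounts to the same thing). You are in fact somewhat more explicit than the paper about the base cases, but there is no substantive difference.
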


\begin{proof} 
The case for $D= \nada$ being immediate, we assume $D \neq \nada$.
The proof is a routine argument by induction on the complexity of $\phi$.
That is, we show by induction, that any formula $\phi$ in the $B$-positive
fragment (which may not be a sentence) satisfies \eqref{eq:mono}, for every 
monadic model $(D,V)$, valuation $V' \geq_{B} V$ and assignment ${g:\fovar\to 
D}$.
We focus on the generalised quantifiers. 
Let $(D,V),g \models \phi$ and
$V \leq_{B} V'$.
\begin{enumerate}[\textbullet]
\item
Case $\phi = \qu x.\phi'(x)$. By definition there exists an infinite set
$I\subseteq D$ such that for all $d\in I$ we have $(D,V),g[x\mapsto d] \models 
\phi'(x)$. 
By induction hypothesis $(D,V'),g[x\mapsto d] \models\phi'(x)$ 
for all $d \in I$. Therefore $(D,V'),g \models \qu x.\phi'(x)$.

\item Case $\phi = \dqu x.\phi'(x)$. 
Hence there exists $C\subseteq D$ such that for all $d\in C$ we have 
$(D,V),g[x\mapsto d] \models \phi'(x)$ and $D\setminus C$ is \emph{finite}. 
By induction hypothesis $(D,V'),g[x\mapsto d] \models \phi'(x)$ for all $d 
\in C$. 
Therefore $(D,V'),g \models \dqu x.\phi'(x)$.
\end{enumerate}
This finishes the proof.
\end{proof}

The `hard' direction of the first claim of the theorem states that the fragment
$\monot{\ofo}{B}$ is complete for monotonicity in $B$. In order to prove it,
we need to show that every sentence which is monotone in $B$ is equivalent to 
some formula in $\monot{\ofo}{B}$. 
We actually are going to prove a stronger result.

\begin{proposition}
\label{p:efftrans}
Let $\llang$ be one of the logics $\{ \ofo, \ofoe, \ofoei \}$.
There exists an effective translation $(-)^\tmono:\llang(A) \to \monot{\llang(A)}{B}$ such that
a sentence ${\phi \in \llang(A)}$ is monotone in $B \subseteq A$ 
only if 
$\phi\equiv \phi^\tmono$.
\end{proposition}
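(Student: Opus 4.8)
The plan is to exploit the normal form machinery developed in Section~\ref{sec:normalforms} and define the translation $(-)^{\tmono}$ by acting disjunct-by-disjunct on a sentence in basic form. Since every $\llang$-sentence $\phi$ is effectively equivalent to a basic-form sentence $\tbas{\phi} = \bigvee \dbnf$ (Fact~\ref{fact:ofonormalform}, Theorem~\ref{thm:bnfofoe}, Theorem~\ref{thm:bfofoei}), it suffices to define $(-)^{\tmono}$ on a single disjunct $\dbnf$ and set $\phi^{\tmono} \isdef \bigvee \dbnf^{\tmono}$. The guiding intuition is that a disjunct records, type by type, which colours $S \subseteq A$ are realised (and with what multiplicity), so making the formula $B$-positive should amount to \emph{relaxing} the negative requirements on the names in $B$: wherever a type $\tau_S$ stipulates $\lnot b(x)$ for some $b \in B$, we drop that conjunct, thereby allowing $b$ to hold. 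Concretely, I would replace each type $\tau_S$ occurring in a disjunct by its ``$B$-relaxed'' version $\tau_S^{B} \isdef \bigwedge_{a \in S} a(x) \land \bigwedge_{a \in A \setminus (S \cup B)} \lnot a(x)$, which is manifestly $B$-positive, and verify that the resulting disjunct lies in $\monot{\llang(A)}{B}$.

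First I would make precise how a single disjunct transforms. For $\ofo$ a disjunct is $\dbnfofo{\Sigma} = \bigwedge_{S \in \Sigma} \exists x.\tau_S(x) \land \forall x.\bigvee_{S \in \Sigma}\tau_S(x)$; the existential conjuncts $\exists x.\tau_S(x)$ are already monotone once we relax $\tau_S$ to $\tau_S^B$, but the universal part $\forall x.\bigvee_{S}\tau_S(x)$ is the delicate one, because a universal constraint over a disjunction of full types is \emph{not} monotone in general. The correct move here is to close $\Sigma$ under $B$-extension: replace $\Sigma$ by $\Sigma^{\uparrow B} \isdef \{ S' \mid S \subseteq S' \subseteq S \cup B \text{ for some } S \in \Sigma \}$ in the universal conjunct, so that the disjunct permits any larger $B$-value. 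For $\ofoe$ and $\ofoei$ the same idea applies to the richer disjuncts $\dbnfofoe{\vlist{T}}{\Pi}$ and $\dbnfofoei{\vlist{T}}{\Pi}{\Sigma}$: relax each listed type $T_i$ to $\tau_{T_i}^B$ in the existential block, and replace $\Pi$ (and $\Sigma$) by their upward $B$-closures in the universal and infinity blocks. I would then check syntactically that every resulting formula falls under the $B$-positive grammar, which is routine since all negative atoms $\lnot b(x)$ with $b \in B$ have been eliminated.

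The substance of the proof is the correctness statement~\eqref{eq:intro-i}: if $\phi$ is monotone in $B$ then $\phi \equiv \phi^{\tmono}$. One inclusion is cheap: $\phi^{\tmono}$ is a $B$-positive sentence built by weakening the constraints of $\phi$, so $\phi \models \phi^{\tmono}$ always (and $\phi^{\tmono}$ is monotone in $B$ by Proposition~\ref{p:monoismonot}). For the converse $\phi^{\tmono} \models \phi$ I would use the monotonicity hypothesis. Given a model $(D,V') \models \phi^{\tmono}$, it satisfies some relaxed disjunct $\dbnf^{\tmono}$; I would construct a $B$-reduct valuation $V \leq_B V'$ that satisfies the \emph{original} disjunct $\dbnf$, by shrinking the $B$-colours of elements back down to their unrelaxed types. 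Then $(D,V) \models \phi$, and since $V \leq_B V'$ and $\phi$ is monotone in $B$, I conclude $(D,V') \models \phi$. The construction of $V$ must respect the multiplicity and cardinality bookkeeping that the normal form enforces (the distinct witnesses $\vlist{T}$, the $\Pi$-bound on remaining finite elements, and the $\Sigma$-infinitary requirements), which is where the argument requires care.

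The main obstacle I anticipate is precisely this back-construction of $V$ from $V'$ in the equality and infinity cases. In $\ofoe$ and $\ofoei$ a disjunct does not merely constrain \emph{which} types appear but \emph{how many} elements carry each type, and when several relaxed types $\tau_{T_i}^B$ collapse onto overlapping colour-sets, I must assign each witnessing element of $(D,V')$ a consistent unrelaxed colour so that the counts demanded by $\vlist{T}$, $\Pi$, and $\Sigma$ are met simultaneously. I expect to handle this by a careful matching argument that distributes elements according to a fixed choice function from realised relaxed types back to their originating types in $\Sigma$ (resp.\ $\vlist{T}, \Pi$), using Proposition~\ref{prop:bfofoei-sigmapi} to assume $\Sigma \subseteq \Pi \subseteq \vlist{T}$ and thereby streamline the cardinality reconciliation. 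Effectiveness of $(-)^{\tmono}$ is then immediate, since the normal-form translations are effective and the $B$-closure operations on finite sets of types are plainly computable.
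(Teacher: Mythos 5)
Your strategy is the paper's: normalise via Fact~\ref{fact:ofonormalform} and Theorems~\ref{thm:bnfofoe} and~\ref{thm:bfofoei}, translate disjunct by disjunct by discarding the negative information on $B$ inside each type, note that $\phi \models \phi^{\tmono}$ is free, and for the converse shrink the valuation of a model of $\phi^{\tmono}$ back to one satisfying the original disjunct and invoke monotonicity. There is, however, one step that fails as described: your treatment of the universal conjunct. Replacing $\Sigma$ by its upward $B$-closure $\Sigma^{\uparrow B}$ while keeping the \emph{full} types $\tau_{S'}$ is semantically correct (a colour $C$ belongs to $\Sigma^{\uparrow B}$ iff $S \subseteq C \subseteq S \cup B$ for some $S \in \Sigma$, which is exactly the extension of $\bigvee_{S\in\Sigma}\tau^{B}_{S}$), but the resulting formula is \emph{not} in $\monot{\llang(A)}{B}$: each $\tau_{S'}$ with $S' \in \Sigma^{\uparrow B}$ still contains the conjunct $\lnot b(x)$ for every $b \in B \setminus S'$, so your claim that ``all negative atoms $\lnot b(x)$ with $b\in B$ have been eliminated'' is false for that block. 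The repair is the device you already use in the existential block: put $\tau^{B}_{S}$ uniformly everywhere, i.e.\ take $\mondbnfofo{\Sigma}{B} = \bigwedge_{S\in\Sigma}\exists x.\tau^{B}_{S}(x) \land \forall x.\bigvee_{S\in\Sigma}\tau^{B}_{S}(x)$ as in the paper; the upward closure then becomes redundant.

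The matching difficulty you anticipate for $\ofoe$ and $\ofoei$ does not actually arise. The relaxed disjunct $\mondbnfofoei{\vlist{T}}{\Pi}{\Sigma}{B}$ already enforces all the required multiplicities \emph{for the relaxed types}: distinct witnesses for the $\tau^{B}_{T_i}$, infinitely many elements satisfying $\tau^{B}_{S}$ for each $S\in\Sigma$, all but finitely many elements satisfying some $\tau^{B}_{S}$ with $S\in\Sigma$, and every remaining element satisfying some $\tau^{B}_{S}$ with $S\in\Pi\cup\Sigma$. From this partition one reads off an assignment $d\mapsto S_d$ and sets $U^{\flat}(d)\isdef S_d$; under $U$ each element realises exactly the full type $\tau_{S_d}$, so $(D,U)$ satisfies the original unrelaxed disjunct with the counts inherited automatically, and $U\leq_B V$ together with $B$-monotonicity of $\phi$ closes the argument. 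No choice-function bookkeeping is needed. Do include the small checks the paper makes explicitly: the case $D=\nada$, and, in the $\ofo$ case, the justification (via $\ofo$-equivalence with a suitably inflated model, cf.\ Proposition~\ref{p-1P}) that the witnesses may be taken distinct and the map $d\mapsto S_d$ surjective.
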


We prove the three manifestations of Proposition \ref{p:efftrans} separately,
in three respective subsections.

\begin{proofof}{Theorem \ref{t:mono}}
The first claim of the Theorem is an immediate consequence of
Proposition~\ref{p:efftrans}.
By effectiveness of the translation and Fact \ref{f:decido}, it is therefore 
decidable whether a sentence $\phi \in \llang(A)$ is monotone in $B \subseteq A$
or not.
\end{proofof}

The following definition will be used throughout in the remaining of the section.

\begin{definition}
Given $S \subseteq A$ and $B \subseteq A$ we use the following notation
\[
\tau^{B}_S(x) \isdef  \bigwedge_{b\in S} b(x) \land 
   \bigwedge_{b\in A\setminus (S\cup B)}\lnot b(x) ,
\]
for what we call the \emph{$B$-positive} $A$-type $\tau^{B}_S$.
\end{definition}

Intuitively, $\tau^{B}_S$ works almost like the $A$-type $\tau_S$, the 
difference being that $\tau^{B}_S$ discards the negative information for the 
names in $B$.
If $B = \{a\}$ we write $\tau^a_S$ instead of $\tau^{\{a\}}_S$. 
Observe that with this notation, $\tau^+_S$ is equivalent to $\tau^A_S$.

\subsection{Monotone fragment of $\ofo$}

In this subsection we prove the $\ofo$-variant of Proposition~\ref{p:efftrans}.
That is, we give a translation that constructively maps arbitrary sentences
into $\monot{\ofo}{B}$ and that moreover it preserves truth iff the given
sentence is monotone in $B$.
To formulate the translation we need to introduce some new notation. 

\begin{definition}\label{def:monbasicformofoe}
Let $B\subseteq A$ be a finite set of names. 
The $B$-positive variant of $\dbnfofo{\Sigma}$ is given as follows:
\[
\mondbnfofo{\Sigma}{B} \isdef  
\bigwedge_{S\in\Sigma} \exists x. \tau^{B}_S(x) \land 
  \forall x. \bigvee_{S\in\Sigma} \tau^{B}_S(x).
\]
We also introduce the following generalised forms of the above notation:
\[
\mondgbnfofo{\Sigma}{\Pi}{B} \isdef  
\bigwedge_{S\in\Sigma} \exists x. \tau^{B}_S(x) \land 
  \forall x. \bigvee_{S\in\Pi} \tau^{B}_S(x).
\]
The \emph{positive} variants of the above notations are defined as 
$\posdbnfofo{\Sigma} \isdef  \mondbnfofo{\Sigma}{A}$ and 
$\posdgbnfofo{\Sigma}{\Pi} \isdef  \mondgbnfofo{\Sigma}{\Pi}{A}$.
\end{definition}

\begin{proposition}
\label{p:fomon}
There exists an effective translation $(-)^\tmono:\ofo(A) \to \monot{\ofo(A)}{B}$ such that
a sentence ${\phi \in \ofo(A)}$ is monotone in $B\subseteq A$ if and only if 
$\phi\equiv \phi^\tmono$.
\end{proposition}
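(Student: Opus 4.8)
The plan is to reduce to the basic form of Fact~\ref{fact:ofonormalform} and then rewrite each disjunct into its $B$-positive variant from Definition~\ref{def:monbasicformofoe}. Concretely, given $\phi\in\ofo(A)$, I first compute an equivalent $\tbas{\phi}=\bigvee_{\Sigma\in\mathcal{F}}\dbnfofo{\Sigma}$, where $\mathcal{F}\subseteq\wp(\wp(A))$ is a finite family of type-sets; this step is effective. I then define
\[
\phi^{\tmono}\isdef\bigvee_{\Sigma\in\mathcal{F}}\mondbnfofo{\Sigma}{B}.
\]
Since each $\tau^{B}_{S}(x)$ contains no negative literal $\lnot b(x)$ with $b\in B$, every $\mondbnfofo{\Sigma}{B}$, and hence $\phi^{\tmono}$, lies in $\monot{\ofo(A)}{B}$; and the whole construction is effective because the basic form is. Throughout I use that, writing $\mathrm{Real}(D,V)\isdef\{V^{\flat}(d)\mid d\in D\}$ for the set of types realised in a model, one has $(D,V)\models\dbnfofo{\Sigma}$ iff $\mathrm{Real}(D,V)=\Sigma$, so that $(D,V)\models\phi$ iff $\mathrm{Real}(D,V)\in\mathcal{F}$.

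The right-to-left implication of the biconditional is immediate: as just noted $\phi^{\tmono}\in\monot{\ofo(A)}{B}$, so by Proposition~\ref{p:monoismonot} it is monotone in $B$, and since monotonicity is preserved under logical equivalence, $\phi\equiv\phi^{\tmono}$ forces $\phi$ to be monotone in $B$. For the left-to-right implication I show $\phi\equiv\phi^{\tmono}$ under the assumption that $\phi$ is monotone in $B$. One inclusion holds unconditionally: a short computation gives $\dbnfofo{\Sigma}\models\mondbnfofo{\Sigma}{B}$ (an element of type $S\in\Sigma$ satisfies $\tau^{B}_{S}$), whence $\phi\models\phi^{\tmono}$. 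The content is therefore the converse inclusion, for which it suffices to prove, for each $\Sigma\in\mathcal{F}$, that $\mondbnfofo{\Sigma}{B}\models\phi$.

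For this last step I record the elementary observation that, writing $S\sqsubseteq_{B}T$ for ``$S\subseteq T$ and $T\setminus S\subseteq B$'' (i.e.\ $T$ is obtained from $S$ by adding names from $B$), an element $d$ satisfies $\tau^{B}_{S}(x)$ exactly when $S\sqsubseteq_{B}V^{\flat}(d)$. Hence, if $(D,V')\models\mondbnfofo{\Sigma}{B}$ and $\Sigma'\isdef\mathrm{Real}(D,V')$, the two conjuncts of $\mondbnfofo{\Sigma}{B}$ translate precisely into: (i) every $T\in\Sigma'$ satisfies $S\sqsubseteq_{B}T$ for some $S\in\Sigma$ (from the $\forall$-conjunct), and (ii) every $S\in\Sigma$ satisfies $S\sqsubseteq_{B}T$ for some $T\in\Sigma'$ (from the $\exists$-conjuncts). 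It then remains to check that monotonicity makes $\mathcal{F}$ closed under this relation, i.e.\ that (i) and (ii) together with $\Sigma\in\mathcal{F}$ imply $\Sigma'\in\mathcal{F}$; granting this, $\Sigma'\in\mathcal{F}$ yields $(D,V')\models\dbnfofo{\Sigma'}$ and hence $(D,V')\models\phi$, as desired.

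The closure property is the crux, and I expect it to be the main obstacle, since it is where monotonicity is genuinely used and where the subtlety of same-domain extensions (as opposed to the domain-free level of type-sets) enters. I would prove it by exhibiting an explicit witnessing model: take $D$ to consist of one element $e_{S}$ for each $S\in\Sigma$ and one element $f_{T}$ for each $T\in\Sigma'$, colour $e_{S}$ by $S$ and $f_{T}$ by some $S(T)\in\Sigma$ with $S(T)\sqsubseteq_{B}T$ (which exists by (i)), so that $\mathrm{Real}(D,V)=\Sigma$ and therefore $(D,V)\models\phi$; then define $V'\geq_{B}V$ by growing $e_{S}$ to some $T(S)\in\Sigma'$ with $S\sqsubseteq_{B}T(S)$ (which exists by (ii)) and growing $f_{T}$ to $T$. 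Each such recolouring only adds names from $B$ and leaves the complement of $B$ untouched, so indeed $V\leq_{B}V'$, while $\mathrm{Real}(D,V')=\Sigma'$; monotonicity of $\phi$ then gives $\Sigma'=\mathrm{Real}(D,V')\in\mathcal{F}$. Finally I would separately dispatch the degenerate case $D=\nada$ (equivalently $\Sigma=\nada$), where $\mondbnfofo{\nada}{B}=\dbnfofo{\nada}=\forall x.\bot$ and the empty model is the only one at issue, so the equivalence holds trivially.
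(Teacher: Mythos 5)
Your proposal is correct, and it uses exactly the same translation as the paper ($\bigvee \dbnfofo{\Sigma} \mapsto \bigvee \mondbnfofo{\Sigma}{B}$ via the basic form of Fact~\ref{fact:ofonormalform}), with the same easy directions; the difference lies in how the completeness step is executed. The paper works directly on the given model $(D,V)\models\mondbnfofo{\Sigma}{B}$: it \emph{shrinks} the valuation to $U\leq_{B}V$ with $U^{\flat}(d)\isdef S_{d}$, argues that $(D,U)\models\dbnfofo{\Sigma}$, and then applies monotonicity to the pair $U\leq_B V$. This requires a small amount of bookkeeping (the witnesses for the types in $\Sigma$ must be taken distinct and the map $d\mapsto S_d$ surjective, which the paper justifies by passing to an $\ofo$-equivalent $\omega$-product). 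You instead abstract away from the given model entirely: since $\ofo$-satisfaction of $\phi$ depends only on the set of realised types, you reduce everything to a closure property of the family $\mathcal{F}$ of type-sets under the relation induced by $\sqsubseteq_{B}$, and you verify that property on a purpose-built finite model with one point per type in $\Sigma\cup\Sigma'$, applying monotonicity to a $\leq_{B}$-\emph{extension} of that auxiliary model rather than to the model at hand. Both arguments are sound and of comparable length; yours buys a cleaner separation between the semantic reduction (satisfaction determined by $\mathrm{Real}(D,V)$) and the single place where monotonicity is used, and it avoids the ``distinct witnesses'' adjustment, at the mild cost of introducing the combinatorial relation $\sqsubseteq_{B}$ and an extra model construction. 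Your handling of the empty-model and empty-$\Sigma$ degeneracies matches the paper's.
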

\begin{proof}
To define the translation, by Fact \ref{fact:ofonormalform}, we assume, without loss of generality, that $\phi$
is in the normal form $\bigvee \dbnfofo{\Sigma}$ given in 
Definition~\ref{def:bfofo}, where
$\dbnfofo{\Sigma} = 
\bigwedge_{S\in\Sigma} \exists x. \tau_S(x) \land 
  \forall x. \bigvee_{S\in\Sigma} \tau_S(x)$.
We define the translation as
\[
(\bigvee \dbnfofo{\Sigma})^\tmono\isdef  \bigvee \mondbnfofo{\Sigma}{B}.
\]
From the construction it is clear that $\phi^\tmono \in \monot{\ofo(A)}{B}$ 
and therefore the right-to-left direction of the proposition is immediate by 
Proposition~\ref{p:monoismonot}. 
For the left-to-right direction assume that $\phi$ is monotone in $B$, we 
have to prove that $(D,V) \models \phi$ if and only if $(D,V) \models 
\phi^\tmono$.

\bigskip
\noindent \fbox{$\Rightarrow$} This direction is trivial.

\bigskip
\noindent \fbox{$\Leftarrow$} Assume $(D,V) \models \phi^\tmono$ and let 
$\Sigma$ be such that $(D,V) \models \mondbnfofo{\Sigma}{B}$. 
If $D = \nada$, then $\Sigma = \nada$ and $\mondbnfofo{\Sigma}{B}= 
\dbnfofo{\Sigma}$. 
Hence, assume $D \neq \nada$, and clearly $\Sigma \neq \nada$.

Because of the existential part of $\mondbnfofo{\Sigma}{B}$, every type $S \in
\Sigma$ has a `$B$-witness' in $\osmodel$, that is, an element $d_{S} \in D$
such that $(D,V) \models \tau^{B}_{S}(d_{S})$.
It is in fact safe to assume that all these witnesses are \emph{distinct}
(this is because $(D,V)$ can be proved to be $\ofo$-equivalent to such a model,
cf.~Proposition~\ref{p-1P}).
But because of the universal part of $\mondbnfofo{\Sigma}{B}$, we may assume
that for all states $d$ in $D$ there is a type $S_{d}$ in $\Sigma$ such that
$(D,V) \models \tau^{B}_{S_{d}}(d)$.
Putting these observations together we may assume that the map $d \mapsto S_{d}$
is surjective.

Note however, that where we have $(D,V) \models \tau^{B}_{S}(d)$, this does not 
necessarily imply that $(D,V) \models \tau_{S}(d)$: it might well be the case
that $d \in V(b)$ but $b \not\in S_{d}$, for some $ b \in B$.
What we want to do now is to shrink $V$ in such a way that the witnessed type
($S_d$) and the actually satisfied type coincide.
That is, we consider the valuation $U$ defined as $U^{\flat}(d) \isdef 
S_d$.\footnote{%
   Recall that a valuation $U:A\to\wp(D)$ can also be represented as a 
   colouring $U^{\flat}: D\to\wp(A)$ given by $U^{\flat}(d) \isdef 
   \{a \in A \mid d\in V(a)\}$.
   } 
It is then immediate by the surjectivity of the map $d \mapsto S_{d}$ that 
$(D,U) \models \dbnfofo{\Sigma}$, which implies that $(D,U) \models \phi$.

We now claim that 
\begin{equation}
\label{eq:moninf1}
U \leq_{B} V.
\end{equation}
To see this, observe that for $a \in A \setminus B$ we have the following 
equivalences:
\[
d \in U(a) \iff a \in S_{d} \iff (D,V) \models a(d) \iff d \in V(a),
\]
while for $b \in B$ we can prove
\[
d \in U(b) \iff b \in S_{d} \Longrightarrow (D,V) \models b(d) \iff d \in V(b).
\]
This suffices to prove \eqref{eq:moninf1}.

But from \eqref{eq:moninf1} and the earlier observation that $(D,U) \models 
\phi$ it is immediate by the monotonicity of $\phi$ in $B$ that $(D,V) \models 
\phi$.
\end{proof}

A careful analysis of the translation gives us the following 
corollary, providing normal forms for the monotone fragment of $\ofo$.

\begin{corollary}\label{cor:ofopositivenf}
For any sentence $\phi \in \ofo(A)$, the following hold.
\begin{enumerate}
\item 
The formula $\phi$ is monotone in $B \subseteq A$ iff it is equivalent to a 
formula in the basic form $\bigvee \mondbnfofo{\Sigma}{B}$ for some types 
$\Sigma \subseteq \wp(A)$.
\item The formula $\phi$ is monotone in every $a\in A$ 
iff $\phi$ is equivalent to a formula $\bigvee \posdbnfofo{\Sigma}$ for some
types $\Sigma \subseteq \wp(A)$.
\end{enumerate}
In both cases the norma forms are effective.
\end{corollary}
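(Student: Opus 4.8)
The plan is to read off both claims directly from the translation already constructed in Proposition~\ref{p:fomon}, so that essentially no new work is required beyond unwinding definitions. I would treat part~1 first and then obtain part~2 as the instance $B = A$.

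For part~1, I would argue the two directions separately. For the right-to-left direction, suppose $\phi$ is equivalent to some $\bigvee \mondbnfofo{\Sigma}{B}$. The first step is to check that this formula lies in the fragment $\monot{\ofo(A)}{B}$: each atom $\tau^{B}_S(x) = \bigwedge_{b\in S} b(x) \land \bigwedge_{b \in A \setminus (S \cup B)} \lnot b(x)$ contains the names of $B$ only positively, its negative literals all being over names in $A \setminus B$, so $\tau^{B}_S$ decomposes into positive $B$-atoms together with an $(A\setminus B)$-formula; the surrounding quantifiers, conjunctions and disjunctions keep us inside the grammar defining $\monot{\ofo(A)}{B}$. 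Given this, Proposition~\ref{p:monoismonot} tells us that $\bigvee \mondbnfofo{\Sigma}{B}$ is monotone in $B$, and since monotonicity is invariant under logical equivalence, so is $\phi$. For the left-to-right direction I would simply invoke Proposition~\ref{p:fomon}: if $\phi$ is monotone in $B$ then $\phi \equiv \phi^{\tmono}$, and by the very definition of the translation (which first puts $\phi$ into basic form $\bigvee \dbnfofo{\Sigma}$ via Fact~\ref{fact:ofonormalform} and then outputs $\bigvee \mondbnfofo{\Sigma}{B}$) the witness $\phi^{\tmono}$ already has exactly the required shape.

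Part~2 is then the special case $B = A$. By Remark~\ref{rem:monotprodeach}, being monotone in every $a \in A$ is the same as being monotone in $A$, and by definition $\posdbnfofo{\Sigma} = \mondbnfofo{\Sigma}{A}$; hence part~1 instantiated at $B = A$ is precisely the assertion of part~2. Effectiveness in both cases is inherited from the effectiveness of the basic-form procedure (Fact~\ref{fact:ofonormalform}) together with the purely syntactic rewriting $\bigvee \dbnfofo{\Sigma} \mapsto \bigvee \mondbnfofo{\Sigma}{B}$, which is visibly computable.

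I expect no serious obstacle here: the corollary is in essence a repackaging of Propositions~\ref{p:fomon} and~\ref{p:monoismonot}. The only point requiring genuine care is the bookkeeping verification that the target formulas $\bigvee \mondbnfofo{\Sigma}{B}$ really are $B$-positive in the strict sense of the grammar of $\monot{\ofo(A)}{B}$, so that Proposition~\ref{p:monoismonot} applies; everything else is a matter of citing the already-established translation and observing that its output is literally of the claimed form.
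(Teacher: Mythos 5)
Your proposal is correct and is essentially the argument the paper intends: the paper gives no explicit proof of this corollary, remarking only that it follows from ``a careful analysis of the translation'', and your unwinding --- Proposition~\ref{p:fomon} for the left-to-right direction, the grammar check on $\tau^{B}_{S}$ plus Proposition~\ref{p:monoismonot} for the converse, and the instantiation $B=A$ via Remark~\ref{rem:monotprodeach} for part~2 --- is exactly that analysis. No gaps.
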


\subsection{Monotone fragment of $\ofoe$}

In order to prove the $\ofoe$-variant of Proposition~\ref{p:efftrans}, we need
to introduce some new notation. 

\begin{definition}
Let $B\subseteq A$ be a finite set of names. 
The $B$-monotone variant of $\dbnfofoe{\vlist{T}}{\Pi}$ is given as follows:
\begin{align*}
\mondbnfofoe{\vlist{T}}{\Pi}{B} 
  &\isdef  \exists \vlist{x}.\big(\arediff{\vlist{x}} 
     \land \bigwedge_i \tau^{B}_{T_i}(x_i) 
     \land \forall z.(\arediff{\vlist{x},z} \to \bigvee_{S\in \Pi} \tau^{B}_S(z))
     \big). 	
\end{align*}
When the set $B$ is a singleton $\{a\}$ we will write $a$ instead of $B$. 
The positive variant $\posdbnfofoe{\vlist{T}}{\Pi}$ of 
$\dbnfofoe{\vlist{T}}{\Pi}$ is defined as above but with $+$ in place of $B$.
\end{definition}

\begin{proposition}
\label{p:monofoeismonot}
There exists an effective translation $(-)^\tmono:\ofoe(A) \to \monot{\ofoe(A)}{B}$ such
that a sentence ${\phi \in \ofoe(A)}$ is monotone in $B$ if and only if 
$\phi\equiv \phi^\tmono$.
\end{proposition}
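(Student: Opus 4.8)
The plan is to follow the same pattern as the proof of Proposition~\ref{p:fomon} for $\ofo$, now relying on the richer basic form for $\ofoe$. First I would invoke Theorem~\ref{thm:bnfofoe} to assume, without loss of generality, that $\phi$ is already in the normal form $\bigvee \dbnfofoe{\vlist{T}}{\Pi}$, and then define the translation disjunct-by-disjunct by simply replacing each full type $\tau_{S}$ with its $B$-positive variant $\tau^{B}_{S}$, that is,
\[
\big(\bigvee \dbnfofoe{\vlist{T}}{\Pi}\big)^{\tmono} \isdef \bigvee \mondbnfofoe{\vlist{T}}{\Pi}{B}.
\]
By construction $\phi^{\tmono} \in \monot{\ofoe(A)}{B}$ and the map is effective, so the right-to-left direction of the proposition is immediate from Proposition~\ref{p:monoismonot}. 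For the left-to-right direction, assume $\phi$ is monotone in $B$; I must show $(D,V) \models \phi$ iff $(D,V) \models \phi^{\tmono}$. The forward inclusion is trivial, since $\tau_{S}$ logically entails $\tau^{B}_{S}$ and hence each disjunct $\dbnfofoe{\vlist{T}}{\Pi}$ entails $\mondbnfofoe{\vlist{T}}{\Pi}{B}$.

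The work is in the converse inclusion $\phi^{\tmono} \models \phi$, where monotonicity is used. Suppose $(D,V) \models \phi^{\tmono}$ and fix a disjunct with $(D,V) \models \mondbnfofoe{\vlist{T}}{\Pi}{B}$. The existential part supplies \emph{distinct} witnesses $t_{1},\dots,t_{k}$ with $(D,V) \models \tau^{B}_{T_{i}}(t_{i})$ (distinctness comes for free from $\arediff{\vlist{x}}$, so unlike the $\ofo$ case no auxiliary equivalence is needed), while the universal part guarantees that every remaining element $z \in D \setminus \{t_{1},\dots,t_{k}\}$ satisfies $\tau^{B}_{S_{z}}$ for some chosen type $S_{z} \in \Pi$. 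I would then shrink $V$ to the valuation $U$ defined through its colouring by $U^{\flat}(t_{i}) \isdef T_{i}$ and $U^{\flat}(z) \isdef S_{z}$ for $z$ outside the witness set; since $D$ is partitioned cleanly into the $t_{i}$ and the remaining $z$, the colouring $U^{\flat}$, and hence $U$, is well defined.

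It then remains to verify three things. First, under $U$ the elements $t_{1},\dots,t_{k}$ have exactly the types $T_{1},\dots,T_{k}$ and are distinct, and every other element has a type in $\Pi$; hence $(D,U) \models \dbnfofoe{\vlist{T}}{\Pi}$ and so $(D,U) \models \phi$. Second, $U \leq_{B} V$: for $a \in A \setminus B$ the colouring agrees with $V$ because $\tau^{B}_{S}$ retains all negative information for names outside $B$, while for $b \in B$ one has $b \in U^{\flat}(d)$ only when the witnessed $B$-positive type forces $d \in V(b)$, giving $U(b) \subseteq V(b)$. Third, applying monotonicity of $\phi$ in $B$ to $(D,U) \models \phi$ and $U \leq_{B} V$ yields $(D,V) \models \phi$, as desired. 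The empty-domain case is handled separately but trivially, since then the only relevant disjunct is $\dbnfofoe{\nada}{\nada}$, whose $B$-positive variant coincides with itself. I expect the only real subtlety, compared with the $\ofo$ argument, to be the element-by-element bookkeeping of the $\ofoe$ normal form: one must take care that the shrunk valuation $U$ gives the designated witnesses their \emph{exact} types $T_{i}$ (rather than merely their $B$-positive approximations) while still remaining below $V$ on all of $B$.
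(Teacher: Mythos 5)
Your proposal is correct and takes essentially the same route as the paper: the paper defines exactly the translation $\bigvee \dbnfofoe{\vlist{T}}{\Pi} \mapsto \bigvee \mondbnfofoe{\vlist{T}}{\Pi}{B}$ and defers the verification to the $\ofoei$ case (Proposition~\ref{p:mono-ofoei}), whose argument is precisely your ``shrink $V$ to the valuation $U$ determined by the witnessed $B$-positive types, check $U \leq_B V$, and apply monotonicity'' scheme. You have simply written out the adaptation that the paper leaves implicit, including the correct observation that distinctness of the witnesses comes for free from $\arediff{\vlist{x}}$ in the $\ofoe$ normal form.
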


\begin{proof}
In proposition~\ref{p:mono-ofoei} this result is proved for $\ofoei$ (i.e.,
$\ofoe$ extended with generalised quantifiers). 
It is not difficult to adapt the proof for $\ofoe$. 
The translation is defined as follows. By Theorem \ref{thm:bnfofoe} ,without loss of generality, assume that $\phi$ is in basic normal form $\bigvee \dbnfofoe{\vlist{T}}{\Pi}$. Then 
 $\phi^\tmono \isdef \bigvee \mondbnfofoe{\vlist{T}}{\Pi}{B}$.
\end{proof}

Combining the normal form for $\ofoe$ and the proof of the above proposition, we 
therefore obtain a normal form for the monotone fragment of
$\ofoe$.

\begin{corollary}\label{cor:ofoepositivenf}
For any sentence $\phi \in \ofo(A)$, the following hold.
\begin{enumerate}
\item The formula $\phi$ is monotone in $B\subseteq A$ iff it is equivalent
to a formula in the basic form $\bigvee \mondbnfofoe{\vlist{T}}{\Pi}{B}$ where
for each disjunct we have $\vlist{T} \in \wp(A)^k$ for some $k$ and 
$\Pi\subseteq\vlist{T}$.
		
\item 
The formula $\phi$ is monotone in all $a\in A$ iff it is equivalent to a formula
in the basic form 
$\bigvee \posdbnfofoe{\vlist{T}}{\Pi}$ where for each disjunct we have 
$\vlist{T} \in \wp(A)^k$ for some $k$ and $\Pi\subseteq\vlist{T}$.
\end{enumerate}
In both cases, normal forms are effective.
\end{corollary}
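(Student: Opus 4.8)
The plan is to read off both items directly from Proposition~\ref{p:monofoeismonot} by inspecting the concrete shape of the translation $(-)^{\tmono}$ exhibited in its proof. Recall that, starting from a basic form $\bigvee\dbnfofoe{\vlist{T}}{\Pi}$ for $\phi$ (available by Theorem~\ref{thm:bnfofoe}, with $\Pi\subseteq\vlist{T}$ on each disjunct), that translation is defined disjunct-by-disjunct as $\phi^{\tmono}=\bigvee\mondbnfofoe{\vlist{T}}{\Pi}{B}$. Hence $\phi^{\tmono}$ is already in exactly the basic form claimed in item~(1), with the side condition $\Pi\subseteq\vlist{T}$ inherited from the $\ofoe$ normal form. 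For the left-to-right direction of item~(1) I would then simply invoke Proposition~\ref{p:monofoeismonot}: if $\phi$ is monotone in $B$ then $\phi\equiv\phi^{\tmono}$, and $\phi^{\tmono}$ has the required shape.

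For the converse direction I would verify that every disjunct $\mondbnfofoe{\vlist{T}}{\Pi}{B}$ lies in the fragment $\monot{\ofoe(A)}{B}$. Unfolding the definition of $\tau^{B}_S$, the only negated atoms $\lnot b(x)$ occur for $b\in A\setminus(S\cup B)$, so no predicate of $B$ ever appears negatively, while the (in)equalities contributed by $\arediff{\cdot}$ carry no predicate information at all; thus the whole formula is $B$-positive. By Proposition~\ref{p:monoismonot} each such disjunct, and so their disjunction, is monotone in $B$. Since monotonicity in $B$, namely condition~\eqref{eq:mono}, is phrased purely in terms of the models of a formula under arbitrary valuations, it is preserved under logical equivalence; hence if $\phi$ is equivalent to a formula of this basic form, then $\phi$ is monotone in $B$. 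Effectiveness follows by composing the effective transformation into $\ofoe$ basic form (Theorem~\ref{thm:bnfofoe}) with the effective translation of Proposition~\ref{p:monofoeismonot}.

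Item~(2) is then the specialisation $B=A$. Here I would use the earlier observation that $\tau^{A}_S$ is equivalent to the positive type $\tau^{+}_S$, so that $\mondbnfofoe{\vlist{T}}{\Pi}{A}$ is precisely $\posdbnfofoe{\vlist{T}}{\Pi}$ and $\monot{\ofoe(A)}{A}=\ofoe^{+}$; and by Remark~\ref{rem:monotprodeach}, being monotone in every $a\in A$ coincides with being monotone in the set $A$. Item~(2) is therefore immediate from item~(1) taken at $B=A$. I do not expect a genuine obstacle: the substantive content has already been established in Proposition~\ref{p:monofoeismonot}, and the only work left is the syntactic bookkeeping—checking that the translated formula sits in the stated shape and that its disjuncts are $B$-positive—both of which are routine once the definition of $\tau^{B}_S$ is unwound.
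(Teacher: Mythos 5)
Your proposal is correct and follows essentially the same route as the paper, which obtains the corollary precisely by combining the effective $\ofoe$ normal form (Theorem~\ref{thm:bnfofoe}) with the disjunct-by-disjunct translation $\phi^{\tmono}=\bigvee\mondbnfofoe{\vlist{T}}{\Pi}{B}$ from Proposition~\ref{p:monofoeismonot}; your additional checks (that each $\mondbnfofoe{\vlist{T}}{\Pi}{B}$ is $B$-positive so that Proposition~\ref{p:monoismonot} applies, and that item~(2) is the case $B=A$ via Remark~\ref{rem:monotprodeach} and $\tau^{A}_S\equiv\tau^{+}_S$) are exactly the bookkeeping the paper leaves implicit.
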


\subsection{Monotone fragment of $\ofoei$}

First, in this case too we introduce some notation for the positive variant of
a sentence in normal form.

\begin{definition}
Let $B\subseteq A$ be a finite set of names. 
The $B$-positive variant of $\dbnfofoei{\vlist{T}}{\Pi}{\Sigma}$ is given as 
follows:
\begin{align*}
    \mondbnfofoei{\vlist{T}}{\Pi}{\Sigma}{B} 
  & \isdef  \mondbnfofoe{\vlist{T}}{\Pi \cup \Sigma}{B} 
       \land \mondbnfinf{\Sigma}{B}
\\  \mondbnfofoe{\vlist{T}}{\Lambda}{B} 
  & \isdef  \exists \vlist{x}.\big(\arediff{\vlist{x}} 
        \land \bigwedge_i \tau^{B}_{T_i}(x_i) 
	\land \forall z.(\arediff{\vlist{x},z} 
	    \to \bigvee_{S\in\Lambda} \tau^{B}_S(z))
	\big)
\\ \mondbnfinf{\Sigma}{B} 
  &\isdef  \bigwedge_{S\in\Sigma} \qu y.\tau^{B}_S(y) 
       \land \dqu y.\bigvee_{S\in\Sigma} \tau^{B}_S(y).
\end{align*}
When the set $B$ is a singleton $\{a\}$ we will write $a$ instead of $B$. 
The positive variant of $\dbnfofoei{\vlist{T}}{\Pi}{\Sigma}$ is defined as 
$\posdbnfofoei{\vlist{T}}{\Pi}{\Sigma} \isdef  
\mondbnfofoei{\vlist{T}}{\Pi}{\Sigma}{A}$.
\end{definition}

\noindent
We are now ready to proceed with the proof of the $\ofoei$-variant of
Proposition~\ref{p:efftrans} and thus to give the translation.

\begin{proposition}
\label{p:mono-ofoei}
There is an effective translation $(-)^\tmono:\ofoei(A) \to \monot{\ofoei(A)}{B}$
such that a sentence ${\phi \in \ofoei(A)}$ is monotone in $B$ if and only if
$\phi\equiv \phi^\tmono$.
\end{proposition}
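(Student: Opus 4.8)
The plan is to follow the same template established in the proofs of Proposition~\ref{p:fomon} (the $\ofo$ case) and invoked in Proposition~\ref{p:monofoeismonot} (the $\ofoe$ case), now for the richer normal form of $\ofoei$. First I would use Theorem~\ref{thm:bfofoei} to assume without loss of generality that $\phi$ is in basic form $\bigvee \dbnfofoei{\vlist{T}}{\Pi}{\Sigma}$, and moreover invoke Proposition~\ref{prop:bfofoei-sigmapi} so that $\Sigma \subseteq \Pi \subseteq \vlist{T}$ in each disjunct. I then define the translation disjunct-wise by
\[
\big(\bigvee \dbnfofoei{\vlist{T}}{\Pi}{\Sigma}\big)^{\tmono}
\isdef \bigvee \mondbnfofoei{\vlist{T}}{\Pi}{\Sigma}{B},
\]
replacing each $A$-type $\tau_S$ by its $B$-positive counterpart $\tau^{B}_S$ throughout the existential block, the universal block, and the $\qu/\dqu$ block. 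Syntactically, $\mondbnfofoei{\vlist{T}}{\Pi}{\Sigma}{B}$ lies in $\monot{\ofoei(A)}{B}$ by construction, so $\phi^{\tmono}$ is a $B$-positive sentence, and the right-to-left direction of the biconditional is immediate from Proposition~\ref{p:monoismonot}.

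For the left-to-right direction I assume $\phi$ is monotone in $B$ and show $(D,V) \models \phi$ iff $(D,V) \models \phi^{\tmono}$. The forward implication is trivial because each $\dbnfofoei{\vlist{T}}{\Pi}{\Sigma}$ entails its $B$-positive relaxation. For the converse, suppose $(D,V) \models \mondbnfofoei{\vlist{T}}{\Pi}{\Sigma}{B}$ for some disjunct. The strategy, exactly as in Proposition~\ref{p:fomon}, is to \emph{shrink} the valuation $V$ to a valuation $U$ that realises the genuine (non-$B$-positive) types dictated by the witnesses, verify that $(D,U) \models \dbnfofoei{\vlist{T}}{\Pi}{\Sigma}$ hence $(D,U) \models \phi$, establish $U \leq_{B} V$, and conclude $(D,V) \models \phi$ by monotonicity. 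Concretely, from the existential block I extract distinct witnesses $t_1,\dots,t_n$ with $(D,V) \models \tau^{B}_{T_i}(t_i)$; from the universal block every remaining element $d$ carries some $\tau^{B}_{S_d}$ with $S_d \in \Pi$; and from $\mondbnfinf{\Sigma}{B}$ I know that for each $S \in \Sigma$ infinitely many elements satisfy $\tau^B_S$ and only finitely many fall outside $\bigcup_{S\in\Sigma}\tau^B_S$. I then define $U^{\flat}(d) \isdef T_i$ on the witnesses and $U^{\flat}(d) \isdef S_d$ elsewhere, so that $U$ realises precisely the true types $T_i$, $S_d \in \Pi$, and the $\Sigma$-types infinitely often. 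The inequality $U \leq_{B} V$ follows by the same two-line argument as in the $\ofo$ case: for $a \in A\setminus B$ we have $d \in U(a) \iff a \in U^{\flat}(d) \iff (D,V)\models a(d)$, while for $b \in B$ membership $d \in U(b)$ only forces $(D,V)\models b(d)$ in one direction, which is exactly $U \leq_B V$.

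The main obstacle I anticipate is verifying the \emph{infinite} and \emph{cofinite} clauses of $\dbnfinf{\Sigma}$ survive the passage from $V$ to $U$, since the $\ofo$ argument only handled existential witnesses and a surjectivity condition. I need to check that shrinking types preserves that each $S \in \Sigma$ is realised infinitely often in $(D,U)$ and that the complement of $\bigcup_{S \in \Sigma} S$ stays finite; this requires choosing $U$ so that the infinitely many $\tau^B_S$-elements are assigned exactly type $S$, and that the finitely many ``exceptional'' elements receive types in $\Pi$, using $\Sigma \subseteq \Pi$ from Proposition~\ref{prop:bfofoei-sigmapi} to keep everything consistent with $\dbnfofoe{\vlist{T}}{\Pi \cup \Sigma}$. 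I would also need the analogue of the distinctness/representative remark (the appeal to Proposition~\ref{p-1P} in the $\ofo$ proof, with its $\ofoei$ counterpart) to justify that the witnesses can be taken distinct and that the guessed-down model is genuinely $\ofoei$-equivalent; handling the empty-domain case $D = \nada$ separately, where $\Sigma = \Pi = \nada$ and $\mondbnfofoei{\nada}{\nada}{\nada}{B} = \dbnfofoei{\nada}{\nada}{\nada}$, closes the argument. Effectiveness is then inherited from the effectiveness of the normal form (Theorem~\ref{thm:bfofoei}) together with Fact~\ref{f:decido}.
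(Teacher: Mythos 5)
Your proposal matches the paper's own argument essentially step for step: the same disjunct-wise translation $\bigvee \mondbnfofoei{\vlist{T}}{\Pi}{\Sigma}{B}$, the same appeal to Proposition~\ref{p:monoismonot} for one direction, and for the other the same partition of the domain into witnesses, infinite $\Sigma$-blocks, and a finite $\Pi$-part, followed by shrinking $V$ to $U$ with $U^{\flat}(d) \isdef S_d$, checking $U \leq_B V$, and invoking $B$-monotonicity. The extra care you flag about the $\qu/\dqu$ clauses and the empty domain is exactly what the paper's partition handles, so the proof is correct and takes the same route.
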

\begin{proof}
By Theorem \ref{thm:bfofoei}, we assume that $\phi$ is in the normal form
$\bigvee\dbnfofoei{\vlist{T}}{\Pi}{\Sigma} = 
\dbnfofoe{\vlist{T}}{\Pi \cup \Sigma} \land \dbnfinf{\Sigma}$
for some sets of types $\Pi,\Sigma \subseteq \wp(A)$ and each $T_i \subseteq A$.
For the translation we define
\[
\Big(\bigvee \dbnfofoei{\vlist{T}}{\Pi}{\Sigma}\Big)^\tmono\isdef  
\bigvee \mondbnfofoei{\vlist{T}}{\Pi}{\Sigma}{B}.
\]
From the construction it is clear that $\phi^\tmono \in \monot{\ofoei(A)}{B}$
and therefore the right-to-left direction of the proposition is immediate by 
Proposition~\ref{p:monoismonot}. 
For the left-to-right direction assume that $\phi$ is monotone in $B$, we 
have to prove that $(D,V) \models \phi$ if and only if $(D,V) \models 
\phi^\tmono$.

\bigskip
\noindent \fbox{$\Rightarrow$} This direction is trivial.

\bigskip
\noindent \fbox{$\Leftarrow$}
Assume $(D,V) \models \phi^\tmono$, and in particular that $(D,V) \models 
\mondbnfofoei{\vlist{T}}{\Pi}{\Sigma}{B}$. 
If $D = \nada$, then $\Sigma = \Pi = \vlist{T} = \nada$ and 
$\mondbnfofoei{\vlist{T}}{\Pi}{\Sigma}{B}= \dbnfofoei{\vlist{T}}{\Pi}{\Sigma}$. 
Hence, assume $D \neq \nada$.
Observe that the elements of $D$ can be partitioned in the following way:
\begin{enumerate}[(a)]
\itemsep 0 pt
\item distinct elements $t_i \in D$ such that each $t_i$ satisfies 
   $\tau^{B}_{T_i}(x)$,
\item\label{it:dpi} for every $S \in \Sigma$ an infinite set $D_S$,
   such that every $d \in D_S$ satisfies $\tau^{B}_{S}$,
\item\label{it:ds} 
a \emph{finite} set $D_\Pi$ of elements, each satisfying one of the $B$-positive
   types $\tau^{B}_{S}$ with $S \in \Pi$.
\end{enumerate}
Following this partition, with every element $d\in D$ we may associate a type
$S_{d}$ in, respectively, (a)~$\vlist{T}$, (b)~$\Sigma$, or (c)~$\Pi$, such 
that $d$ satisfies $\tau^{B}_{S_{d}}$.
As in the proof of proposition~\ref{p:fomon}, we now consider the valuation $U$ 
defined as $U^{\flat}(d) \isdef S_d$, and as before we can show that
$U \leq_{B} V$.
Finally, it  easily from the definitions that $(D,U) \models 
\dbnfofoei{\vlist{T}}{\Pi}{\Sigma}$, implying that $(D,U) \models \phi$.
But then by the assumed $B$-monotonicity of $\phi$ it is immediate that $(D,V) 
\models \phi$, as required.
\end{proof}

As with the previous two cases, the translation provides normal forms for the monotone fragment of $\ofoei$.

\begin{corollary}\label{cor:ofoeipositivenf}
For any sentence $\phi \in \ofoei(A)$, the following hold:
\begin{enumerate}
\item 
The formula $\phi$ is monotone in $B \subseteq A$ iff it is equivalent to
a formula $\bigvee \mondbnfofoei{\vlist{T}}{\Pi}{\Sigma}{B}$ for 
$\Sigma\subseteq\Pi \subseteq \wp(A)$ and $\vlist{T} \in \wp(A)^k$ for some $k$.
\item 
The formula $\phi$ is monotone in every $a\in A$ iff it is equivalent to a
formula in the basic form $\bigvee \posdbnfofoei{\vlist{T}}{\Pi}{\Sigma}$ for 
types $\Sigma\subseteq \Pi \subseteq \wp(A)$ and $\vlist{T} \in \wp(A)^k$ for 
some $k$.
\end{enumerate}
In both cases, normal forms are effective.
\end{corollary}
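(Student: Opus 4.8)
The plan is to obtain the corollary as a direct consequence of Proposition~\ref{p:mono-ofoei}, with the strengthened normal form of Proposition~\ref{prop:bfofoei-sigmapi} supplying the extra inclusion $\Sigma \subseteq \Pi$.

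For the first item I would argue both implications separately. For the right-to-left direction I would observe that, by inspection of its syntactic shape, every disjunct $\mondbnfofoei{\vlist{T}}{\Pi}{\Sigma}{B}$ belongs to the fragment $\monot{\ofoei(A)}{B}$: it is built entirely from the $B$-positive types $\tau^{B}_S$, which carry no negative literal $\neg b(x)$ for any $b \in B$, together with the connectives and quantifiers permitted in the $B$-positive grammar. Since that grammar is closed under disjunction, the whole sentence $\bigvee \mondbnfofoei{\vlist{T}}{\Pi}{\Sigma}{B}$ is $B$-positive, hence monotone in $B$ by Proposition~\ref{p:monoismonot}. For the left-to-right direction I would start from a sentence $\phi$ monotone in $B$ and apply Proposition~\ref{p:mono-ofoei}, which yields $\phi \equiv \phi^{\tmono}$; by the very definition of the translation $(-)^{\tmono}$, the sentence $\phi^{\tmono}$ is of the required form $\bigvee \mondbnfofoei{\vlist{T}}{\Pi}{\Sigma}{B}$.

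The one point needing care --- and the only real (if mild) obstacle --- is securing the side condition $\Sigma \subseteq \Pi$. Here I would invoke Proposition~\ref{prop:bfofoei-sigmapi} on the input: before translating, I may assume the basic form $\bigvee \dbnfofoei{\vlist{T}}{\Pi}{\Sigma}$ already satisfies $\Sigma \subseteq \Pi \subseteq \vlist{T}$. Because the translation leaves the parameters $\vlist{T}, \Pi, \Sigma$ unchanged and merely replaces each $\tau_S$ by $\tau^{B}_S$, the inclusion $\Sigma \subseteq \Pi$ is inherited by $\phi^{\tmono}$, as wanted.

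The second item I would treat as the instance $B = A$ of the first. On the syntactic side, the definition $\posdbnfofoei{\vlist{T}}{\Pi}{\Sigma} = \mondbnfofoei{\vlist{T}}{\Pi}{\Sigma}{A}$ matches the two normal forms; on the semantic side, Remark~\ref{rem:monotprodeach} tells me that $\phi$ is monotone in every $a \in A$ exactly when it is monotone in the set $A$, so the equivalence follows from the first item with $B = A$. Finally, effectiveness is inherited throughout: the basic form is computed by Theorem~\ref{thm:bfofoei}, the passage to $\Sigma \subseteq \Pi$ is the syntactic rewrite of Proposition~\ref{prop:bfofoei-sigmapi}, and $(-)^{\tmono}$ is effective by Proposition~\ref{p:mono-ofoei}, so composing these steps gives an effective procedure producing the stated normal forms.
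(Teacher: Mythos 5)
Your proposal is correct and follows essentially the same route as the paper: the paper's own proof is a one-line remark that the corollary follows from the translation of Proposition~\ref{p:mono-ofoei} (together with Proposition~\ref{p:monoismonot} for the easy direction), applying Proposition~\ref{prop:bfofoei-sigmapi} to the basic form \emph{before} translating so that $\Sigma \subseteq \Pi$ is preserved. Your write-up merely spells out the details (including the $B=A$ instance via Remark~\ref{rem:monotprodeach}) that the paper leaves implicit.
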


\begin{proof}
We only remark that to obtain $\Sigma\subseteq\Pi$ in the above normal forms 
it is enough to use Proposition~\ref{prop:bfofoei-sigmapi} before applying 
the translation.
\end{proof}



\section{Continuity}
\label{sec-cont}

In this section we study the sentences that are \emph{continuous} in some set 
$B$ of monadic predicate symbols.

\begin{definition}\label{def:cont}
Let $U$ and $V$ be two $A$-valuations on the same domain $D$.
For a set $B \subseteq A$, we write $U \leq^{\omega}_{B} V$ if $U \leq_{B} V$ 
and $U(b)$ is finite, for every $b \in B$.

Given a monadic logic $\llang$ and a formula $\phi \in \llang(A)$ we say that
$\phi$ is \emph{continuous in $B \subseteq A$} if $\phi$ is monotone in $B$
and satisfies the following:
\begin{equation}
\label{eq:cont}
\text{if } (D, V), g \models \phi \text{ then } 
(D,  U), g \models \phi \text{ for some } U \leq^{\omega}_{B} V.
\end{equation}
for every monadic model $(D,V)$ and every assignment $g:\fovar\to D$.
\end{definition}

\begin{remark}\label{rem:contprodeach}
As for monotonicity, but with slightly more effort, one may show that a formula 
$\phi$ is continuous in a set $B$ iff it is continuous in every $b \in B$.
\end{remark}

What explains both the name and the importance of this property is its 
equivalence to so called \emph{Scott continuity}. 
To understand it, we may formalise the dependence of the meaning of a  monadic 
sentence $\phi$ with $m$-free variables $\vlist{x}$ in a one-step model $\osmodel
= (D,V )$ on a fixed name $b \in A$ as a map $\phi^\osmodel_b : \wp(D) \to 
\wp(D^m)$ defined by 
\[
X \subseteq D \mapsto \{ \vlist{d} \in D^m \mid  
(D,V[b \mapsto X]) \models \phi(\vlist{d}) \}.
\]
One can then verify that a sentence $\phi$ is continuous in $b$ if and only if 
the operation  $\phi^\osmodel_b$ is continuous with respect to the Scott 
topology on the powerset algebras\footnote{%
   The interested reader is referred to \cite[Sec. 8]{FV12} for a more precise 
   discussion of the connection.}. 
Scott continuity is of key importance in many areas of theoretical computer 
sciences where ordered structures play a role, such as domain theory (see 
e.g. \cite{abramsky1994domain}).

Similarly as for monotonicity, the semantic property of continuity can also be 
provided with a corresponding syntactical formulation.

\begin{definition}
Let $\llang \in \{ \ofo, \ofoe \}$
The fragment of $\ofo(A)$ of formulas that are 
\emph{syntactically continuous} in a subset $B \subseteq A$ is 
defined by the following grammar:
\[
\phi \defbnf \psi 
   \mid b(x) 
   \mid \phi \land \phi \mid \phi \lor \phi
   \mid \exists x.\phi,
\]
where $b\in B$ and $\psi \in \llang(A\setminus B)$. 
In both cases, we let $\cont{\llang(A)}{B}$ denote the set of $B$-continuous 
sentences.
\end{definition}

To define the syntactically continuous fragment of $\ofoei$, we first introduce
the following binary generalised quantifier $\wqu$: given two formulas $\phi(x)$ 
and $\psi$, we set
\[
\wqu x.(\phi,\psi) \isdef \forall x.(\phi(x) \lor \psi(x)) \land \dqu x.\psi(x).
\]
The intuition behind $\wqu$ is the following. If $(D,V),g \models \wqu x.(\phi,
\psi)$, then because of the second conjunct there are only finitely many $d \in
D$ refuting $\psi$. 
The point is that this weakens the universal quantification of the first conjunct
to the effect that only the finitely many mentioned elements refuting $\psi$ need
to satisfy $\phi$.

\begin{definition}
The fragment of $\ofoei(A)$-formulas that are 
\emph{syntactically continuous} in a subset $B \subseteq A$ is given by the 
following grammar:
\[
\phi \defbnf \psi 
   \mid a(x) 
   \mid \phi \land \phi \mid \phi \lor \phi 
   \mid \exists x.\phi \mid \wqu x.(\phi,\psi),
\]
where $b\in B$ and $\psi \in \ofoei(A\setminus B)$. 
We let $\cont{\ofoei(A)}{B}$ denote the set of $B$-continuous $\ofoei$-sentences.
\end{definition}

For $\ofo$ and $\ofoe$, the equivalence between the semantical and syntactical 
properties of continuity was established by van Benthem 
in \cite{van1997dynamic}. 

\begin{proposition}\label{fact:vb}
Let $\phi$ be a sentence of the monadic logic $\llang(A)$, where $\llang \in 
\{ \ofo, \ofoe\}$.
Then $\phi$ is continuous in a set $B \subseteq A$ if and only if there is a 
equivalent sentence $\phi^{\tcont} \in \cont{\llang(A)}{B}$.
\end{proposition}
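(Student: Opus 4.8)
The plan is to follow the normal-form methodology already used for monotonicity in Propositions~\ref{p:fomon} and~\ref{p:monofoeismonot}, splitting the biconditional into an easy and a hard direction. For the easy direction I would check, by induction on the $B$-continuous grammar, that every $\phi \in \cont{\llang(A)}{B}$ is continuous in $B$. The base cases are immediate: a $B$-free $\psi$ is witnessed by taking $U(b)=\nada$ for all $b\in B$, and $b(x)$ by taking $U(b)=\{g(x)\}$; the clauses $\land,\lor$ follow by taking unions of the finite $B$-witnesses and invoking monotonicity. The decisive clause is $\exists x.\phi$: a single existential witness is supported by a finite $B$-fragment of $V$, which is exactly why the fragment admits $\exists$ but forbids $\forall$ over $B$-dependent subformulas, since $\forall x.\,b(x)$ forces $V(b)=D$ and fails continuity on infinite domains. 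As $\cont{\llang(A)}{B}\subseteq\monot{\llang(A)}{B}$, every continuous formula is in particular monotone in $B$, a fact I reuse below.

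For the hard direction I would define an effective translation $(-)^{\tcont}$ on basic forms. Consider first $\ofoe$. By Theorem~\ref{thm:bnfofoe} assume $\phi=\bigvee\dbnfofoe{\vlist{T}}{\Pi}$ and set
\[
\phi^{\tcont}\isdef\bigvee\,\exists\vlist{x}.\Big(\arediff{\vlist{x}}\land\bigwedge_i\tau^{B}_{T_i}(x_i)\land\forall z.\big(\arediff{\vlist{x},z}\to\bigvee_{S\in\Pi}\tau^{B}_{S\setminus B}(z)\big)\Big).
\]
The key point is that $\tau^{B}_{S\setminus B}$ mentions no name of $B$ (it fixes the $A\setminus B$-reduct of $z$ and is silent on $B$), so the subformula $\forall z.(\arediff{\vlist{x},z}\to\bigvee\cdots)$ is a $B$-free $\psi$ with free variables $\vlist{x}$; together with the $B$-positive witnesses $\tau^{B}_{T_i}(x_i)$ this places $\phi^{\tcont}$ in $\cont{\ofoe(A)}{B}$. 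Here equality is essential: it lets me carve the finitely many $B$-carrying witnesses $\vlist{x}$ out of the universal constraint via $\arediff{\vlist{x},z}$, so that only the cofinitely many remaining elements are pushed into $B$-free types.

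It remains to prove $\phi\equiv\phi^{\tcont}$ under continuity. The inclusion $\phi\models\phi^{\tcont}$ uses monotonicity only: passing to the monotone normal form $\bigvee\mondbnfofoe{\vlist{T}}{\Pi}{B}$ of $\phi$, every non-witness element already satisfies $\bigvee_{S\in\Pi}\tau^{B}_S$, hence a fortiori $\bigvee_{S\in\Pi}\tau^{B}_{S\setminus B}$, so each disjunct of the monotone form entails the corresponding disjunct of $\phi^{\tcont}$. The converse $\phi^{\tcont}\models\phi$ is the crux and is where continuity beyond monotonicity enters. Given $(D,V)\models\phi^{\tcont}$ I would build $U\leq_B V$ by keeping the $B$-predicates $T_i\cap B$ of the chosen witnesses and deleting every $B$-predicate from all other elements; then each witness has full type $T_i$ and each non-witness has a $B$-free type $S\setminus B$ with $S\in\Pi$. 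To conclude $(D,U)\models\dbnfofoe{\vlist{T}}{\Pi}$ one needs the reducts $S\setminus B$ to be admissible fillers, i.e.\ $\Pi$ to be closed under deletion of $B$. The heart of the argument is that continuity forces precisely this closure: realising a type $S\in\Pi$ with $S\cap B\neq\nada$ on infinitely many elements and then shrinking $V{\rest}B$ to a finite set, as clause~\eqref{eq:cont} permits, leaves cofinitely many elements of reduced type $S\setminus B$, which must therefore remain admissible for $\phi$. Once $(D,U)\models\phi$ is established, $U\leq_B V$ and monotonicity of $\phi$ give $(D,V)\models\phi$.

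The case of $\ofo$ is analogous, based on Fact~\ref{fact:ofonormalform}, and is slightly more delicate exactly because there is no equality to separate the witnesses from the bulk, so the universal conjunct must range over \emph{all} elements. Writing $\phi=\bigvee\dbnfofo{\Sigma}$, I would use the same $B$-silent types and set the universal part of each disjunct of $\phi^{\tcont}$ to $\forall x.\bigvee_{S\in\Sigma}\tau^{B}_{S\setminus B}(x)$, whose disjuncts are $B$-free and are satisfied by $B$-carrying witnesses as well; the same closure-under-deletion-of-$B$ property, supplied by continuity, guarantees that after removing $B$ from every element the realised type-set stays inside $\Sigma$, yielding $(D,U)\models\dbnfofo{\Sigma}$ and hence $\phi^{\tcont}\models\phi$ through monotonicity, while $\phi\models\phi^{\tcont}$ is read off as before. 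Finally, the biconditional closes because if $\phi\equiv\phi^{\tcont}$ then $\phi$ is equivalent to a member of $\cont{\llang(A)}{B}$ and so is continuous by the easy direction. I expect the main obstacle to be exactly the derivation of the deletion-closure of the type data $\Pi$ (resp.\ $\Sigma$) from the finite-witness clause~\eqref{eq:cont}, carried out uniformly across the disjunction; since this is van Benthem's theorem~\cite{van1997dynamic}, one could also simply cite it, but the normal-form route has the advantage of producing the effective translation and of prefiguring the $\ofoei$ construction.
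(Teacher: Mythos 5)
Your easy direction is fine and coincides with the paper's Proposition~\ref{p:coniscont}, and your treatment of $\ofo$ is essentially the paper's own normal-form argument (Proposition~\ref{prop:ofocont}), up to a cosmetic difference in the universal conjunct (you use the stripped types $\tau^{B}_{S\setminus B}$ for all $S\in\Sigma$ where the paper restricts to $\Sigma^{-}_{B}=\{S\in\Sigma\mid S\cap B=\nada\}$). The divergence, and the gap, is in the $\ofoe$ case. The paper does \emph{not} prove the $\ofoe$ half of Proposition~\ref{fact:vb} by a normal-form translation: it runs van Benthem's compactness argument (replace $b(x)$ by $\bigvee_{\ell<k}x\foeq y_\ell$, form the set of approximants $\Phi$, and extract a finite disjunction from the inconsistency of $\lnot\Phi\cup\{\phi\}$), which is inherently non-effective; the existence of an effective translation into $\cont{\ofoe(A)}{B}$ is exactly what the paper leaves open (see the conjecture following Theorem~\ref{t:cont} and the missing entry in Table~\ref{tab:1}). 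So the centrepiece of your hard direction for $\ofoe$ is a claim the authors explicitly could not establish, and you would need to either prove it in full or fall back, as you suggest, on citing~\cite{van1997dynamic} -- which is in substance what the paper does.

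The concrete gap is in your ``deletion-closure'' step. Applying clause~\eqref{eq:cont} to a single model of $\dbnfofoe{\vlist{T}}{\Pi}$ realising each $S\in\Pi$ infinitely often yields \emph{one} valuation $U\leq^{\omega}_{B}V$ with $(D,U)\models\phi$; in that model (i) the designated witnesses may have \emph{lost} their $B$-predicates, (ii) finitely many non-witnesses may have \emph{kept} theirs, and (iii) every stripped type $S\setminus B$ is realised infinitely often. Soundness of your translation requires something much stronger: that \emph{every} configuration consisting of witnesses of exact types $\vlist{T}$ together with an \emph{arbitrary} multiset of elements of types $S\setminus B$ ($S\in\Pi$) -- including multisets where a new type $S\setminus B\notin\Pi$ occurs only once or twice -- lands in a $\sim^{=}_{k}$-class contained in $\ext{\phi}$. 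In $\ofo$ this is immediate because only the \emph{set} of realised types matters; in $\ofoe$ the classes count realisations up to $k$, so passing from ``$S\setminus B$ realised infinitely often'' to ``$S\setminus B$ realised $m<k$ times'' changes the class, and bridging that requires a further argument (using monotonicity to re-add $B$-predicates to all but $m$ of the infinitely many stripped elements, together with the fact that each $\Pi$-type already occurs $k$ times in $\vlist{T}$ so its truncated count is unaffected). None of this is in your sketch; ``which must therefore remain admissible for $\phi$'' is precisely the assertion that needs proof. As written, the $\ofoe$ half of your argument is not a proof; it becomes one only if you replace the normal-form route by the compactness argument, or supply the missing class-by-class analysis -- in which case you would be resolving the paper's open conjecture, a claim that demands correspondingly more care.
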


\begin{proof}
The direction from right to left is covered by Proposition \ref{p:coniscont} 
below, so we immediately turn to the completeness part of the statement.
The case of $\ofo$ being treated in Subsection \ref{subsec:conofo}, we only discuss  the statement for $\ofoe$.
Hence, let $\phi \in \ofoe(A)$ be continuous in 
$B$. 
For simplicity in the exposition, we assume $B=\{b\}$, the case of an arbitrary 
$B$ being easily generalisable from what follows. 
Let $\vlist{y} \isdef y_0 \dots y_{k-1}$ be a list of $k$ variables not
occurring in $\phi$. 
Consider the formula $\phi_k( \vlist{y} )$ obtained from $\phi$  by substituting 
each occurrence of an atomic formula of the form $b(x)$ with the formula 
$\bigvee_{\ell < k} x = y_\ell$.
Define $\Phi \isdef \{ \exists \vlist{y}.\phi_k( \vlist{y} ) \mid k \in \omega \} \cup \{ \phi_{\emodel}\}$, 
where $ \phi_{\emodel} \isdef \forall x. \bot$ if $\emodel \models \phi$ and 
$ \phi_{\emodel} \isdef \exists x. \bot$ otherwise. 
By construction $\Phi \subset \cont{\ofoe(A)}{B}$.
Now, notice that $\lnot \Phi \cup \{\phi\}$ is inconsistent. 
Hence, by compactness of first-order logic, there is a $k \in \omega$ such that
$\phi \models \bigvee_{\ell < k} \exists \vlist{y}.\phi_k( \vlist{y} ) \lor 
  \phi_{\emodel}$. 
By monotonicity, $\exists \vlist{y}.\phi_k( \vlist{y} ) \models \phi$, for every
$k \in \omega$, and by definition $\phi_{\emodel} \models \phi$.
We therefore conclude that $\phi \equiv \bigvee_{\ell < k} \exists \vlist{y}.
\phi_k( \vlist{y} ) \lor \phi_\emodel$. 
As $\cont{\ofoe(A)}{B}$ is closed under disjunctions, this ends the proof of 
the statement.
\end{proof}

In this paper, we extend such a characterisation to $\ofoei$. 
Moreover, analogously to what we did in the previous section, for $\ofo$ and 
$\ofoei$ we provide both an explicit translation and a decidability result. 
From this latter perspective, the case of $\ofoe$ remains however open.

\begin{theorem}
\label{t:cont}
Let $\phi$ be a sentence of the monadic logic $\llang(A)$, where $\llang \in 
\{ \ofo, \ofoei \}$.
Then $\phi$ is continuous in a set $B \subseteq A$ if and only if there is a 
equivalent sentence $\phi^{\tcont} \in \cont{\llang(A)}{B}$.
Furthermore, it is decidable whether a sentence $\phi \in \llang(A)$ has this 
property or not.
\end{theorem}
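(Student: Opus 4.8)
The plan is to mirror the proof of Theorem~\ref{t:mono}: split the biconditional into a soundness half and a completeness half, and then read off decidability from the effectiveness of the completeness translation together with Fact~\ref{f:decido}. The right-to-left (soundness) direction asserts that every sentence in the syntactic fragment $\cont{\llang(A)}{B}$ is continuous in $B$; this is the continuity-analogue of Proposition~\ref{p:monoismonot}, which I would isolate as a separate soundness proposition (Proposition~\ref{p:coniscont}). Its proof is a routine induction on the structure of a $B$-continuous formula, the only genuinely new cases beyond monotonicity being $\exists x.\phi$ and $\wqu x.(\phi,\psi)$. For $\exists x.\phi$ a single witness, together with the inductively obtained finite shrinking for that witness, already suffices; and the definition $\wqu x.(\phi,\psi) \isdef \forall x.(\phi \lor \psi) \land \dqu x.\psi$ is engineered precisely so that only the finitely many elements refuting $\psi$ must satisfy $\phi$, so one may shrink $V$ on $B$ to accommodate just those finitely many elements.

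For the left-to-right (completeness) direction I would prove the stronger statement that there is an effective translation $(-)^{\tcont}\colon \llang(A) \to \cont{\llang(A)}{B}$ with $\phi \equiv \phi^{\tcont}$ whenever $\phi$ is continuous in $B$; this is where the content of the theorem resides, and it is supplied by Proposition~\ref{prop:ofocont} for $\ofo$ and by Proposition~\ref{lem:ofoeictrans} for $\ofoei$. The recipe is the standard one of the paper: by Fact~\ref{fact:ofonormalform} (respectively Theorem~\ref{thm:bfofoei}) assume $\phi = \bigvee \dbnf\cdots$ is in basic form, and define $\phi^{\tcont}$ disjunct by disjunct by weakening every clause that forces a $B$-predicate to hold \emph{cofinitely}---the universal clause in the $\ofo$ case, and the $\qu$/$\dqu$ clauses of $\dbnfinf{\Sigma}$ in the $\ofoei$ case---into an existential or $\wqu$-style clause demanding only finitely many $B$-witnesses. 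One half of the equivalence $\phi \equiv \phi^{\tcont}$ is immediate from soundness; for the other, assuming $(D,V) \models \phi^{\tcont}$, one builds exactly as in Propositions~\ref{p:fomon} and~\ref{p:mono-ofoei} a shrunk valuation $U$ with $U \leq_{B} V$ realising a genuine disjunct of $\phi$, so $(D,U) \models \phi$, whence $(D,V) \models \phi$ by monotonicity. The extra feature over the monotone case is that the continuous normal form lets one take $U \leq^{\omega}_{B} V$, i.e.\ finite on $B$.

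Decidability then follows immediately: since $(-)^{\tcont}$ is effectively computable and $\phi^{\tcont}$ is automatically continuous by soundness, $\phi$ is continuous in $B$ if and only if $\phi \equiv \phi^{\tcont}$, and this equivalence is decidable by Fact~\ref{f:decido}. I would also note why $\ofoe$ is deliberately absent from the statement: the only available completeness argument there (Fact~\ref{fact:vb}) runs through compactness and produces the bound $k$ non-constructively, so it yields neither an effective translation nor decidability.

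I expect the main obstacle to be the completeness direction for $\ofoei$, and in particular calibrating the continuous normal form so that the interaction between the infinity quantifier and $B$-continuity is captured faithfully. Unlike the monotone case, where it suffices to erase negative $B$-information using the $B$-positive types $\tau^{B}_{S}$, here the shrunk valuation $U$ must be simultaneously $B$-positive \emph{and} finite on every $b \in B$; this forces the clauses governing the infinitely-realised types in $\Sigma$ to be rewritten with $\wqu$ in place of $\dqu$, and verifying that this rewriting both lands in $\cont{\ofoei(A)}{B}$ and preserves truth for continuous $\phi$ is the delicate step.
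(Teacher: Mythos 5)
Your architecture is the paper's: a soundness proposition proved by induction (Proposition~\ref{p:coniscont}), an effective completeness translation split into the $\ofo$ case (Proposition~\ref{prop:ofocont}) and the $\ofoei$ case (Proposition~\ref{lem:ofoeictrans}), and decidability from effectiveness plus Fact~\ref{f:decido}; your remark on why $\ofoe$ is excluded is also exactly the paper's. But the concrete mechanism you describe for the completeness translation is inverted, and as stated it would fail. You propose to \emph{weaken} the clauses forcing a $B$-predicate to hold cofinitely into ``an existential or $\wqu$-style clause demanding only finitely many $B$-witnesses''. In $\ofo$ no such clause exists: the language has neither equality nor $\wqu$, so ``only finitely many witnesses'' is not expressible, and indeed the paper's $\cont{\ofo(A)}{B}$ contains only $\exists$. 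What the paper actually does is \emph{strengthen} each disjunct: for $\ofo$ it replaces $\mondbnfofo{\Sigma}{B}$ by $\mondgbnfofo{\Sigma}{\Sigma^{-}_{B}}{B}$, restricting the universal clause to the types $\Sigma^{-}_{B}$ disjoint from $B$ (so that the universal conjunct becomes $B$-free); for $\ofoei$ it replaces a disjunct outright by $\bot$ whenever $B \cap \bigcup\Sigma \neq \nada$, and only the surviving disjuncts get rewritten with $\wqu$ to exhibit membership in $\cont{\ofoei(A)}{B}$.

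Consequently the distribution of work in proving $\phi \equiv \phi^{\tcont}$ is the opposite of what you plan. Since $\phi^{\tcont}$ is logically \emph{stronger} than $\phi$, the direction $\phi^{\tcont} \Rightarrow \phi$ is the trivial one ($\Sigma^{-}_{B} \subseteq \Sigma$, respectively ``disjuncts are unchanged or $\bot$''); your proposed shrunk-valuation argument for that direction is just the monotonicity proof over again and carries no continuity-specific content. The substantive direction is $\phi \Rightarrow \phi^{\tcont}$, and it is exactly here that continuity is used: one invokes \eqref{eq:cont} to replace $V$ by some $U \leq^{\omega}_{B} V$ with $(D,U) \models \phi$ --- finiteness of $U(b)$ is what continuity \emph{gives} you, not something the normal form ``lets you take''. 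For $\ofoei$ this immediately contradicts any conjunct $\qu y.\tau^{B}_{S}(y)$ with $S \cap B \neq \nada$, justifying the $\bot$-replacement. For $\ofo$ an extra device is needed that your sketch omits entirely: since a finite model need not violate any finiteness constraint, the paper first passes to the $\ofo$-equivalent $\omega$-product $(D \times \omega, V_{\pi})$ (Proposition~\ref{p-1P}) and argues that, because each element has infinitely many copies while $U(b)$ is finite for $b \in B$, every element must realise a $B$-free type in $\Sigma^{-}_{B}$. Without this blow-up the restricted universal clause cannot be established. So the gap is not in the overall decomposition but in the completeness translation itself: you need a strengthening translation with the $\omega$-product argument (for $\ofo$) and the $\bot$-pruning of disjuncts (for $\ofoei$), not a weakening recovered by monotonicity.
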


Analogously to the previous case of monotonicity, the proof of the theorem is 
composed of two parts.  
We start with the right-left implication of the first claim (the preservation
statement), which also holds for $\ofoe$.

\begin{proposition}\label{p:coniscont}
Every sentence $\phi \in \cont{\llang(A)}{B}$ is continuous in $B$,
where $\llang\in\{ \ofo, \ofoe, \ofoei \}$.
\end{proposition}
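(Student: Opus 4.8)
The plan is to verify the two ingredients of continuity separately: monotonicity in $B$, and the shrinking property \eqref{eq:cont}. For the first, I would observe that every formula generated by the grammar for $\cont{\llang(A)}{B}$ is in fact $B$-positive, since the $B$-free component $\psi$ contains no occurrence of any $b \in B$, and the remaining clauses ($b(x)$, $\land$, $\lor$, $\exists$, and in the $\ofoei$ case $\wqu x.(\phi,\psi)$, whose unfolding $\forall x.(\phi \lor \psi) \land \dqu x.\psi$ keeps every $b \in B$ positive) introduce no negated $B$-atom. Hence $\cont{\llang(A)}{B} \subseteq \monot{\llang(A)}{B}$, and monotonicity in $B$ is immediate from Proposition~\ref{p:monoismonot}.

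For the shrinking property I would argue by induction on the structure of $\phi$, proving the slightly more general statement for formulas (not only sentences) relative to an arbitrary assignment $g$: whenever $(D,V),g \models \phi$ there is a valuation $U \leq^{\omega}_{B} V$ with $(D,U),g \models \phi$. The base cases are direct: if $\phi = \psi$ is $B$-free, take $U$ to agree with $V$ off $B$ and send every $b \in B$ to $\nada$ (the truth of $\psi$ is unaffected, as $\psi$ mentions no name in $B$); if $\phi = b(x)$ with $b \in B$, take $U(b) \isdef \{g(x)\}$ and $U(b') \isdef \nada$ for the other $b' \in B$. In both cases the $B$-components are finite, so $U \leq^{\omega}_{B} V$. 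The disjunction and existential cases follow directly from the induction hypothesis. The conjunction case is where I would invoke monotonicity: given witnessing valuations $U_1, U_2$ for the two conjuncts, their $B$-wise union $U$ (finite, being a finite union of finite sets) dominates both, so monotonicity transports each conjunct up to $U$.

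The main obstacle, and the only genuinely new case beyond the first-order argument, is $\phi = \wqu x.(\phi',\psi)$. Here I would set $F \isdef \{d \in D \mid (D,V),g[x\mapsto d] \not\models \psi(x)\}$; the second conjunct of $\wqu$ guarantees that $F$ is finite, while the first conjunct forces $(D,V),g[x\mapsto d] \models \phi'(x)$ for every $d \in F$. Applying the induction hypothesis to $\phi'$ at each of the finitely many $d \in F$ yields witnesses $U_d \leq^{\omega}_{B} V$, and I would define $U(b) \isdef \bigcup_{d\in F} U_d(b)$ for $b \in B$ and $U(a) \isdef V(a)$ otherwise. This $U$ lies in $\leq^{\omega}_{B} V$ because $F$ is finite and each $U_d(b)$ is finite. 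Since $\psi$ is $B$-free, its truth value is unchanged when passing from $V$ to $U$, so the set of elements refuting $\psi$ under $U$ is still exactly $F$ (keeping the $\dqu$-conjunct true), while monotonicity moves each $\phi'(x)$-witness from $U_d$ up to $U$; consequently every $d \in D$ satisfies $\phi'(x) \lor \psi(x)$ under $U$. Hence $(D,U),g \models \wqu x.(\phi',\psi)$, completing the induction. The empty-domain case is trivial, as $V$ already sends every name to $\nada$, so $U \isdef V$ works.
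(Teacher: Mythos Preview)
Your proposal is correct and follows essentially the same approach as the paper: monotonicity is obtained from Proposition~\ref{p:monoismonot} via $B$-positivity, and the shrinking property is proved by structural induction with the same treatment of each case, including the union-of-finitely-many-witnesses argument for $\land$ and for $\wqu$. The only cosmetic difference is that in the $\wqu$ case you apply the induction hypothesis directly to $\phi'$ on the finite set $F$ of $\psi$-refuters, whereas the paper applies it to $\alpha = \phi' \lor \psi$ and then restricts the union to the same set; the two arguments are interchangeable.
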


\begin{proof}
First observe that $\phi$ is monotone in $B$ by 
Proposition~\ref{p:monoismonot}.
The case for $D=\nada$ being clear, we assume $D\neq\nada$.
We show, by induction, that any one-step formula $\phi$ in the fragment (which
may not be a sentence) satisfies \eqref{eq:cont}, for every non-empty one-step
model $(D, V)$ and assignment ${ g:\fovar\to D}$.
\begin{enumerate}[\textbullet]
\item 
If $\phi = \psi \in \llang(A\setminus B)$, changes in the $B$ part of the 
valuation will not affect the truth value of $\phi$ and hence the condition is 
trivial. 

\item
Case $\phi = b(x)$ for some $b \in B$: if $(D, V), g \models b(x)$ then 
$g(x)\in  V(b)$. 
Let $U$ be the valuation given by $U(b) \isdef \{ g(x) \}$,
$U(a) \isdef \nada$ for $a \in B \setminus \{b \}$ and
$U(a) \isdef V(a)$ for $a \in A \setminus B$.
Then it is obvious that $(D,  U), g \models b(x)$, while it is immediate by the
definitions that $U \leq^{\omega}_{B} V$.

\item 
Case $\phi = \phi_1 \lor \phi_2$: assume $(D, V), g \models \phi$. 
Without loss of generality we can assume that $(D, V), g \models \phi_1$ and 
hence by induction hypothesis there is $U \leq^{\omega}_{B} V$ such that $(D, U),
g \models \phi_1$ which clearly implies $(D, U), g \models 
\phi$. 

\item 
Case $\phi = \phi_1 \land \phi_2$: assume $(D, V), g \models \phi$. 
By induction hypothesis we have $U_1,U_2 \leq^{\omega}_{B} V$ such that 
$(D,U_{1}), g \models \phi_1$ and $(D, U_2), g \models \phi_2$.
Let $U$ be the valuation defined by putting $U(a) \isdef U_{1}(a) \cup U_{2}$; 
then clearly we have $U \leq^{\omega}_{B} V$, while it follows by monotonicity
that $(D,U), g \models \phi_1$ and $(D, U), g \models \phi_2$.
Clearly then $(D, U), g \models \phi$.

\item
Case $\phi = \exists x.\phi'(x)$ and $(D, V), g \models \phi$. 
By definition there exists $d\in D$ such that $(D, V), g[x\mapsto d] \models 
\phi'(x)$. 
By induction hypothesis there is a valuation $U \leq^{\omega}_{B} V$ such that 
$(D, U), g[x\mapsto d] \models \phi'(x)$ and hence $(D, U), g \models 
\exists x.\phi'(x)$.
\item
Case $\phi = \wqu x.(\phi',\psi)\in \cont{\ofoei(A)}{B}$ and $(D, V), g \models \phi$.  Define the formulas $\alpha(x)$ and 
$\beta$ as follows:
\[
\phi = 
\forall x.\underbrace{(\phi'(x) \lor \psi(x))}_{\alpha(x)} \land 
\underbrace{\dqu x.\psi(x)}_\beta.
\]

Suppose that $(D, V), g \models \phi$. 
By the induction hypothesis, for every $d \in D$ which  satisfies $(D, V), g_d
\models \alpha(x)$ (where we write $ g_d \isdef g[x\mapsto d]$) there is a 
valuation $U_d \leq^{\omega}_{B} V$ such that $(D, U_d), g_d \models \alpha(x)$. 
The crucial observation is that because of $\beta$, only finitely many elements
of $d$ refute $\psi(x)$. 
Let $U$ be the valuation defined by putting $U(a) \isdef \bigcup \{U_d(a) \mid 
(D, V), g_d \not\models \psi(x) \}$. 
Note that for each $b \in B$, the set $U(b)$ is a finite union of finite sets, 
and hence finite itself; it follows that $U \leq^{\omega}_{B} V$.
We claim that
\begin{equation}
\label{eq:1801}
(D, U), g \models \phi.
\end{equation}
It is clear that $(D, U), g \models \beta$ because $\psi$ (and hence $\beta$) 
is $B$-free.
To prove that $(D, U), g \models \forall x\, \alpha(x)$, take an arbitrary $d 
\in D$, then we have to show that $(D, U), g_d \models \phi'(x) \lor \psi(x)$. 
We consider two cases: If $(D, V), g_d \models \psi(x)$ we are done, again
because $\psi$ is $B$-free.
On the other hand, if $(D, V), g_d \not\models \psi(x)$, then $(D, U_d), g_d 
\models \alpha(x)$ by assumption on $U_{d}$, while it is obvious that $U_{d} 
\leq_{B} U$; but then it follows by monotonicity of $\alpha$ that $(D, U), g_d
\models \alpha(x)$.

\end{enumerate}
This finishes the proof.
\end{proof}

The second part of the proof of the theorem, is thus constituted by the following stronger version of the expressive completeness result that provides as a corollary normal forms for the syntactically continuous fragments.
\begin{proposition}
\label{p:efftranscont}
Let $\llang$ be one of the logics $\{ \ofo, \ofoei \}$.
There exists an effective translation $(-)^\tcont:\llang(A) \to \cont{\llang(A)}{B}$ such that
a sentence ${\phi \in \llang(A)}$ is continuous in $B \subseteq A$ if and only if 
$\phi\equiv \phi^\tcont$.
\end{proposition}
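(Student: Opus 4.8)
The plan is to follow the template already used for monotonicity (Propositions~\ref{p:fomon}--\ref{p:mono-ofoei}), treating $\ofo$ and $\ofoei$ in separate subsections and in each case defining the translation point-wise on a basic normal form. The right-to-left implication of the stated equivalence is then free: by construction $\phi^\tcont$ will lie in $\cont{\llang(A)}{B}$, so Proposition~\ref{p:coniscont} guarantees it is continuous in $B$, and hence so is any $\phi$ equivalent to it. All the real work lies in defining $(-)^\tcont$ and proving the forward implication.

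First I would fix normal forms: by Fact~\ref{fact:ofonormalform} (resp.\ Theorem~\ref{thm:bfofoei}) assume $\phi = \bigvee \dbnfofo{\Sigma}$ (resp.\ $\phi = \bigvee \dbnfofoei{\vlist{T}}{\Pi}{\Sigma}$, with $\Sigma \subseteq \Pi$ by Proposition~\ref{prop:bfofoei-sigmapi}), and define $(-)^\tcont$ disjunct by disjunct. The guiding idea is that continuity in $B$ means the predicates of $B$ may be used only on a finite, existentially witnessed part of the model, whereas the \emph{bulk} must be $B$-free. Accordingly, in each disjunct I would keep the $B$-positive existential witnesses ($\exists x.\tau^{B}_S(x)$, and the block $\exists \vlist{x}.(\arediff{\vlist{x}} \land \bigwedge_i \tau^{B}_{T_i}(x_i) \land \cdots)$ for $\ofoei$), but weaken the universal constraint on the remaining elements to a $B$-free one, requiring only that the restriction of their type to $A \setminus B$ belong to the $B$-free projection of $\Sigma$ (resp.\ $\Pi$). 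For $\ofo$ this $B$-free universal is admissible in $\cont{\ofo(A)}{B}$ through the $\psi$-clause of the grammar; for $\ofoei$ I would phrase the bulk constraint with the binary quantifier $\wqu$, whose second (obligatory) argument must be $B$-free, so that only finitely many elements may still carry $B$-colours, recording the infinitude of each $\Sigma$-type as a $B$-free conjunct (legitimate precisely because continuity will force $\Sigma$ to be $B$-free). With these definitions $\phi \models \phi^\tcont$ holds unconditionally, since any model of a disjunct of $\phi$ satisfies its own $B$-relaxation.

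The substance is the forward implication: assuming $\phi$ continuous in $B$, prove $\phi^\tcont \models \phi$. Given $(D,V) \models \phi^\tcont$, I would fix the satisfied disjunct; this yields finitely many $B$-coloured witnesses together with an infinite $B$-free bulk (and, for $\ofoei$, finitely many $\wqu$-exceptions). I would then build a valuation $U \leq_B V$ by truncating each witness to its nominal type and stripping every $B$-predicate from the bulk, and argue that $(D,U)$ satisfies the original disjunct, hence $\phi$; monotonicity of $\phi$ (which follows from continuity) then transfers satisfaction back to $(D,V)$, exactly as in the proof of Proposition~\ref{p:fomon}. This mirrors the ``shrink the valuation'' move of the monotone case, with continuity supplying the extra ingredient needed to keep the stripped model inside the formula.

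The main obstacle is justifying that last step, namely that the $B$-stripped model $(D,U)$ really satisfies the original disjunct of $\phi$. Unlike the monotone case, where the universal part of the normal form directly assigned every element a \emph{full} type in $\Sigma$, here the bulk comes equipped only with a $B$-free type, so I must show that continuity of $\phi$ forces the relevant type set to be closed under deletion of $B$-predicates on its infinite part. I expect to extract this from the finite-support clause of continuity applied to suitable models: intuitively, an infinitely realised type cannot carry $B$-colours if $B$ is to be finitely supported. For $\ofoei$ this is also where the coordination between the infinity quantifier and finite $B$-support is most delicate, since one must simultaneously keep the $\Sigma$-types $B$-free, confine the $B$-colours to the finitely many witnesses and $\wqu$-exceptions, and verify that the resulting $\wqu$-formula both lies in $\cont{\ofoei(A)}{B}$ and captures the original meaning under the continuity hypothesis.
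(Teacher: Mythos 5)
Your overall architecture (normal forms, point-wise translation, the easy direction via Proposition~\ref{p:coniscont}) matches the paper, but the step you yourself flag as ``the main obstacle'' is a genuine gap, and the route you propose for closing it does not work as stated. You orient the translation as a \emph{weakening} of each disjunct (so that $\phi \models \phi^{\tcont}$ is free and the content is $\phi^{\tcont} \models \phi$), and you plan to prove the latter by showing that the $B$-stripped model $(D,U)$ satisfies \emph{the same disjunct} it came from; this requires the type set $\Sigma$ of that disjunct to be closed under $S \mapsto S \setminus B$. But continuity is a property of the whole disjunction, not of its disjuncts, and individual disjuncts of a continuous sentence need not have this closure property. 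For instance, with $A = \{a,b\}$ and $B = \{b\}$, the sentence $\mondbnfofo{\{\{b\}\}}{B} \lor \mondbnfofo{\{\nada\}}{B}$ is equivalent to its second disjunct (``nonempty and nothing is $a$''), hence continuous in $B$, yet $\{\{b\}\}$ is not closed under deleting $b$; such redundant disjuncts do arise from the normal form procedure. So ``$(D,U)$ satisfies the original disjunct'' can fail, and any repair must argue that $(D,U)$ satisfies \emph{some} disjunct, for which your plan gives no handle.

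The paper sidesteps this by orienting the translation the other way: it \emph{strengthens} the universal part of each disjunct, restricting the admissible types to $\Sigma^{-}_{B} = \{S \in \Sigma \mid S \cap B = \nada\}$, so that $\phi^{\tcont} \models \phi$ is the trivial direction and continuity is used to prove $\phi \models \phi^{\tcont}$ model by model. The device you are missing is the $\omega$-product $(D \times \omega, V_{\pi})$, which is $\ofo$-equivalent to $(D,V)$ and in which every element has infinitely many clones: after the finite-support clause of continuity shrinks the $B$-predicates to finite sets, each element has a clone carrying no $B$-colours, and the universal clause applied to that clone pins down a type in $\Sigma^{-}_{B}$ that the original element also realises. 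This is precisely the ``suitable model'' your sketch gestures at but does not construct. For $\ofoei$ your intuition that continuity forces $B \cap \bigcup \Sigma = \nada$ is correct and is essentially the whole argument there (the normal form already isolates the infinitely realised types, so no $\omega$-product is needed and the paper's translation simply replaces offending disjuncts by $\bot$, the surviving ones being rewritable with $\wqu$); but as written your $\ofoei$ translation again relaxes the universal constraint and inherits the same per-disjunct problem.
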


We prove the two manifestations of Proposition \ref{p:efftranscont} separately,
in two respective subsections.

By putting together the two propositions above, we are thence able to conclude.
\begin{proofof}{Theorem \ref{t:cont}}
The first claim follows from  Proposition~\ref{p:efftranscont}. 
Hence, by applying Fact \ref{f:decido} to Proposition \ref{p:efftranscont}, the 
problem of checking whether a sentence $\phi \in \llang(A)$ is continuous in 
$B \subseteq A$ or not, is decidable.
\end{proofof}

We conjecture that Proposition \ref{p:efftranscont}, and therefore
Theorem \ref{t:cont}, holds also for $\llang=\ofoe$.

\subsection{Continuous fragment of $\ofo$}\label{subsec:conofo}

Since continuity implies monotonicity, by Theorem \ref{t:mono}, in order to 
verify the $\ofo$-variant of Proposition \ref{p:efftranscont}, it is enough to
proof the following result.

\begin{proposition}\label{prop:ofocont}
There is an effective translation $(-)^\tcont:\monot{\ofo(A)}{B} \to 
\cont{\ofo(A)}{B}$ such that a sentence $\phi \in \monot{\ofo(A)}{B}$ is
continuous in $B \subseteq A$ if and only if $\phi\equiv \phi^\tcont$.
\end{proposition}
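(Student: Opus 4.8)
The plan is to work with sentences already in the monotone normal form guaranteed by Corollary~\ref{cor:ofopositivenf}, so I would begin by assuming, without loss of generality, that $\phi = \bigvee \mondbnfofo{\Sigma}{B}$. Since continuity in $B$ entails monotonicity in $B$ (Definition~\ref{def:cont}), every continuous sentence is already equivalent to one of this shape, and the task reduces to transforming each disjunct $\mondbnfofo{\Sigma}{B}$ into a $B$-continuous formula while tracking exactly when this preserves truth. The key semantic intuition is that continuity in $B$ forbids asserting the existence of infinitely many (or even unboundedly many) $B$-coloured witnesses via a universal demand; in the $\ofo$ setting the only mechanism for a universal constraint is the conjunct $\forall x.\bigvee_{S\in\Sigma}\tau^{B}_S(x)$, which, when some type $S\in\Sigma$ carries genuine $B$-information, can force every element to bear a $B$-predicate — something no $B$-shrinking can undo.

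First I would define the translation $(\cdot)^\tcont$ on each disjunct by splitting $\Sigma$ into the types that are $B$-free (those $S$ with $S\cap B=\nada$, call them $\Sigma_0$) and the rest. The natural candidate is to keep the existential witnesses $\bigwedge_{S\in\Sigma}\exists x.\tau^{B}_S(x)$, since existential demands on $B$-witnesses are harmless for continuity and already fall inside the grammar of $\cont{\ofo(A)}{B}$, but to weaken the universal conjunct so that it only constrains elements using the $B$-free types $\Sigma_0$. Concretely I would replace $\forall x.\bigvee_{S\in\Sigma}\tau^{B}_S(x)$ by $\forall x.\bigvee_{S\in\Sigma_0}\tau_S(x)$ (a genuinely $B$-free universal statement, hence legal in the fragment), and set
\[
(\mondbnfofo{\Sigma}{B})^\tcont \isdef
\bigwedge_{S\in\Sigma}\exists x.\tau^{B}_S(x) \;\land\;
\forall x.\bigvee_{S\in\Sigma_0}\tau_S(x).
\]
One checks syntactically that this lies in $\cont{\ofo(A)}{B}$: the existential conjuncts are built from $b(x)$-atoms under $\exists$ and $\land$, and the universal conjunct is a $B$-free $\psi$.

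The heart of the argument, and the step I expect to be the main obstacle, is proving the equivalence $\phi\equiv\phi^\tcont$ exactly when $\phi$ is continuous in $B$. The right-to-left direction is immediate from Proposition~\ref{p:coniscont}, since $\phi^\tcont\in\cont{\ofo(A)}{B}$ is continuous. For left-to-right I would assume $\phi$ continuous and prove the two inclusions of the models. That $\phi\models\phi^\tcont$ holds because $\phi^\tcont$ is obtained by weakening the universal conjunct, so any model of a disjunct of $\phi$ still satisfies the corresponding weakened disjunct. The delicate converse, $\phi^\tcont\models\phi$, is where continuity must be used: given $(D,V)\models(\mondbnfofo{\Sigma}{B})^\tcont$, I would build from the witnesses a \emph{small} submodel-like valuation $U\leq^{\omega}_{B}V$ that realises exactly the types in $\Sigma$ and satisfies $\dbnfofo{\Sigma}$, hence $\phi$; then continuity of $\phi$ (applied in reverse via monotonicity, or directly by exhibiting that $\phi$ cannot distinguish $V$ from its finite-$B$ reduct) forces $(D,V)\models\phi$. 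The technical subtlety is handling the elements whose $B$-free part is not covered by $\Sigma_0$: these must be shown to carry, under $V$, some $B$-predicate allowing them to be absorbed into a type of $\Sigma$, and it is precisely the failure of this that corresponds to $\phi$ \emph{not} being continuous. I would make this rigorous by a case analysis showing that if a disjunct $\mondbnfofo{\Sigma}{B}$ has a type in $\Sigma\setminus\Sigma_0$ appearing in the universal part in an essential way, then that disjunct is not continuous and can be safely discarded or rewritten, mirroring the counting argument used for the $\ofoe$ and $\ofoei$ cases via compactness in Proposition~\ref{fact:vb}.
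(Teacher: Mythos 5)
Your overall plan---keep the existential conjuncts and cut the universal conjunct down to the $B$-free types---is the same as the paper's, but two things in the execution are genuinely wrong. First, the formula you propose for the universal part, $\forall x.\bigvee_{S\in\Sigma_0}\tau_S(x)$, uses the \emph{full} types $\tau_S$ rather than the $B$-positive types $\tau^{B}_{S}$. For $S$ with $S\cap B=\nada$ the formula $\tau_S(x)$ still contains the conjuncts $\lnot b(x)$ for every $b\in B$ (since $B\subseteq A\setminus S$), so your universal conjunct is \emph{not} $B$-free, does not lie in $\cont{\ofo(A)}{B}$, and in fact asserts that $V(b)=\nada$ for all $b\in B$. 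This breaks the equivalence: with $A=B=\{b\}$, the sentence $\exists x.\,b(x)$ is continuous in $B$ and its monotone normal form has disjuncts with $\Sigma=\{\{b\}\}$ and $\Sigma=\{\{b\},\nada\}$; your translation sends both to unsatisfiable formulas, whereas the paper's $\bigvee\mondgbnfofo{\Sigma}{\Sigma^{-}_{B}}{B}$ (with $\tau^{B}_{S}$ throughout, and $\tau^{B}_{\nada}=\top$ here) correctly returns $\exists x.\,b(x)$.

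Second, you have the two directions of the equivalence reversed. Since $\Sigma^{-}_{B}\subseteq\Sigma$, restricting the universal disjunction \emph{strengthens} the universal conjunct rather than weakening it; hence $\phi^{\tcont}\models\phi$ is the trivial inclusion, and $\phi\models\phi^{\tcont}$ is the direction that needs continuity. Your sketch spends the continuity hypothesis on the trivial direction (and the argument you describe there---shrink to $U\leq^{\omega}_{B}V$ satisfying $\dbnfofo{\Sigma}$ and conclude by monotonicity---is really just a monotonicity argument), while the genuinely hard step is left without a mechanism. The paper's key idea for that step is missing from your proposal: one passes to the $\omega$-inflated model $(D\times\omega,V_\pi)$, which is $\ofo$-equivalent to $(D,V)$; continuity yields $U\leq^{\omega}_{B}V_\pi$ with $(D\times\omega,U)\models\mondbnfofo{\Sigma}{B}$, and then the finiteness of each $U(b)$ against the infinitely many copies of each point guarantees that every element has a $B$-clean copy, forcing its type (modulo $B$) to lie in $\Sigma^{-}_{B}$. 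Without some such argument your per-disjunct case analysis (``a disjunct using a type in $\Sigma\setminus\Sigma_0$ essentially is not continuous and can be discarded'') does not go through, since continuity is a property of the whole disjunction $\phi$, not of its individual disjuncts.
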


\begin{proof}
By Corollary \ref{cor:ofopositivenf}, to define the translation we assume, without loss of generality, that $\phi$
is in the basic form $\bigvee \mondbnfofo{\Sigma}{B}$.
For the translation, let
\[
(\bigvee \mondbnfofo{\Sigma}{B})^\tcont \isdef 
\bigvee \mondgbnfofo{\Sigma}{\Sigma^{-}_{B}}{B}
\]
where $\Sigma^{-}_{B} \isdef \{S\in \Sigma \mid B \cap S = \nada \}$.
From the construction it is clear that $\phi^\tcont \in \cont{\ofo(A)}{B}$ and
therefore the right-to-left direction of the proposition is immediate by 
Proposition~\ref{p:coniscont}. 

For the left-to-right direction assume that $\phi$ is continuous in $B$, we have
to prove that $(D, V) \models \phi$ iff $(D, V) \models \phi^\tcont$, for every 
one-step model $(D, V)$.
Our proof strategy consists of proving the same equivalence for the model 
$(D\times \omega, V_\pi)$, where $D\times\omega$ consists of $\omega$ many 
copies of each element in $D$ and $V_\pi$ is the valuation given by $V_{\pi}(a) 
\isdef \{(d,k) \mid d\in  V(a), k\in\omega\}$.
It is easy to see that $(D, V) \equiv^{\ofo} (D\times \omega, V_\pi)$ (see 
Proposition~\ref{p-1P}) and so it suffices indeed to prove that
\[
(D\times\omega, V_\pi) \models \phi
\text{ iff }
(D\times\omega, V_\pi) \models \phi^\tcont.
\]
Consider first $D= \nada$.
Then $(D\times\omega, V_\pi) = \emodel$, and therefore the claim is true since 
$\mondbnfofo{\nada}{B} = \mondgbnfofo{\nada}{\nada^{-}_{B}}{B}$ and
$\emodel \models  \mondbnfofo{\Sigma}{B}$ iff $\Sigma=\nada$. 
Hence, assume $D \neq \nada$.
\bigskip

\noindent \fbox{$\Rightarrow$}
Let $(D\times\omega, V_\pi) \models \phi$.
As $\phi$ is continuous in $B$ there is a valuation $U \leq^{\omega}_{B} V_\pi$ 
satisfying $(D\times\omega, U) \models \phi$. 
This means that  $(D\times\omega, U) \models \mondbnfofo{\Sigma}{B}$ for some
disjunct $\mondbnfofo{\Sigma}{B}$ of $\phi$.
Below we will use the following fact (which can easily be verified):
\begin{equation}
\label{eq:con101}
(D\times\omega),U \models \tau^{B}_{S}(d,k) \text{ iff }
S \setminus B = U^{\flat}(d,k) \setminus B \text{ and }
S \cap B \subseteq U^{\flat}(d,k).
\end{equation}

Our claim is now that $(D\times\omega, U) \models 
\mondgbnfofo{\Sigma}{\Sigma^{-}_{B}}{B}$. 

The existential part of $\mondgbnfofo{\Sigma}{\Sigma^{-}_{B}}{B}$ is trivially 
true. 
To cover the universal part, it remains to show that every element of 
$(D\times\omega, U)$ realizes a $B$-positive type in $\Sigma^{-}_{B}$.
Take an arbitrary pair $(d,k) \in D\times\omega$ and let $T$ be the (full) type 
of $(d,k)$, that is, let $T \isdef U^{\flat}(d,k)$.
If $B \cap T = \nada$ then trivially $T\in \Sigma^{-}_{B}$ and we are done. 
So suppose $B \cap T \neq \nada$. 
Observe that in $D\times\omega$ we have infinitely many copies of $d\in D$.
Hence, as $U(b)$ is finite for every $b \in B$, there must be some $(d,k')$ 
with type $U^{\flat}(d,k') = V_{\pi}^{\flat}(d,k') \setminus B = 
V_{\pi}^{\flat}(d,k) \setminus B = T \setminus B$.
It follows from $(D\times\omega, U) \models \mondbnfofo{\Sigma}{B}$ and 
\eqref{eq:con101} that there is some $S \in \Sigma$ such that
$S \setminus B = U^{\flat}(d,k') \setminus B = U^{\flat}(d,k')$ and 
$S \cap B \subseteq U^{\flat}(d,k) \cap B = \nada$.
From this we easily derive that $S = U^{\flat}(d,k')$ and $S \in \Sigma^{-}_{B}$.
Finally, we observe that $S \setminus B = U^{\flat}(d,k') \setminus B =
U^{\flat}(d,k) \setminus B$ and $S \cap B = \nada \subseteq U^{\flat}(d,k)$, 
so that by \eqref{eq:con101} we find that $D \times \omega,U) \models 
\tau^{B}_{S}(d,k)$ indeed.

Finally, by monotonicity it directly follows from
$(D\times\omega, U) \models \mondgbnfofo{\Sigma}{\Sigma^{-}_{B}}{B}$
that 
$(D\times\omega, V_{\pi}) \models \mondgbnfofo{\Sigma}{\Sigma^{-}_{B}}{B}$,
and from this it is immediate that $(D\times\omega, V_\pi) \models \phi^\tcont$.
\bigskip

\noindent \fbox{$\Leftarrow$}
Let $(D\times\omega, V_\pi) \models 
\mondgbnfofo{\Sigma}{\Sigma^{-}_{B}}{B}$.
To show that $(D\times\omega, V_\pi) \models \mondbnfofo{\Sigma}{B}$, the 
existential part is trivial. For the universal part just observe that 
$\Sigma^{-}_{B} \subseteq \Sigma$.
\end{proof}

A careful analysis of the translation gives us the following corollary,
providing normal forms for the continuous fragment of $\ofo$.

\begin{corollary}\label{cor:ofocontinuousnf}
For any sentence $\phi \in \ofo(A)$, the following hold.
\begin{enumerate}
\item 
The formula $\phi$ is continuous in $B \subseteq A$ iff it is equivalent to a 
formula $\bigvee \mondgbnfofo{\Sigma}{\Sigma^{-}_{B}}{B}$ for some types $\Sigma
\subseteq \wp(A)$, where $\Sigma^{-}_{B} \isdef \{S\in \Sigma \mid B \cap S = 
\nada \}$.
\item 
If $\phi$ is monotone in $A$ 
then $\phi$ is continuous in $B \subseteq A$ iff it is equivalent to a formula 
in the basic form $\bigvee \posdgbnfofo{\Sigma}{\Sigma^{-}_{B}}$ for some types 
$\Sigma \subseteq \wp(A)$, where $\Sigma^{-}_{B} \isdef \{S\in \Sigma \mid B \cap
S = \nada \}$.
\end{enumerate}
\end{corollary}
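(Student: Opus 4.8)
The plan is to derive both items by stitching together the monotone normal forms of Corollary~\ref{cor:ofopositivenf} with the continuity translation $(-)^\tcont$ from Proposition~\ref{prop:ofocont}, together with the preservation direction of Proposition~\ref{p:coniscont}; no genuinely new model-theoretic work is needed for item~(1). For item~(1), left to right, I would use that continuity in $B$ entails monotonicity in $B$ by Definition~\ref{def:cont}, so Corollary~\ref{cor:ofopositivenf}(1) already presents $\phi$ as $\bigvee \mondbnfofo{\Sigma}{B}$; feeding this basic form to Proposition~\ref{prop:ofocont} and invoking continuity once more gives $\phi \equiv (\bigvee \mondbnfofo{\Sigma}{B})^\tcont = \bigvee \mondgbnfofo{\Sigma}{\Sigma^{-}_{B}}{B}$, which is exactly the required shape. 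The converse is immediate: such a disjunction is by definition the value of $(-)^\tcont$, hence lies in $\cont{\ofo(A)}{B}$ by Proposition~\ref{prop:ofocont}, and is therefore continuous in $B$ by Proposition~\ref{p:coniscont}.

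For item~(2) I would first settle the right-to-left direction, which does not use the standing hypothesis that $\phi$ is monotone in $A$: in any disjunct $\posdgbnfofo{\Sigma}{\Sigma^{-}_{B}} = \mondgbnfofo{\Sigma}{\Sigma^{-}_{B}}{A}$ the existential conjuncts $\exists x.\,\tau^{A}_S(x)$ contain only the positive atoms of $\tau^{A}_S = \tau^+_S$, while every $S \in \Sigma^{-}_{B}$ satisfies $S \cap B = \nada$, so the universal conjunct $\forall x. \bigvee_{S \in \Sigma^{-}_{B}} \tau^{A}_S(x)$ is a formula of $\ofo(A\setminus B)$; hence the disjunction lies in $\cont{\ofo(A)}{B}$ and Proposition~\ref{p:coniscont} makes it continuous in $B$. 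For the left-to-right direction I would start from Corollary~\ref{cor:ofopositivenf}(2), which rewrites the monotone-in-$A$ sentence $\phi$ as $\bigvee \posdbnfofo{\Sigma} = \bigvee \mondbnfofo{\Sigma}{A}$, and then rerun the argument of Proposition~\ref{prop:ofocont} with the fully positive types $\tau^{A}_S$ in place of the $B$-positive types $\tau^{B}_S$, while keeping the continuity set equal to $B$ so that the universal part is cut down to $\Sigma^{-}_{B}$; the target is $\phi \equiv \bigvee \mondgbnfofo{\Sigma}{\Sigma^{-}_{B}}{A} = \bigvee \posdgbnfofo{\Sigma}{\Sigma^{-}_{B}}$.

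The hard part, and really the only step that needs checking, is this last adaptation of Proposition~\ref{prop:ofocont} with \emph{mixed} decorations: $A$ governing positivity and $B$ governing the continuity cut-down. Exactly as there, I would pass to $(D\times\omega, V_\pi)$ (the empty-domain case being handled verbatim) and use continuity in $B$ to obtain $U \leq^{\omega}_{B} V_\pi$ with $(D\times\omega, U) \models \mondbnfofo{\Sigma}{A}$ for some disjunct. The analogue of~\eqref{eq:con101} now simplifies, since $\tau^{A}_S$ carries no negative information: $(D\times\omega), U \models \tau^{A}_S(d,k)$ holds iff $S \subseteq U^{\flat}(d,k)$. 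Given an arbitrary $(d,k)$ of type $T = U^{\flat}(d,k)$, the finiteness of each $U(b)$ with $b \in B$ together with the infinitely many copies of $d$ produces a copy $(d,k')$ of type $T \setminus B$; the universal part of $\mondbnfofo{\Sigma}{A}$ then yields $S \in \Sigma$ with $S \subseteq T \setminus B$, forcing $S \cap B = \nada$, so $S \in \Sigma^{-}_{B}$, while still $S \subseteq T = U^{\flat}(d,k)$. Thus every element realises $\bigvee_{S \in \Sigma^{-}_{B}} \tau^{A}_S$, and monotonicity in $A$ lifts this from $U$ back to $V_\pi$; the trivial inclusion $\Sigma^{-}_{B} \subseteq \Sigma$ handles the opposite implication, completing $\phi \equiv \bigvee \posdgbnfofo{\Sigma}{\Sigma^{-}_{B}}$.
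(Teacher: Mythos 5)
Your proposal is correct and follows exactly the route the paper intends: the paper derives this corollary by ``a careful analysis of the translation'' of Proposition~\ref{prop:ofocont}, which is precisely your item~(1), and your item~(2) supplies the one detail the paper leaves implicit, namely rerunning that proof with the fully positive types $\tau^{A}_S$ (so that the key fact \eqref{eq:con101} simplifies to $S \subseteq U^{\flat}(d,k)$) while the continuity cut-down to $\Sigma^{-}_{B}$ is still governed by $B$. Your verification of that mixed case, including the use of the copy $(d,k')$ of type $T\setminus B$ and the final lift from $U$ to $V_\pi$ by monotonicity, is sound.
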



\subsection{Continuous fragment of $\ofoei$}

As for the previous case, the $\ofoei$-variant of
Proposition \ref{p:efftranscont} is an immediate consequence of 
Theorem \ref{t:mono} and the following proposition.

\begin{proposition}\label{lem:ofoeictrans}
There is an effective translation $(-)^\tcont:\monot{\ofoei(A)}{B} \to 
\cont{\ofoei(A)}{B}$ such that a sentence $\phi \in \monot{\ofoei(A)}{B}$ is 
continuous in $B$ if and only if $\phi\equiv \phi^\tcont$.
\end{proposition}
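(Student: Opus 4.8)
The plan is to imitate Proposition~\ref{prop:ofocont}, reducing to the monotone normal form and restricting the ``bulk'' of the model to $B$-free types. The essential difference from the $\ofo$ case is that the $D\times\omega$ blow-up used there is useless here: it turns finite models into infinite ones, a distinction $\ofoei$ can detect, so $(D,V)\equiv^{\ofoei}(D\times\omega,V_\pi)$ fails. I would therefore replace that device by an argument that uses the normal form together with $B$-monotonicity to transport satisfaction between a model and its ``shrunk'' version.

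First I would invoke Corollary~\ref{cor:ofoeipositivenf} (which already delivers $\Sigma\subseteq\Pi$) to assume $\phi=\bigvee_j\mondbnfofoei{\vlist{T_j}}{\Pi_j}{\Sigma_j}{B}$, writing $\delta_j$ for the $j$-th disjunct. The crux I would isolate as a lemma: a single disjunct $\delta_j$ is continuous in $B$ iff $\Sigma_j$ is \emph{$B$-free}, i.e.\ $S\cap B=\nada$ for all $S\in\Sigma_j$. The forward direction is immediate: if some $S\in\Sigma_j$ meets $B$, then $\mondbnfinf{\Sigma_j}{B}$ forces $\qu y.\tau^{B}_{S}(y)$, hence infinitely many elements with $b(y)$ true for $b\in S\cap B$, so no $U\leq^{\omega}_{B}V$ can satisfy $\delta_j$. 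For the converse, given $(D,V)\models\delta_j$ with $\Sigma_j$ $B$-free, I would keep the $B$-colours only on the set $F$ of elements whose type is \emph{not} in $\Sigma_j$ (finite by the $\dqu$-conjunct) and erase them elsewhere; since $B$-free types are insensitive to erasing $B$-colours, the resulting $U\leq^{\omega}_{B}V$ still satisfies $\delta_j$.

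Next I would define the translation by discarding the non-continuous disjuncts and recasting the others in the continuous fragment: put $\delta_j^{\tcont}\isdef\bot$ when $\Sigma_j$ is not $B$-free, and otherwise
\[
\delta_j^{\tcont}\isdef
\exists\vlist{x}.\Big(\arediff{\vlist{x}}\land\bigwedge_i\tau^{B}_{T_i}(x_i)
  \land \wqu z.\big(\arediff{\vlist{x},z}\to\bigvee_{S\in\Pi_j\cup\Sigma_j}\tau^{B}_{S}(z),\;\bigvee_{S\in\Sigma_j}\tau^{B}_{S}(z)\big)\Big)
\land\bigwedge_{S\in\Sigma_j}\qu y.\tau^{B}_{S}(y),
\]
and $\phi^{\tcont}\isdef\bigvee_j\delta_j^{\tcont}$. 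Membership $\phi^{\tcont}\in\cont{\ofoei(A)}{B}$ I would verify against the grammar: for a kept disjunct $\Sigma_j$ is $B$-free, so the second $\wqu$-argument $\bigvee_{S\in\Sigma_j}\tau^{B}_{S}(z)$ and the conjunct $\bigwedge_{S\in\Sigma_j}\qu y.\tau^{B}_{S}(y)$ are $B$-free, while the remaining $B$-occurrences are positive literals $b(x_i)$ and $b(z)$, all permitted. I would then record $\delta_j^{\tcont}\equiv\delta_j$ for kept $j$: unfolding $\wqu z.(\alpha,\psi)=\forall z.(\alpha\lor\psi)\land\dqu z.\psi$ and noting $\psi\models\alpha$ collapses $\forall z.(\alpha\lor\psi)$ to the universal part of $\mondbnfofoe{\vlist{T_j}}{\Pi_j\cup\Sigma_j}{B}$, while $\dqu z.\psi$ together with $\bigwedge_{S\in\Sigma_j}\qu y.\tau^{B}_{S}(y)$ reconstitutes $\mondbnfinf{\Sigma_j}{B}$ (the $\vlist{x}$-free conjuncts can be pulled out of $\exists\vlist{x}$).

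Finally I would assemble the equivalence. Since $\delta_j^{\tcont}\equiv\delta_j$ for kept $j$ and $\bot$ otherwise, $\phi^{\tcont}\equiv\bigvee_{\text{kept }j}\delta_j\models\phi$ holds unconditionally; combined with Proposition~\ref{p:coniscont} this gives the right-to-left direction, for $\phi^{\tcont}$ is continuous, so $\phi\equiv\phi^{\tcont}$ forces $\phi$ continuous. For left-to-right, assuming $\phi$ continuous and $(D,V)\models\phi$, continuity yields $U\leq^{\omega}_{B}V$ with $(D,U)\models\phi$, hence $(D,U)\models\delta_j$ for some $j$; as $U(b)$ is finite, the lemma forces $\Sigma_j$ to be $B$-free, so $j$ is kept; and since $\delta_j$ is $B$-positive, hence monotone in $B$ by Proposition~\ref{p:monoismonot}, $U\leq_{B}V$ gives $(D,V)\models\delta_j\equiv\delta_j^{\tcont}$, so $(D,V)\models\phi^{\tcont}$. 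The empty-domain case is checked separately, the relevant disjunct being $\vlist{T_j}=\Pi_j=\Sigma_j=\nada$, which is kept. The main obstacle is exactly the failure of the $D\times\omega$ argument; the normal-form/monotonicity route repairs it and pinpoints the sole obstruction to continuity as a $B$-coloured infinite type, which is precisely what the translation deletes.
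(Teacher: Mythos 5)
Your proposal is correct and follows essentially the same route as the paper's own proof: reduce to the monotone normal form of Corollary~\ref{cor:ofoeipositivenf}, replace by $\bot$ exactly those disjuncts whose $\Sigma$ meets $B$ (since $\qu y.\tau^{B}_{S}(y)$ with $S\cap B\neq\nada$ forces some $V(b)$ infinite), and recast the surviving disjuncts via the $\wqu$ quantifier to exhibit membership in $\cont{\ofoei(A)}{B}$. The only differences are presentational: you build the $\wqu$-rewriting into the definition of the translation and isolate the ``continuous iff $\Sigma$ is $B$-free'' observation as a standalone lemma, whereas the paper keeps the disjunct syntactically unchanged and argues the rewriting separately.
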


\begin{proof}
By Corollary \ref{cor:ofoeipositivenf}, we assume that $\phi$ is in basic normal form, i.e., $\phi = \bigvee 
\mondbnfofoei{\vlist{T}}{\Pi}{\Sigma}{B}$.
For the translation let 
$\big(\bigvee \mondbnfofoei{\vlist{T}}{\Pi}{\Sigma}{B}\big)^\tcont \isdef 
\bigvee \mondbnfofoei{\vlist{T}}{\Pi}{\Sigma}{B}^\tcont$ where
\[
\mondbnfofoei{\vlist{T}}{\Pi}{\Sigma}{B}^\tcont \isdef
\begin{cases}
   \bot &\text{ if } B \cap \bigcup \Sigma \neq \nada
\\ \mondbnfofoei{\vlist{T}}{\Pi}{\Sigma}{B} &\text{ otherwise}.
\end{cases}
\]

First we prove the right-to-left direction of the proposition. 
By Proposition~\ref{p:coniscont} it is enough to show that $\phi^\tcont \in
\cont{\ofoei(A)}{B}$. 
We focus on the disjuncts of $\phi^\tcont$. 
The interesting case is where $B \cap \bigcup \Sigma = \nada$.
If we rearrange $\mondbnfofoei{\vlist{T}}{\Pi}{\Sigma}{B}$ somewhat and define
the formulas $\phi', \psi$ as follows:
\begin{align*}
\exists \vlist{x}.\Big(
   & \arediff{\vlist{x}} \land \bigwedge_i \tau^B_{T_i}(x_i)\ 
     \land \forall z.(\underbrace{\lnot\arediff{\vlist{x},z} 
         \lor \bigvee_{S\in \Pi} \tau^B_S(z)}_{\phi'(\vlist{x},z)} 
         \lor \underbrace{\bigvee_{S\in \Sigma} \tau^B_S(z)}_{\psi(z)})\ 
     \land \dqu y.\underbrace{\bigvee_{S\in\Sigma} \tau^B_S(y)}_{\psi(y)} \Big)
\\ &  
     \land \bigwedge_{S\in\Sigma} \qu y.\tau^B_S(y).
\end{align*}
Then we find that
\[
\mondbnfofoei{\vlist{T}}{\Pi}{\Sigma}{B} \equiv 
\exists \vlist{x}.\Big(\arediff{\vlist{x}} 
     \land \bigwedge_i \tau^B_{T_i}(x_i) \land \wqu z.(\phi'(\vlist{x},z),\psi(z)) 
      \Big) 
   \land \bigwedge_{S\in\Sigma} \qu y.\tau^B_S(y),
\]
which belongs to the required fragment because $B \cap \bigcup \Sigma = \nada$.

For the left-to-right direction of the proposition we have to prove that $\phi \equiv
\phi^\tcont$.

\bigskip
\noindent\fbox{$\Rightarrow$} 
Let $(D, V) \models \phi$. 
Because $\phi$ is continuous in $B$ we may assume that $ V(b)$ is finite, for
all $b \in B$.
Let $\mondbnfofoei{\vlist{T}}{\Pi}{\Sigma}{B}$ be a disjunct of $\phi$ such that
$(D, V) \models \mondbnfofoei{\vlist{T}}{\Pi}{\Sigma}{B}$.
If $D = \nada$, then $ {\vlist{T}}={\Pi}={\Sigma}=\nada$, and 
$\mondbnfofoei{\vlist{T}}{\Pi}{\Sigma}{B} =
(\mondbnfofoei{\vlist{T}}{\Pi}{\Sigma}{B})^\tcont$. 
Hence, let $D \neq \nada$.
Suppose for contradiction that $B \cap \bigcup \Sigma \neq \nada$, then there 
must be some $S\in\Sigma$ with $B \cap S \neq \nada$. 
Because $(D, V) \models \mondbnfofoei{\vlist{T}}{\Pi}{\Sigma}{B}$ we have, in
particular, that $(D, V) \models \qu y.\tau^B_S(x)$ and hence $V(b)$ must be 
infinite, for any $b \in B \cap S$, which is absurd.
It follows that $B \cap \bigcup \Sigma = \nada$, but then we trivially conclude
that $(D, V) \models \phi^\tcont$ because the disjunct remains unchanged. 

\bigskip
\noindent\fbox{$\Leftarrow$} 
Let $(D, V) \models \phi^\tcont$. 
The only difference between $\phi$ and $\phi^\tcont$ is that some disjuncts may 
have been replaced by $\bot$. Therefore this direction is trivial.
\end{proof}

We conclude the section by stating the following corollary, 
providing normal forms for the continuous fragment of $\ofoei$. 

\begin{corollary}\label{cor:ofoeicontinuousnf}
For any sentence $\phi \in \ofoei(A)$, the following hold.
\begin{enumerate}
\item \label{pt:ofoeifcontinuous}
The formula $\phi$ is continuous in $B \subseteq A$ iff 
$\phi$ is equivalent to a 
formula, effectively obtainable from $\phi$, which is a disjunction of 
formulas $\mondbnfofoei{\vlist{T}}{\Pi}{\Sigma}{B}$
where $\vlist{T},\Sigma$ and $\Pi$ are such that $\Sigma \subseteq \Pi \subseteq 
\vlist{T}$ and $B \cap \bigcup\Sigma = \nada$. 	
\item \label{pt:ofoeimonotone}
If $\phi$ is monotone 
(i.e., $\phi\in{\ofoei}^+(A)$) then 
$\phi$ is continuous in $B \subseteq A$
iff
it is equivalent to a formula, effectively obtainable from $\phi$, which is a 
disjunction of formulas
$\bigvee \posdbnfofoei{\vlist{T}}{\Pi}{\Sigma}$,
where $\vlist{T},\Sigma$ and $\Pi$ are such that $\Sigma \subseteq \Pi \subseteq 
\vlist{T}$ and $B \cap \bigcup\Sigma = \nada$. 	
\end{enumerate}
\end{corollary}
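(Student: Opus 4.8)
The plan is to read both items directly off the continuity translation of Proposition~\ref{lem:ofoeictrans}, feeding it with the monotone normal forms supplied by Corollary~\ref{cor:ofoeipositivenf}. Since nothing genuinely new needs to be proved, the work is essentially bookkeeping: tracking which shape of disjunct the translation outputs.

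For item~(\ref{pt:ofoeifcontinuous}), suppose first that $\phi$ is continuous in $B$. By Definition~\ref{def:cont} it is then monotone in $B$, so Corollary~\ref{cor:ofoeipositivenf}(1), together with Proposition~\ref{prop:bfofoei-sigmapi} to secure $\Sigma\subseteq\Pi$, yields an equivalent $\bigvee\mondbnfofoei{\vlist{T}}{\Pi}{\Sigma}{B}$ with $\Sigma\subseteq\Pi\subseteq\vlist{T}$ in each disjunct; this sentence lies in $\monot{\ofoei(A)}{B}$. Applying the translation $(-)^\tcont$ of Proposition~\ref{lem:ofoeictrans} and using continuity gives $\phi\equiv\phi^\tcont$, and by definition $\phi^\tcont$ replaces each disjunct with $B\cap\bigcup\Sigma\neq\nada$ by $\bot$. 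Discarding those $\bot$-disjuncts leaves precisely a disjunction of formulas $\mondbnfofoei{\vlist{T}}{\Pi}{\Sigma}{B}$ satisfying $\Sigma\subseteq\Pi\subseteq\vlist{T}$ and $B\cap\bigcup\Sigma=\nada$, with effectiveness inherited from the two results combined. For the converse I would invoke the right-to-left part of Proposition~\ref{lem:ofoeictrans}, which rewrites each such disjunct, using the weak quantifier $\wqu$, into a sentence of $\cont{\ofoei(A)}{B}$; since this fragment is closed under disjunction, the whole formula is syntactically continuous and hence continuous in $B$ by Proposition~\ref{p:coniscont}.

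Item~(\ref{pt:ofoeimonotone}) is the same argument with $A$ in place of $B$ in the positivity of the disjuncts. The key observation is the identity $\posdbnfofoei{\vlist{T}}{\Pi}{\Sigma}=\mondbnfofoei{\vlist{T}}{\Pi}{\Sigma}{A}$, so monotonicity of $\phi$ in all of $A$ lets me start from Corollary~\ref{cor:ofoeipositivenf}(2) and write $\phi\equiv\bigvee\posdbnfofoei{\vlist{T}}{\Pi}{\Sigma}$. The continuity-in-$B$ analysis then runs verbatim: were a satisfied disjunct to have $B\cap\bigcup\Sigma\neq\nada$, some name $b\in B$ would occur positively in a conjunct $\qu y.\tau^{+}_{S}(y)$, forcing $V(b)$ to be infinite and contradicting the finiteness witnessed by continuity; thus only disjuncts with $B\cap\bigcup\Sigma=\nada$ survive, and conversely each such disjunct is placed in $\cont{\ofoei(A)}{B}$ by the same $\wqu$-rewriting.

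The only delicate point, and where I would concentrate the care, is confirming that the rewritten disjuncts genuinely fit the grammar of $\cont{\ofoei(A)}{B}$: one must check that the components intended as $B$-free, namely the disjunction $\bigvee_{S\in\Sigma}\tau^{+}_{S}$ and the conjuncts $\qu y.\tau^{+}_{S}(y)$ (respectively their $\tau^{B}_{S}$ analogues), contain no name from $B$, which is exactly what $B\cap\bigcup\Sigma=\nada$ guarantees, while the positive atoms occurring in the $\tau^{B}_{T_i}$ (resp. $\tau^{+}_{T_i}$) blocks are admissible because positive occurrences of names in $B$ are allowed in the continuous fragment. Both verifications are already carried out inside the proof of Proposition~\ref{lem:ofoeictrans}, so the corollary amounts to recording the normal-form shape that the translation produces.
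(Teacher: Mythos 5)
Your proposal is correct and follows essentially the same route as the paper: the paper's proof simply invokes Proposition~\ref{prop:bfofoei-sigmapi} to secure $\Sigma \subseteq \Pi \subseteq \vlist{T}$ and then observes that both claims ``follow by construction of the translation'' of Proposition~\ref{lem:ofoeictrans}, which is precisely the combination of Corollary~\ref{cor:ofoeipositivenf}, the $(-)^{\tcont}$ translation, and the $\wqu$-rewriting argument that you spell out in detail. Your additional care about the grammar of $\cont{\ofoei(A)}{B}$ is exactly the verification already carried out inside the proof of Proposition~\ref{lem:ofoeictrans}, so nothing is missing.
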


\begin{proof}
Notice that, from Proposition \ref{prop:bfofoei-sigmapi}, every sentence in the
basic form $\bigvee \dbnfofoei{\vlist{T}}{\Pi}{\Sigma}$
can be assumed such that $\Sigma \subseteq 
\Pi \subseteq \vlist{T}$. The claims hence follow by construction of the translation.
\end{proof}



\section{Submodels and quotients}
\label{sec:inv}

There are various natural notions of morphism between monadic models; the one 
that we will be interested here is that  of a (strong) homomorphism.

\begin{definition}
\label{d:hom}
Let $\osmodel = (D,V)$ and $\osmodel' = (D',V')$ be two monadic models.
A map $f: D \to D'$ is a \emph{homomorphism} from $\osmodel$ to $\osmodel'$, 
notation: $f: \osmodel \to \osmodel'$, if we have $d \in V(a)$ iff $f(d) \in 
V'(a)$, for all $a \in A$ and $d \in D$.
\end{definition}

In this section we will be interested in the sentences of $\ofo, \ofoe$ and 
$\ofoei$ that are preserved under taking submodels and the ones that are 
invariant under quotients.

\begin{definition}
\label{d:inv}
Let $\osmodel = (D,V)$ and $\osmodel' = (D',V')$ be two monadic models.
We call $\osmodel$ a \emph{submodel} of $\osmodel'$ if $D \subseteq D'$ and 
the inclusion map $\iota_{DD'}: D \hookrightarrow D'$ is a homomorphism, and 
we say that $\osmodel'$ is a \emph{quotient} of $\osmodel$ if there 
is a surjective homomorphism $f: \osmodel \to \osmodel'$.

Now let $\phi$ be an $\llang$-sentence, where $\llang \in \{ \ofo, \ofoe, \ofoei
\}$.
We say that $\phi$ is \emph{preserved under taking submodels} if 
$\osmodel \models \phi$ implies $\osmodel' \models \phi$, whenever
$\osmodel'$ is a submodel of $\osmodel$.
Similarly, $\phi$ is \emph{invariant under taking quotients} if we have
$\osmodel \models \phi$ iff $\osmodel' \models \phi$, whenever $\osmodel'$ is
a quotient of $\osmodel$.
\end{definition}

The first of these properties (preservation under taking submodels) is well
known from classical model theory --- it is for instance the topic of the
{\L}os-Tarski Theorem.
When it comes to quotients, in model theory one is usually more interested in
the formulas that are \emph{preserved} under surjective homomorphisms (and
the definition of homomorphism may also differ from ours): for instance, this 
is the property that is characterised by Lyndon's Theorem.
Our preference for the notion of \emph{invariance} under quotients stems from 
the fact that the property of invariance under quotients plays a key role in
characterising the \emph{bisimulation-invariant fragments} of various monadic
second-order logics, as is explained in our companion paper~\cite{companionWEAK}.


\subsection{Preservation under submodels}

In this subsection we characterise the fragments of our predicate logics 
consisting of the sentences that are preserved under taking submodels.
That is, the main result of this subsection is a {\L}os-Tarksi Theorem for
$\ofoei$.

\begin{definition}
The \emph{universal fragment} of the set $\ofoei(A)$ is the collection 
$\univ{\ofoei(A)}$ of formulas given by the following grammar:
\[
\varphi \defbnf
\top \mid \bot 
\mid a(x)
\mid \neg a(x)
\mid x \foeq y
\mid x \foneq y
\mid \varphi \lor \varphi
\mid \varphi \land \varphi
\mid \forall x.\varphi
\mid \dqu x.\varphi
\]
where $x,y\in \fovar$ and $a \in A$.
The universal fragment $\univ{\ofoe(A)}$ is obtained by deleting the clause for
$\dqu$ from this grammar, and we obtain the universal fragment $\univ{\ofo(A)}$ 
by further deleting both clauses involving the equality symbol.
\end{definition}

\begin{theorem}
\label{t:univ}
Let $\phi$ be a sentence of the monadic logic $\llang(A)$, where $\llang \in 
\{ \ofo, \ofoe, \ofoei \}$.
Then $\phi$ is preserved under taking submodels if and only if there is a 
equivalent formula $\phi^{\tuniv} \in \univ{\llang(A)}$.
Furthermore, it is decidable whether a sentence $\phi \in \llang(A)$ has this 
property or not.
\end{theorem}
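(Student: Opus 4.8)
The plan is to follow the same two-part strategy used for monotonicity and continuity, exploiting the basic normal forms from Section 3. The `easy' direction — that every formula in $\univ{\llang(A)}$ is preserved under submodels — should be an induction on formula complexity; I would isolate it as a separate proposition (the analogue of Proposition \ref{p:monoismonot}). The only clauses requiring thought are $\forall$ and $\dqu$: if $\osmodel'$ is a submodel of $\osmodel$, then $D' \subseteq D$, so a universal statement true on the larger domain remains true when we restrict attention to fewer witnesses; for $\dqu x.\psi$ one notes that removing elements can only shrink the (already cofinite) set of witnesses of $\psi$ by a subset, but since $D' \subseteq D$ the relevant quantification ranges over $D'$ and the ``all but finitely many'' condition is inherited. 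The atomic and Boolean cases are routine, and the empty-domain case needs the usual separate check.

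\textbf{The hard direction} is expressive completeness, and here I would prove the stronger statement that there is an effective translation $(\cdot)^{\tuniv}$ so that $\phi$ is preserved under submodels iff $\phi \equiv \phi^{\tuniv}$. Following the pattern, I would put $\phi$ into basic form $\bigvee \dbnfofoei{\vlist{T}}{\Pi}{\Sigma}$ (Theorem \ref{thm:bfofoei}) and define the translation disjunct-by-disjunct. The key observation is that a disjunct $\dbnfofoei{\vlist{T}}{\Pi}{\Sigma}$ asserts both existential information (witnesses of types in $\vlist{T}$, infinitely many of each type in $\Sigma$) and universal information (every element has a type in $\Pi \cup \Sigma$); only the universal part survives passage to submodels. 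So the natural candidate for $(\dbnfofoei{\vlist{T}}{\Pi}{\Sigma})^{\tuniv}$ is the formula that keeps the constraint $\forall z.\bigvee_{S \in \Pi \cup \Sigma}\tau_S(z)$ and discards the existential witnesses — i.e.\ something of the shape $\forall z.\bigvee_{S\in\Pi\cup\Sigma}\tau_S(z)$, which lies in $\univ{\ofoei(A)}$ (note no $\qu$ or $\exists$ occurs). For the $\ofoei$ case one must be careful whether any $\dqu$-content is retained; since $\dbnfinf{\Sigma}$ mixes a $\qu$-conjunct (existential in flavour) with a $\dqu$-conjunct, I expect the universal projection keeps only the $\dqu y.\bigvee_{S\in\Sigma}\tau_S(y)$ part.

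\textbf{The main obstacle} will be proving that this projection is correct: that if $\phi$ is genuinely preserved under submodels, then $\phi$ is equivalent to the universal projection of its normal form. The $\Rightarrow$ inclusion ($\phi \models \phi^{\tuniv}$) is immediate since $\phi^{\tuniv}$ weakens $\phi$. For $\Leftarrow$, I would argue: given a model $\osmodel \models \phi^{\tuniv}$, I must produce the conclusion $\osmodel \models \phi$ using preservation. The idea is that $\phi^{\tuniv}$, being purely universal, is determined by which types are \emph{realised} (and which occur cofinitely), and one can exhibit a model $\osmodel_0 \models \phi$ of which $\osmodel$ is a submodel — or run the argument in reverse, embedding $\osmodel$ into a suitable extension satisfying $\phi$ and invoking preservation downward. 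Concretely, I would take a disjunct whose universal projection $\osmodel$ satisfies, freely add the finitely many existential witnesses (and infinitely many $\Sigma$-witnesses) demanded by that disjunct to build an extension $\osmodel^+ \supseteq \osmodel$ with $\osmodel^+ \models \dbnfofoei{\vlist{T}}{\Pi}{\Sigma}$, hence $\osmodel^+ \models \phi$; then preservation under submodels gives $\osmodel \models \phi$, provided $\osmodel$ really is a submodel of $\osmodel^+$ (it is, since adding elements of already-permitted types preserves the homomorphism condition on the inclusion). The delicate points are handling the empty domain uniformly across the three logics and checking that the added witnesses do not violate the universal constraint $\Pi \cup \Sigma$ — which they cannot, precisely because the witnesses have types drawn from $\vlist{T}$ and in basic form $\Sigma \cup \Pi \subseteq \vlist{T}$, so one must verify $\vlist{T} \subseteq \Pi\cup\Sigma$ is \emph{not} assumed and adjust the projection accordingly. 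Finally, decidability follows immediately from effectiveness of $(\cdot)^{\tuniv}$ together with Fact \ref{f:decido}, exactly as in the proofs of Theorems \ref{t:mono} and \ref{t:cont}.
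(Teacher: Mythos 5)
Your proposal matches the paper's proof essentially step for step: the easy direction by induction (with the $\dqu$ clause handled via $D'\setminus X_{D',V'} \subseteq D\setminus X_{D,V}$), the translation that keeps only the universal and $\dqu$ conjuncts of each normal-form disjunct, and the completeness argument that extends a model of $\phi^{\tuniv}$ by the missing $\vlist{T}$- and $\Sigma$-witnesses and then invokes preservation downward. The one detail you flag at the end is exactly right: the paper's universal projection ranges over $\vlist{T}\cup\Pi\cup\Sigma$ (not just $\Pi\cup\Sigma$), which is needed because the elements realising the types in $\vlist{T}$ are exempt from the $\arediff{\vlist{x},z}$-guarded universal clause of the basic form.
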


We start by verifying that universal formulas satisfy the property. 

\begin{proposition}
\label{p:univ1}
Let $\phi \in \univ{\llang(A)}$ be a universal sentence of the monadic logic $\llang(A)$, where $\llang \in 
\{ \ofo, \ofoe, \ofoei \}$. 
Then $\phi$ is preserved under taking submodels.
\end{proposition}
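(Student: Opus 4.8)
The plan is to prove a slightly more general statement by induction on the complexity of $\phi$, and then specialise to sentences. Fix a model $\osmodel = (D,V)$ together with a submodel $\osmodel' = (D',V')$ of it, so that $D' \subseteq D$ and, since the inclusion map is a homomorphism, $V'(a) = V(a) \cap D'$ for every $a \in A$. The strengthened claim I would establish is: for every formula $\phi \in \univ{\llang(A)}$ (not necessarily a sentence) and every assignment $g : \fovar \to D'$, if $\osmodel, g \models \phi$ then $\osmodel', g \models \phi$. Because $V'(a) = V(a) \cap D'$, for any $d \in D'$ and any name $a$ we have $d \in V(a)$ iff $d \in V'(a)$; hence the atomic formulas $a(x)$, $\neg a(x)$, $x \foeq y$, $x \foneq y$ and the constants $\top, \bot$ receive the same truth value under $\osmodel, g$ and $\osmodel', g$ whenever $g$ takes values in $D'$. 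This settles the base cases, and the cases of $\land$ and $\lor$ are immediate from the induction hypothesis.

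The two quantifier cases are where the restriction to the universal fragment is genuinely used. For $\phi = \forall x.\phi'$, suppose $\osmodel, g \models \forall x.\phi'$; then $\osmodel, g[x \mapsto d] \models \phi'$ for every $d \in D$, in particular for every $d \in D' \subseteq D$, and since each $g[x \mapsto d]$ still maps into $D'$ the induction hypothesis yields $\osmodel', g[x \mapsto d] \models \phi'$ for all $d \in D'$, i.e. $\osmodel', g \models \forall x.\phi'$. For $\phi = \dqu x.\phi'$, the hypothesis gives a finite set $F \subseteq D$ with $\osmodel, g[x\mapsto d] \models \phi'$ for all $d \in D \setminus F$; taking $F' \isdef F \cap D'$, which is again finite, every $d \in D' \setminus F'$ lies in $D \setminus F$, so $\osmodel, g[x\mapsto d] \models \phi'$ and hence, by the induction hypothesis, $\osmodel', g[x\mapsto d] \models \phi'$. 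Thus all but finitely many elements of $D'$ satisfy $\phi'$, so $\osmodel', g \models \dqu x.\phi'$. Note that this argument would fail precisely for $\exists$ and $\qu$, which is exactly why those quantifiers are excluded from the universal fragment.

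It remains to deal with the empty submodel $\osmodel' = \emodel$, which is the extra check forced by our convention that domains may be empty (the induction above is vacuous here, as there are no assignments into $\nada$). For this I would argue directly at the level of sentences: a universal sentence $\phi$ is a Boolean combination, using only $\land$ and $\lor$, of the constants $\top, \bot$ and of sentences of the form $\forall x.\psi$ and $\dqu x.\psi$. By the definition of the truth relation on $\emodel$, every such quantified sentence is true on $\emodel$, while $\top$ and $\bot$ retain their truth values; hence the truth value on $\emodel$ of each constituent sentence is at least its truth value on the given model $\osmodel$. Since the top-level connectives $\land, \lor$ compute a monotone Boolean function of these constituents, $\osmodel \models \phi$ forces $\emodel \models \phi$. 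Combining this with the induction completes the proof for all submodels. The only genuinely delicate points are this empty-domain case and the bookkeeping with the finite exception set in the $\dqu$ clause; everything else is routine.
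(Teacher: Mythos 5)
Your proof is correct and follows essentially the same route as the paper's: an induction on formula complexity over assignments into the subdomain, with the $\dqu$ case handled by observing that the set of exceptions can only shrink when passing to the submodel. In fact your explicit treatment of the empty submodel $\emodel$ (via monotonicity of $\land,\lor$ over constituent sentences that are all true on $\emodel$) is slightly more careful than the paper's, which only dismisses the case where the \emph{larger} model is empty and whose inductive statement is vacuous when $D'=\nada$.
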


\begin{proof} 
It is enough to directly consider the case $\llang = \ofoei$.
Let $(D',V')$ be a submodel of the monadic model $(D,V)$. 
The case for $D = \nada$ being immediate, let us assume $D \neq \nada$.
By induction on the complexity of a formula $\phi \in \univ{\ofoei(A)}$ we will 
show that for any assignment $g: \fovar \to D'$ we have
\[
(D,V),g' \models \phi \text{ implies } (D',V'),g \models \phi,
\]
where $g':= g \circ \iota_{D'D}$.
We will only consider the inductive step of the proof where $\phi$ is of the 
form $\dqu x. \psi$.
Define $X_{D,V} \isdef \{ d \in D \mid (D,V), g'[x \mapsto d] \models \psi \}$,
and similarly, $X_{D',V'} \isdef \{ d \in D' \mid (D',V'), g[x \mapsto d] \models 
\psi \}$.
By the inductive hypothesis we have that $X_{D,V} \cap D' \subseteq X_{D',V'}$,
implying that $D' \setminus X_{D',V'} \subseteq D \setminus X_{D,V}$.
But from this we immediately obtain that 
\[
|D \setminus X_{D,V}| < \omega
\text{ implies } |D' \setminus X_{D',V'}| < \omega,
\]
which means that $(D,V),g' \models \phi$ implies $ (D',V'),g \models \phi$, as 
required. 
\end{proof}

Before verifying the  `hard' side of the theorem, we define the appropriate translations from each monadic logic into its universal fragment.

\begin{definition}
\label{d:univ_trans}
We start by defining the translations for sentences in basic normal forms. 

For  $\ofo$-sentences in basic form we first set
\[
          \Big(  \dbnfofo{\Sigma}  \Big)^{\tuniv} 
         \isdef 
   \forall z \bigvee_{S \in \Sigma} \tau_{S}(z)
\]
Second, we define $(\bigvee_{i} \alpha_{i})^{\tuniv} \isdef \bigvee 
\alpha_{i}^{\tuniv}$. 
Finally, we extend the translation $(-)^{\tuniv}$ to the collection of all 
$\ofo$-sentences by defining $\phi^{\tmoda} \isdef (\tbas{\phi})^{\tuniv}$, 
where $\tbas{\phi}$
is the basic normal form of $\phi$ as given by Fact~\ref{fact:ofonormalform}.

Similarly, for $\ofoe$-sentences we first define
\[
        \Big(  \dbnfofoe{\vlist{T}}{\Pi}  \Big)^{\tuniv} 
         \isdef 
   \forall z \bigvee_{S \in \vlist{T}\cup \Pi} \tau_{S}(z)
\]
and then we extend it to the full language by distributing over disjunction and applying Theorem~\ref{thm:bnfofoe} to convert an arbitrary $\ofoe$ sentence into an equivalent sentence in basic normal form.

Finally, for simple basic formulas of $\ofoei$, the translation $(-)^\tuniv$ is given as follows:
\[
(\dbnfofoei{\vlist{T}}{\Pi}{\Sigma})^{\tuniv} \isdef 
   \forall z \bigvee_{S \in \vlist{T}\cup \Pi\cup\Sigma} \tau_{S}(z) 
   \land \dqu z \bigvee_{S \in \Sigma} \tau_{S}(z).
\]
The definition is thus extended to the full language $\ofoei$ as expected: 
given a $\ofoei$-sentence $\phi$, by Theorem~\ref{thm:bfofoei} and
Proposition~\ref{prop:bfofoei-sigmapi} we compute an equivalent
basic form $\bigvee \dbnfofoei{\vlist{T}}{\Pi}{\Sigma}$, with 
$\Sigma \subseteq \Pi \subseteq \vlist{T}$, and therefore we set
$\phi^{\tuniv}\isdef \bigvee (\dbnfofoei{\vlist{T}}{\Pi}{\Sigma})^{\tuniv}$.
\end{definition}

The missing parts in the proof of the theorem is thence covered by the following
result.

\begin{proposition} 
\label{p:univ2}
For any monadic logic $\llang \in \{ \ofo, \ofoe, \ofoei \}$ there is an 
effective translation $(-)^\tuniv: \llang(A) \to \univ{\llang(A)}$ such that a
sentence $\phi \in \llang(A)$ is preserved under taking submodels if and only
if $\phi\equiv \phi^\tuniv$.
\end{proposition}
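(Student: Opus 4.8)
The plan is to prove the two implications of the biconditional separately, after reducing to basic normal forms. The right-to-left implication is immediate: by construction (Definition~\ref{d:univ_trans}) every disjunct of $\phi^\tuniv$ uses only the connectives $\forall$ and (in the $\ofoei$ case) $\dqu$ together with atoms, so $\phi^\tuniv\in\univ{\llang(A)}$; hence if $\phi\equiv\phi^\tuniv$ then $\phi$ is preserved under taking submodels by Proposition~\ref{p:univ1}. For the left-to-right implication I would first use Fact~\ref{fact:ofonormalform}, Theorem~\ref{thm:bnfofoe} or Theorem~\ref{thm:bfofoei} (and, in the $\ofoei$ case, Proposition~\ref{prop:bfofoei-sigmapi} to arrange $\Sigma\subseteq\Pi\subseteq\vlist{T}$) to assume $\phi=\bigvee D_i$ in basic form, where $D_i$ is $\dbnfofo{\Sigma}$, $\dbnfofoe{\vlist{T}}{\Pi}$ or $\dbnfofoei{\vlist{T}}{\Pi}{\Sigma}$ respectively, and $\phi^\tuniv=\bigvee D_i^\tuniv$. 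It then suffices to establish the two inclusions $\phi\models\phi^\tuniv$ and $\phi^\tuniv\models\phi$, the latter only under the assumption that $\phi$ is preserved under submodels.

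The inclusion $\phi\models\phi^\tuniv$ holds unconditionally and is the easy half. Inspecting Definition~\ref{d:univ_trans} disjunct by disjunct, $D_i^\tuniv$ retains exactly the universally quantified constraints of $D_i$ --- the requirement that every element realises a type in $\vlist{T}\cup\Pi\cup\Sigma$, and, for $\ofoei$, the $\dqu$-constraint that all but finitely many elements realise a type in $\Sigma$ --- while discarding the existential demands for witnesses. Hence $D_i\models D_i^\tuniv$ for each $i$, and taking disjunctions gives $\phi\models\phi^\tuniv$.

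For the reverse inclusion I would argue by an extension-plus-preservation argument, the heart of any {\L}o\'s--Tarski style result. Assume $\phi$ is preserved under submodels and let $\osmodel=(D,V)\models\phi^\tuniv$, so that $\osmodel\models D_i^\tuniv$ for some disjunct. The key step is to build a supermodel $\osmodel^\uparrow=(D^\uparrow,V^\uparrow)$ of which $\osmodel$ is a submodel (so $D\subseteq D^\uparrow$ and $V=V^\uparrow{\rest}D$, making the inclusion a homomorphism in the sense of Definition~\ref{d:inv}) and which satisfies the full disjunct $D_i$. Concretely, one adjoins fresh elements to supply the distinct witnesses of the types $T_1,\dots,T_n$ demanded by the existential part of $D_i$, and, in the $\ofoei$ case, infinitely many fresh elements of each type $S\in\Sigma$. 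Once $\osmodel^\uparrow\models D_i$ we have $\osmodel^\uparrow\models\phi$, and since $\osmodel$ is a submodel of $\osmodel^\uparrow$, preservation yields $\osmodel\models\phi$, as required. The empty-domain case $D=\nada$ is handled directly from the semantics on $\emodel$: every $D_i^\tuniv$, being a conjunction of $\forall$- and $\dqu$-sentences, holds vacuously on $\emodel$, so $\emodel\models\phi^\tuniv$ amounts to $\phi$ being satisfiable; as $\emodel$ is a submodel of every model, preservation then gives $\emodel\models\phi$.

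The main obstacle is the verification, hidden in the construction above, that a model of $D_i^\tuniv$ can be completed to a model of $D_i$ by \emph{only adding} elements; equivalently, that $D_i^\tuniv$ defines precisely the closure of the models of $D_i$ under taking submodels. This is where the information recorded by $D_i^\tuniv$ is used essentially: the $\forall$-conjunct pins the set of realised types inside $\vlist{T}$, so no forbidden type is present and the witnesses can be added without disturbing the universal constraint, while in the $\ofoei$ case the $\dqu$-conjunct guarantees that all but finitely many elements already have a type in $\Sigma$, so that adjoining infinitely many $\Sigma$-elements produces exactly the infinitary profile required by $\dbnfinf{\Sigma}$. The bookkeeping for $\ofoe$ and $\ofoei$ --- tracking finite multiplicities through the equality atoms and the finite/infinite split through $\dqu$ --- is the delicate part; the $\ofo$ case is comparatively transparent, since there $D_i^\tuniv$ records only the support $\forall z.\bigvee_{S\in\Sigma}\tau_S(z)$ and the completion merely adds one element of each type of $\Sigma$ missing from $\osmodel$.
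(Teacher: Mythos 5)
Your strategy coincides with the paper's: reduce to basic normal form, note that $\phi \models \phi^{\tuniv}$ because the translation only discards existential demands, and for the converse extend a model of a disjunct $D_i^{\tuniv}$ by fresh witnesses (plus, for $\ofoei$, fresh infinite blocks of each $\Sigma$-type) to a supermodel of $D_i$, then apply preservation under submodels.

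The problem is the step you yourself isolate as the main obstacle: the claim that a model of $D_i^{\tuniv}$ can always be completed to a model of $D_i$ by only adding elements, i.e.\ that $D_i^{\tuniv}$ axiomatises the submodel-closure of $\ext{D_i}$. For $\ofoe$ and $\ofoei$ this is false, and your justification for it is exactly where the argument breaks. The universal conjunct of $\dbnfofoe{\vlist{T}}{\Pi}$ does not merely confine the realised types to $\vlist{T}\cup\Pi$; it forces every element \emph{other than the $n$ witnesses} to realise a type in $\Pi$, and hence it caps the number of elements of each type $T \in \vlist{T}\setminus\Pi$ by the number of occurrences of $T$ in the list $\vlist{T}$. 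The translation $\forall z\,\bigvee_{S\in\vlist{T}\cup\Pi}\tau_S(z)$ forgets these caps. Concretely, $\phi = \forall x\,\forall y\,(x \foeq y)$ (``at most one element'') is preserved under taking submodels; its basic form contains the disjunct $\dbnfofoe{(\nada)}{\nada}$ (``exactly one element, of type $\nada$''), whose translation is $\forall z.\,\tau_{\nada}(z)$. A two-element model whose elements both have type $\nada$ satisfies this translation, cannot be extended to any model of $\phi$ (every supermodel still has at least two elements), and does not satisfy $\phi$; so both the intermediate claim that the expanded model satisfies $D_i$ and the conclusion $\phi\equiv\phi^{\tuniv}$ fail. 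The same example works in $\ofoei$, where the extra $\dqu$-conjunct of the translation controls only the finite/infinite split, not the finite multiplicities; the $\ofo$ case is unaffected because its basic form carries no counting information. You should be aware that the paper's own proof asserts exactly the same completion step without justification, so the gap is inherited rather than introduced; to repair it the translation must retain the multiplicity bounds, e.g.\ by adding, for each $T\in\vlist{T}\setminus\Pi$ occurring $m$ times in $\vlist{T}$, the universal conjunct $\forall x_0\cdots\forall x_{m}.\big(\arediff{x_0,\dots,x_{m}} \to \bigvee_{j}\lnot\tau_T(x_j)\big)$, which still lies in $\univ{\ofoe(A)}$ once $\lnot\tau_T$ is unfolded into a disjunction of literals; with that change your extension-plus-preservation argument does go through.
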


\begin{proof}
We only consider the case where $\llang = \ofoei$, leaving the other cases to 
the reader.

It is easy to see that $\phi^{\tuniv} \in \univ{\ofoei(A)}$, for every sentence
$\phi \in \ofoei(A)$; but then it is immediate by Proposition~\ref{p:univ1} that 
$\phi$ is preserved under taking submodels if $\phi \equiv \phi^{\tuniv}$.

For the left-to-right direction, assume that $\phi$ is preserved 
under taking submodels.
It is easy to see that $\phi$ implies $\phi^{\tuniv}$, so we focus on proving
the opposite.
That is, we suppose that $(D,V) \models \phi^{\tuniv}$, and aim to show that 
$(D,V) \models \phi$.
 
By Theorem \ref{thm:bfofoei} and Proposition~\ref{prop:bfofoei-sigmapi} we may assume without loss of 
generality that $\phi$ is a disjunction of setences of the form 
$\dbnfofoei{\vlist{T}}{\Pi}{\Sigma}$, where $\Sigma \subseteq \Pi \subseteq 
\vlist{T}$.
It follows that $(D,V)$ satisfies some  disjunct 
$   \forall z \bigvee_{S \in \vlist{T}\cup \Pi\cup\Sigma} \tau_{S}(z) 
   \land \dqu z \bigvee_{S \in \Sigma} \tau_{S}(z)$
of $\Big(\dbnfofoei{\vlist{T}}{\Pi}{\Sigma}\Big)^{\tuniv}$.
Expand $D$ with finitely many elements $\vlist{d}$, in one-one correspondence 
with $\vlist{T}$, and ensure that the type of each $d_{i}$ is $T_{i}$.
In addition, add, for each $S \in \Sigma$, infinitely many elements 
$\{ e^{S}_{n} \mid n \in \omega\}$, each of type $S$.
Call the resulting monadic model $\osmodel' = (D',V')$.

This construction is tailored to ensure that 
$(D',V') \models \dbnfofoei{\vlist{T}}{\Pi}{\Sigma}$, and so we obtain $(D',V') 
\models \phi$.
But obviously, $\osmodel$ is a submodel of $\osmodel'$, whence $(D,V) \models 
\phi$ by our assumption on $\phi$.
\end{proof}

\begin{proofof}{Theorem \ref{t:univ}}
The  first part of  the theorem is an immediate consequence of 
Proposition~\ref{p:univ2}. By applying Fact \ref{f:decido} to 
Proposition~\ref{p:univ2} we finally obtain that for the three concerned 
formalisms the problem of deciding whether a sentence is preserved under taking 
submodels is decidable.
\end{proofof}

As an immediate consequence of the proof of the previous 
Proposition~\ref{p:univ2}, we get effective normal forms for the universal
fragments.

\begin{corollary}\label{cor:univ}
The following hold:
\begin{enumerate}
\item\label{cor:ofo} 
A sentence $\phi \in \ofoe(A)$ is preserved under taking submodels iff it is 
equivalent to a formula 
$\bigvee \big( \forall z \bigvee_{S \in \Sigma} \tau_{S}(z)\big)$, for  types 
$\Sigma\subseteq \wp(A)$.
\item\label{cor:ofoe} 
A sentence $\phi \in \ofoe(A)$ is preserved under taking submodels iff it is 
equivalent to a formula 
$\bigvee \big( \forall z \bigvee_{S \in \vlist{T}\cup \Pi} \tau_{S}(z)\big)$, 
for types $\Pi \subseteq \wp(A)$ and $\vlist{T} \in \wp(A)^k$ for some $k$.
\item\label{cor:ofoei} 
A sentence $\phi \in \ofoei(A)$ is preserved under taking submodels iff it is 
equivalent to a formula 
$\bigvee \big(\forall z \bigvee_{S \in \vlist{T}\cup \Pi\cup\Sigma} \tau_{S}(z) 
   \land \dqu z \bigvee_{S \in \Sigma} \tau_{S}(z)\big)$ 
for types $\Sigma\subseteq \Pi \subseteq \wp(A)$ and $\vlist{T} \in \wp(A)^k$
for some $k$.
\end{enumerate}
In all three cases, normal forms are effective.
\end{corollary}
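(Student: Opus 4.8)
The plan is to read all three normal forms directly off the proof of Proposition~\ref{p:univ2}, since the translation $(-)^{\tuniv}$ of Definition~\ref{d:univ_trans} was already defined disjunct by disjunct, and each disjunct of $\phi^{\tuniv}$ has \emph{exactly} the shape claimed in the corollary. So the work is essentially bookkeeping: I would match the output of the translation against the three displayed forms and invoke Propositions~\ref{p:univ1} and~\ref{p:univ2} for the two implications in each case.

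In more detail, I would argue the three cases uniformly. First I would fix $\llang$ and recall from Definition~\ref{d:univ_trans} that, after putting $\phi$ into basic normal form via Fact~\ref{fact:ofonormalform}, Theorem~\ref{thm:bnfofoe}, or Theorem~\ref{thm:bfofoei} respectively, the translation yields $\phi^{\tuniv}$ as a disjunction in which each disjunct is precisely $\forall z \bigvee_{S\in\Sigma}\tau_{S}(z)$ in the $\ofo$ case, $\forall z \bigvee_{S\in\vlist{T}\cup\Pi}\tau_{S}(z)$ in the $\ofoe$ case, and $\forall z \bigvee_{S\in\vlist{T}\cup\Pi\cup\Sigma}\tau_{S}(z) \land \dqu z \bigvee_{S\in\Sigma}\tau_{S}(z)$ in the $\ofoei$ case. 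These are exactly the disjuncts appearing in parts~(1)--(3). For the left-to-right implication I would then apply Proposition~\ref{p:univ2}: if $\phi$ is preserved under taking submodels, then $\phi\equiv\phi^{\tuniv}$, and the displayed shape of $\phi^{\tuniv}$ is the desired equivalent normal form. For the converse I would observe that any disjunction of formulas of the stated shape lies in $\univ{\llang(A)}$, since the only quantifiers occurring are $\forall$ and $\dqu$; hence by Proposition~\ref{p:univ1} it is preserved under taking submodels, and since this property is invariant under logical equivalence, $\phi$ inherits it.

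Finally I would address effectiveness and the single point that requires care. Effectiveness is immediate, as the normal-form constructions and the translation $(-)^{\tuniv}$ are all effective. The only subtlety is the $\ofoei$ case, where to guarantee the side condition $\Sigma\subseteq\Pi\subseteq\vlist{T}$ displayed in part~(3) I would first normalise $\phi$ using Proposition~\ref{prop:bfofoei-sigmapi} before applying the translation, exactly as in the proof of Proposition~\ref{p:univ2}. I expect no genuine obstacle: the entire content of the result sits in Propositions~\ref{p:univ1} and~\ref{p:univ2}, and the corollary merely records that the codomain of the translation consists of formulas of the three stated shapes.
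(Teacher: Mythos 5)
Your proposal is correct and coincides with the paper's own treatment: the corollary is recorded there as an immediate consequence of the proof of Proposition~\ref{p:univ2}, with the two directions supplied exactly as you describe by Propositions~\ref{p:univ2} and~\ref{p:univ1}, and with Proposition~\ref{prop:bfofoei-sigmapi} used to secure $\Sigma\subseteq\Pi\subseteq\vlist{T}$ in the $\ofoei$ case. Nothing further is needed.
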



\subsection{Invariance under quotients}

The following theorem states that monadic first-order logic \emph{without}
equality ($\ofo$) provides the quotient-invariant fragment of both monadic 
first-order logic with equality ($\ofoe$), and of infinite-monadic predicate 
logic ($\ofoei$).

\begin{theorem}
\label{t:qinv}
Let $\phi$ be a sentence of the monadic logic $\llang(A)$, where $\llang \in 
\{ \ofoe, \ofoei \}$.
Then $\phi$ is invariant under taking quotients if and only if there is a 
equivalent sentence in $\ofo$.
Furthermore, it is decidable whether a sentence $\phi \in \llang(A)$ has this 
property or not.
\end{theorem}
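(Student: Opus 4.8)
The plan is to follow the two-step pattern of the previous sections, exploiting the single guiding observation that an $\ofo$-sentence can detect nothing about a model beyond the \emph{set} of $A$-types it realises, and that this set is precisely the invariant preserved by surjective homomorphisms. First I would settle the easy (right-to-left) direction by proving that every $\ofo$-sentence is invariant under quotients. If $f\colon\osmodel\to\osmodel'$ is a surjective homomorphism, then the condition of Definition~\ref{d:hom} gives $V^{\flat}(d) = {V'}^{\flat}(f(d))$ for each $d$, so every type realised in $\osmodel$ is realised in $\osmodel'$; surjectivity of $f$ yields the converse inclusion, so the two models realise exactly the same types. By Fact~\ref{fact:ofonormalform} every $\ofo$-sentence is equivalent to some $\bigvee\dbnfofo{\Sigma}$, and $\osmodel\models\dbnfofo{\Sigma}$ holds iff $\Sigma$ is exactly the set of types realised in $\osmodel$; hence the truth of any $\ofo$-sentence depends only on this set, and $\ofo$-sentences are therefore quotient-invariant (with $\emodel$ its own only quotient).

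For the hard direction I would introduce, for each $\Sigma\subseteq\wp(A)$, the canonical model $C_{\Sigma}$ with domain $\Sigma$ and colouring $S\mapsto S$, so that $C_{\Sigma}$ realises each type in $\Sigma$ exactly once and $C_{\nada}=\emodel$. The pivotal remark is that $C_{\Sigma}$ is a quotient of \emph{any} model realising exactly the types in $\Sigma$, witnessed by the surjective homomorphism $d\mapsto V^{\flat}(d)$. Guided by this, I would define the translation
\[
\phi^{\tinvq} \isdef \bigvee\{\, \dbnfofo{\Sigma} \mid \Sigma\subseteq\wp(A) \text{ and } C_{\Sigma}\models\phi \,\},
\]
which visibly lands in $\ofo(A)$ and is effective, since $\wp(A)$ is finite and each $C_{\Sigma}$ is a fixed finite model on which the truth of $\phi$ is decidable.

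It then remains to show that $\phi\equiv\phi^{\tinvq}$ whenever $\phi$ is quotient-invariant. If $\osmodel\models\phi$ and $\Sigma$ is its set of realised types, then $C_{\Sigma}$ is a quotient of $\osmodel$, so invariance gives $C_{\Sigma}\models\phi$; hence $\dbnfofo{\Sigma}$ is one of the disjuncts of $\phi^{\tinvq}$ and $\osmodel$ satisfies it, giving $\osmodel\models\phi^{\tinvq}$. Conversely, if $\osmodel\models\phi^{\tinvq}$ then $\osmodel\models\dbnfofo{\Sigma}$ for some $\Sigma$ with $C_{\Sigma}\models\phi$; but then $\osmodel$ realises exactly $\Sigma$, so $C_{\Sigma}$ is again a quotient of $\osmodel$ and invariance forces $\osmodel\models\phi$. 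Decidability follows in the by-now-standard fashion: $\phi$ is quotient-invariant iff $\phi\equiv\phi^{\tinvq}$ (the backward implication using the easy direction, as $\phi^{\tinvq}\in\ofo$), and this equivalence is decidable by Fact~\ref{f:decido}. The main point demanding care---rather than a genuine obstacle---is the uniformity over $\llang\in\{\ofoe,\ofoei\}$: the entire argument never inspects the quantifier structure of $\phi$ and only ever evaluates $\phi$ on the finite models $C_{\Sigma}$, where $\qu$ and $\dqu$ reduce to ordinary first-order behaviour, so the same translation and proof serve both logics at once; one simply checks that the empty-domain conventions are respected via the case $\Sigma=\nada$.
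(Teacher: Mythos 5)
Your proof is correct, but it takes a genuinely different route from the paper's on the `hard' direction. The paper works \emph{upwards}: it first converts $\phi$ into basic normal form (Theorems~\ref{thm:bnfofoe} and~\ref{thm:bfofoei}), defines $\phi^{\tmoda}$, $\phi^{\tmodb}$ by stripping the counting and infinity data from each disjunct, and proves (Proposition~\ref{p-1P}) that $(D,V)\models\phi^{\tmodb}$ iff $(D\times\omega,V_\pi)\models\phi$; quotient-invariance is then applied to the surjection $D\times\omega\to D$. You work \emph{downwards}: you apply invariance to the surjection $d\mapsto V^{\flat}(d)$ onto the canonical finite model $C_{\Sigma}$, and define the translation by model-checking $\phi$ on the finitely many models $C_{\Sigma}$. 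Your argument is more elementary in that the hard direction needs no normal-form theorem at all --- only the observation that a surjective strong homomorphism preserves the set of realised types, plus decidable evaluation of $\ofoei$ on finite models (where $\qu x.\psi$ is always false and $\dqu x.\psi$ always true, so ``reduce to ordinary first-order behaviour'' is a slight understatement: they reduce to constants). What the paper's heavier machinery buys is reusability: the explicit syntactic translations $(\cdot)^{\tmoda}$, $(\cdot)^{\tmodb}$ on normal forms are exactly what is needed to restrict to the positive and continuous fragments in Theorem~\ref{t:inv1}, which your $C_{\Sigma}$-based translation also supports (it outputs $\bigvee\dbnfofo{\Sigma}$, giving Corollary~\ref{cor:qinv} directly, and $\tau_{S}$ can be replaced by $\tau^{+}_{S}$ in the positive case) but for which the paper's formulation is already packaged. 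Both arguments handle the empty model via $\Sigma=\nada$, and your decidability argument is the same as the paper's.
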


We first state the `easy' part of the first claim of the theorem. 
Note that in fact, we have already been using this observation in earlier parts
of the paper.

\begin{proposition}
\label{p:m-qinv}
Every sentence in $\ofo$ is invariant under taking quotients.
\end{proposition}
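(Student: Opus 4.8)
The plan is to prove the stronger statement that a surjective homomorphism preserves $\ofo$-truth in both directions, even for formulas with free variables, provided we transport assignments along the homomorphism. Concretely, let $f\colon(D,V)\to(D',V')$ be a surjective homomorphism and fix an arbitrary $\ofo$-formula $\phi$ (not necessarily a sentence). I would prove, by induction on the structure of $\phi$, that for every assignment $g\colon\fovar\to D$ we have $(D,V),g\models\phi$ iff $(D',V'),f\circ g\models\phi$. Specialising to a sentence then yields $\osmodel\models\phi$ iff $\osmodel'\models\phi$, which is exactly invariance under quotients in the sense of Definition~\ref{d:inv}. The empty-domain case is immediate: a surjective homomorphism forces $D=\nada$ iff $D'=\nada$, so either both models equal $\emodel$ and agree on every sentence, or both are non-empty and the inductive argument applies. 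This mirrors the style of the companion ``easy direction'' results (Propositions~\ref{p:monoismonot}, \ref{p:coniscont}, \ref{p:univ1}).

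The base and Boolean cases are routine. For the atomic case $\phi=a(x)$ I would use the defining property of a (strong) homomorphism from Definition~\ref{d:hom}, namely $g(x)\in V(a)$ iff $f(g(x))\in V'(a)$, which delivers the biconditional directly; the case $\phi=\neg a(x)$ follows by negation, and $\top,\bot$ are trivial. The only cases with genuine content are the quantifiers. For $\phi=\exists x.\psi$ the forward direction sends a witness $d\in D$ to $f(d)\in D'$ and applies the induction hypothesis, while the backward direction uses surjectivity of $f$ to pull a witness $d'\in D'$ back to some $d$ with $f(d)=d'$. Dually, for $\phi=\forall x.\psi$ surjectivity is what lets us pass from ``all $d\in D$'' to ``all $d'\in D'$'': given $d'$, write $d'=f(d)$ and invoke the induction hypothesis. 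Thus surjectivity is used precisely in the ``new-element'' direction of each quantifier clause, and the strong (iff) homomorphism condition is what secures a biconditional rather than a one-sided implication at the atomic level.

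There is no deep obstacle here; the proposition is easy, and the only care required is the bookkeeping in the quantifier cases. What is worth emphasising --- and what makes this the decisive feature distinguishing $\ofo$ from $\ofoe$ and $\ofoei$ --- is exactly which primitives the argument tolerates: had $\phi$ been permitted to contain $x\foeq y$, the induction would break, since a quotient may identify distinct elements and thereby falsify $x\foneq y$; and had it contained $\qu$ or $\dqu$, it would break because a quotient can collapse an infinite set of elements to a single point, altering the truth of a counting quantifier. The absence of both equality and the generalised quantifiers is therefore precisely what the proof exploits. Alternatively, one could bypass the induction entirely by appealing to Fact~\ref{fact:ofonormalform}: every $\ofo$-sentence is equivalent to a disjunction of formulas $\dbnfofo{\Sigma}$, each asserting only which set $\Sigma$ of types is realised, and a surjective homomorphism preserves the set of realised types in both directions --- forward because $f$ preserves the type of each element, backward by surjectivity.
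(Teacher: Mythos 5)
Your proposal is correct and follows essentially the same route as the paper, which states the identical loaded claim --- $(D,V),g \models \phi$ iff $(D',V'), f\circ g \models \phi$ for all $\ofo$-formulas $\phi$ and assignments $g$ --- and leaves the straightforward induction as an exercise; you simply supply the details (atomic case from the strong homomorphism condition, surjectivity in the ``new-element'' direction of each quantifier clause). Your added remarks on why equality and the generalised quantifiers would break the argument, and the alternative via the normal form of Fact~\ref{fact:ofonormalform}, are accurate but not needed.
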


\begin{proof}
Let $f: D \to D'$ provide a surjective homomorphism between the models $(D,V)$ 
and $(D',V')$, and 
observe that for any assignment $g: \fovar \to D$ on $D$, the composition $f 
\circ g: \fovar \to D'$ is an assignment on $D'$.

In order to prove the proposition one may show that, for an arbitrary 
$\ofo$-formula $\phi$ and an arbitrary assignment $g: \fovar \to D$, we have
\begin{equation}
\label{eq:m-qinv1}
(D,V),g \models \phi \text{ iff } (D',V'), f \circ g \models \phi.
\end{equation}
We leave the proof of \eqref{eq:m-qinv1}, which proceeds by a straightforward 
induction on the complexity of $\phi$, as an exercise to the reader.
\end{proof}

To prove the remaining part of Theorem~\ref{t:qinv}, we start with providing 
translations from respectively $\ofoe$ and $\ofoei$ to $\ofo$.

\begin{definition}
\label{d:7-21}
For $\ofoe$-sentences in basic form we first define
\[
   \Big( \dbnfofoe{\vlist{T}}{\Pi} \Big)^{\tmoda} 
\isdef \bigwedge_{i} \exists x_i. \tau_{T_i}(x_i) \land 
\forall x. \bigvee_{S\in\Pi} \tau_S(x),
\]
whereas for $\ofoei$-sentences in basic form we start with defining
\[
\Big( \dbnfofoei{\vlist{T}}{\Pi}{\Sigma} \Big)^{\tmodb} 
\isdef \bigwedge_{i} \exists x_i. \tau_{T_i}(x_i) \land 
\forall x. \bigvee_{S\in\Sigma} \tau_S(x).
\]
In both cases, the translations is then extended to the full language as in Definition \ref{d:univ_trans}.
\end{definition}

Note that the two translations may give \emph{different} translations for 
$\ofoe$-sentences.
Also observe that the $\Pi$ `disappears' in the translation of the formula
$\dbnfofoei{\vlist{T}}{\Pi}{\Sigma}$.

The key property of these translations is the following.

\begin{proposition}
\label{p-1P}
\begin{enumerate}
\item
For every one-step model $(D,V)$ and every $\phi \in \ofoe(A)$ we have
\begin{equation}
\label{eq-0cr}
(D,V) \models \phi^{\tmoda} \text{ iff } (D\times \omega,V_\pi) \models \phi.
\end{equation}
\item
For every one-step model $(D,V)$ and every $\phi \in \ofoei(A)$ we have
\begin{equation}
\label{eq-1cr}
(D,V) \models \phi^{\tmodb} \text{ iff } (D\times \omega,V_\pi) \models \phi.
\end{equation}
\end{enumerate}
\noindent
Here $V_{\pi}$ is the induced valuation given by 
$V_{\pi}(a) \isdef \{ (d,k) \mid d \in V(a), k\in\omega\}$.
\end{proposition}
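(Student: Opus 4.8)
The plan is to isolate the one structural feature of the ``blow-up'' model $\osmodel_\pi = (D\times\omega,V_\pi)$ that makes everything work, and record it as the sole lemma of the proof. Since $V_\pi^{\flat}(d,k)=V^{\flat}(d)$ and every $d\in D$ carries $\omega$ many copies, a type $S\subseteq A$ is realised in $\osmodel_\pi$ if and only if it is realised in $(D,V)$, and whenever it is realised it is realised by infinitely many elements. In other words all counting information in $\osmodel_\pi$ degenerates: $|S|_{\osmodel_\pi}\in\{0,\omega\}$, with $|S|_{\osmodel_\pi}=\omega$ exactly when $(D,V)\models\exists x.\tau_S(x)$. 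Everything below is a bookkeeping consequence of this observation.

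Because truth of a disjunction is the disjunction of the truths, and both $(-)^{\tmoda}$ and $(-)^{\tmodb}$ are defined so as to distribute over disjunctions, it suffices to verify \eqref{eq-0cr} and \eqref{eq-1cr} for a single basic disjunct; moreover, by Proposition~\ref{prop:bfofoei-sigmapi} I may assume for the $\ofoei$-disjuncts the stronger shape $\Sigma\subseteq\Pi\subseteq\vlist{T}$, so that $\Pi\cup\Sigma=\Pi$ and in particular every type of $\Sigma$ occurs in $\vlist{T}$. First I would dispose of the degenerate case $D=\nada$, where $\osmodel_\pi=\emodel$: reading off the empty-model clauses, the only disjunct satisfied by $\emodel$ is $\dbnfofoei{\nada}{\nada}{\nada}$ (resp. $\dbnfofoe{\nada}{\nada}$), whose translation is $\forall x.\bot$, which is likewise the only translated disjunct satisfied by $\emodel$; so \eqref{eq-1cr} (resp. \eqref{eq-0cr}) holds there. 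From now on assume $D\neq\nada$.

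For the main case I would analyse the disjunct conjunct by conjunct through the lemma, treating the richer $\ofoei$-disjunct $\dbnfofoe{\vlist{T}}{\Pi}\land\dbnfinf{\Sigma}$; the $\ofoe$-statement is then the same argument with only the $\dbnfofoe{\vlist{T}}{\Pi}$ part present (and $\Pi$ in place of $\Sigma$ in the universal conjunct). For the existential part $\bigwedge_i\exists x_i.\tau_{T_i}(x_i)$ of $\dbnfofoe{\vlist{T}}{\Pi}$: since each realised type in $\osmodel_\pi$ has infinitely many witnesses, distinct witnesses matching the possibly-repeated list $\vlist{T}$ (respecting $\arediff{\vlist{x}}$) exist in $\osmodel_\pi$ exactly when each $T_i$ is realised in $(D,V)$, which is precisely what the translated conjunct asserts on $(D,V)$. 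For the universal part $\forall z.(\arediff{\vlist{x},z}\to\bigvee_{S\in\Pi}\tau_S(z))$: deleting the finitely many witnesses from $\osmodel_\pi$ still leaves infinitely many representatives of every realised type, so this holds in $\osmodel_\pi$ iff every type realised in $(D,V)$ lies in $\Pi$. Now the $\dqu$-conjunct of $\dbnfinf{\Sigma}$ forces (again: any realised type outside $\Sigma$ would, in $\osmodel_\pi$, be realised infinitely often, violating ``all but finitely many lie in $\Sigma$'') that every realised type in fact lies in $\Sigma$; together with $\Sigma\subseteq\Pi$ this matches the translated universal conjunct $\forall x.\bigvee_{S\in\Sigma}\tau_S(x)$. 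Finally the $\qu$-conjunct $\bigwedge_{S\in\Sigma}\qu y.\tau_S(y)$ holds in $\osmodel_\pi$ iff each $S\in\Sigma$ is realised in $(D,V)$, which is already guaranteed by the existential witnesses together with $\Sigma\subseteq\vlist{T}$; hence it imposes nothing new, which is why it is absent from $\phi^{\tmodb}$ and why ``$\Pi$ disappears''. Assembling these equivalences yields \eqref{eq-1cr}.

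The step I expect to require the most care is the bookkeeping around the witnesses: reconciling the distinctness demand of $\arediff{\vlist{x}}$ in $\osmodel_\pi$ (with repetitions in $\vlist{T}$) against the witness-free existentials $\bigwedge_i\exists x_i$ of the translation, and confirming that the ``rest'' clause effectively ranges over \emph{all} realised types once finitely many witnesses are removed. This is exactly where the infinitude of every realised type in $\osmodel_\pi$ is used, and where the choice to pass to $D\times\omega$ rather than $D$ pays off: it collapses the counting power of $\ofoe$ and $\ofoei$ to the purely qualitative ``which types are realised'' information that $\ofo$ can express.
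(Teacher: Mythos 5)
Your proposal is correct and follows essentially the same route as the paper's proof: both reduce to a single basic disjunct and exploit the fact that in $(D\times\omega,V_\pi)$ a type is realised iff it is realised in $(D,V)$, and then always infinitely often, so that the counting and finiteness content of $\dbnfofoei{\vlist{T}}{\Pi}{\Sigma}$ collapses to the purely qualitative content of $\phi^{\tmodb}$. The only difference is presentational: you record this collapse as an explicit lemma and verify a biconditional conjunct by conjunct, whereas the paper argues the two implications separately, with explicit witnesses $(d_i,i)$ for the forward direction and a contradiction via the $\dqu$-conjunct for the converse.
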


\begin{proof}
We only prove the claim for $\ofoei$ (i.e., the second part of the proposition),
the case for $\ofoe$ being similar.
Clearly it suffices to prove \eqref{eq-1cr} for formulas of the form
$\alpha = \dbnfofoei{\vlist{T}}{\Pi}{\Sigma}$.

First of all, if $\osmodel$ is the empty model, we find ${\vlist{T}}={\Pi} = 
{\Sigma} = \nada$, $(D, V) = (D\times \omega, V_\pi)$, and
$\dbnfofoei{\vlist{T}}{\Pi}{\Sigma} = 
(\dbnfofoei{\vlist{T}}{\Pi}{\Sigma})^\tmodb$. 
In other words, in this case there is nothing to prove.
\smallskip

In the sequel we assume that $D \neq \nada$.

\noindent\fbox{$\Rightarrow$} 
Assume $(D, V) \models \phi^{\tmodb}$, we will show that 
$(D\times \omega, V_\pi) \models \dbnfofoei{\vlist{T}}{\Pi}{\Sigma}$.
Let $d_i$ be such that $V^{\flat}(d_i) = T_{i}$ in $(D, V)$. 
It is clear that the $(d_i,i)$ provide \emph{distinct} elements, with each 
$(d_i,i)$ satisfying $\tau_{T_i}$ in $(D\times\omega, V_{\pi})$ and therefore 
the first-order existential part of $\alpha$ is satisfied. 
With a similar argument it is straightforward to verify that the $\qu$-part of 
$\alpha$ is also satisfied --- here we critically use the observation that
$\Sigma \subseteq \vlist{T}$, so that every type in $\Sigma$ is witnessed in 
the model $(D,V)$, and hence witnessed infinitely many times in $(D\times\omega,
V_\pi)$.

For the universal parts of $\dbnfofoei{\vlist{T}}{\Pi}{\Sigma}$ it is enough
to observe that, because of the universal part of $\alpha^\tmodb$, \emph{every}
$d\in D$ realizes a type in $\Sigma$. 
By construction, the same applies to $(D\times\omega, V_{\pi})$, 
therefore this takes care of both universal quantifiers.
\medskip
		
\noindent\fbox{$\Leftarrow$} 
Assuming that $(D\times \omega, V_\pi) \models 
\dbnfofoei{\vlist{T}}{\Pi}{\Sigma}$,
we will show that $(D, V) \models \phi^\tmodb$. 
The existential part of $\alpha^{\tmodb}$ is trivial. 
For the universal part we have to show that every element of $D$ realizes a 
type in $\Sigma$. 
Suppose not, and let $d\in D$ be such that $\lnot\tau_S(d)$ for all $S\in 
\Sigma$. 
Then we have $(D\times\omega, V_\pi) \not\models \tau_S(d,k)$ for all $k$.
That is, there are infinitely many elements not realising any type in $\Sigma$. 
Hence we have $(D\times\omega, V_\pi) \not\models \dqu y.\bigvee_{S\in\Sigma} 
\tau_S(y)$. 
Absurd, because this formula is a conjunct of 
$\dbnfofoei{\vlist{T}}{\Pi}{\Sigma}$.
\end{proof}

\noindent
We will now show how the theorem follows from this.
First of all we verify that  in both cases $\ofo$ is expressively complete
for the property of being invariant under taking quotients.

\begin{proposition} 
\label{p:invq}
For any monadic logic $\llang \in \{ \ofoe, \ofoei \}$ there is an effective 
translation $(-)^\tinvq: \llang(A) \to \ofo$ such that a sentence $\phi \in
\llang(A)$ is invariant under taking quotients if and only if
$\phi\equiv \phi^\tinvq$.
\end{proposition}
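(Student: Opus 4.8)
The plan is to let the required translation be exactly the map furnished by Definition~\ref{d:7-21}: for $\llang = \ofoe$ put $\phi^{\tinvq} \isdef \phi^{\tmoda}$, and for $\llang = \ofoei$ put $\phi^{\tinvq} \isdef \phi^{\tmodb}$. In both cases the output lies in $\ofo$ by construction, and the map is effective, since the underlying basic forms (Theorem~\ref{thm:bnfofoe} and Theorem~\ref{thm:bfofoei}) are computable and the pointwise clauses defining $(-)^{\tmoda}$ and $(-)^{\tmodb}$ are purely syntactic. It then remains only to verify the biconditional.

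For the right-to-left direction I would argue that $\phi^{\tinvq}$ is an $\ofo$-sentence, hence invariant under taking quotients by Proposition~\ref{p:m-qinv}; since invariance under quotients is plainly preserved under logical equivalence, the assumption $\phi \equiv \phi^{\tinvq}$ forces $\phi$ itself to be invariant under quotients.

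The substance is in the left-to-right direction, and the key observation I would exploit is that every monadic model $(D,V)$ is a quotient of its $\omega$-blow-up $(D\times\omega, V_{\pi})$. Indeed, the projection $\pi: D\times\omega \to D$, $(d,k) \mapsto d$, is surjective, and by the definition of $V_{\pi}$ we have $(d,k) \in V_{\pi}(a)$ iff $d \in V(a)$, that is, iff $\pi(d,k) \in V(a)$; so $\pi$ is a surjective homomorphism in the sense of Definition~\ref{d:hom}, exhibiting $(D,V)$ as a quotient of $(D\times\omega, V_{\pi})$ (the empty-domain case being trivial, as $D\times\omega = \nada$ there). Assuming $\phi$ invariant under quotients, this yields $(D\times\omega, V_{\pi}) \models \phi$ iff $(D,V) \models \phi$. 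Combining this with Proposition~\ref{p-1P}, which gives $(D,V) \models \phi^{\tinvq}$ iff $(D\times\omega, V_{\pi}) \models \phi$, I obtain, for every monadic model $(D,V)$, the chain $(D,V) \models \phi^{\tinvq}$ iff $(D\times\omega, V_{\pi}) \models \phi$ iff $(D,V) \models \phi$; hence $\phi \equiv \phi^{\tinvq}$.

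I do not anticipate a genuine obstacle, as the heavy lifting is already carried out in Proposition~\ref{p-1P}: the only conceptual ingredient remaining is to recognise the projection $\pi$ as the surjective homomorphism linking the blow-up back to the original model. The one point requiring minor care is the bookkeeping for the empty domain, but there $(D,V)$ and $(D\times\omega, V_{\pi})$ coincide and the claim is immediate.
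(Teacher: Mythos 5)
Your proposal is correct and follows essentially the same route as the paper: define $\phi^{\tinvq}$ via $(\cdot)^{\tmoda}$ resp.\ $(\cdot)^{\tmodb}$, get one direction from Proposition~\ref{p:m-qinv}, and get the other by observing that $(D,V)$ is a quotient of $(D\times\omega,V_\pi)$ and chaining this with Proposition~\ref{p-1P}. The only difference is that you spell out the projection map as the witnessing surjective homomorphism, which the paper leaves implicit.
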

\begin{proof}
Let $\phi$ be a sentence of $\ofoei$, and let $\phi^\tinvq \isdef \phi^{\tmodb}$ (we only cover the case of $\llang = 
\ofoei$, the case for $\llang = \ofoe$ is similar, just take $\phi^\tinvq \isdef \phi^{\tmoda}$)
We will show that 
\begin{equation}
\label{eq:m-qinv2}
\phi \equiv \phi^{\tinvq} \text{ iff } \text{$\phi$ is invariant under 
taking quotients}.
\end{equation}
The direction from right to left is immediate by Proposition~\ref{p:m-qinv}.
For the other direction it suffices to observe that any model $(D,V)$ is a 
quotient of its `$\omega$-product' $(D\times \omega, V_\pi)$, and to 
reason as follows:
\begin{align*}
(D,V) \models \phi 
   & \text{ iff } (D\times \omega, V_\pi) \models \phi
   & \text{(assumption on $\phi$)}
\\ & \text{ iff } (D,V) \models \phi^{\tmodb}
   & \text{(Proposition~\ref{p-1P})}
\end{align*}

\end{proof}

Hence we can conclude.
\begin{proofof}{Theorem~\ref{t:qinv}}
The theorem is an immediate consequence of Proposition \ref{p:invq}.
Finally, the effectiveness of translation  $(\cdot)^{\tmodb}$, decidability of $\ofoei$ (Fact \ref{f:decido}) and \eqref{eq:m-qinv2} yield
that it is decidable whether a given $\ofoei$-sentence $\phi$ is invariant under 
taking quotients or not.
\end{proofof}

As a corollary, we therefore obtain:

\begin{corollary}\label{cor:qinv}
Let $\phi$ be a sentence of the monadic logic $\llang(A)$, where $\llang \in 
\{ \ofoe, \ofoei \}$. Then $\phi$ is invariant under taking quotients if and only if there is a 
equivalent sentence $\bigvee \big(\bigwedge_{S\in\Sigma} \exists x. \tau_S(x) \land 
   \forall x. \bigvee_{S\in\Sigma} \tau_S(x)\big)$, for  types $\Sigma\subseteq \wp(A)$. Moreover, such a normal form is effective.
\end{corollary}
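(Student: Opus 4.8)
The plan is to read this corollary directly off Theorem~\ref{t:qinv} (equivalently, Proposition~\ref{p:invq}), once one notices that the displayed normal form is \emph{exactly} the $\ofo$ basic form of Definition~\ref{def:bfofo}: the generic disjunct $\bigwedge_{S\in\Sigma} \exists x. \tau_S(x) \land \forall x. \bigvee_{S\in\Sigma} \tau_S(x)$ is literally $\dbnfofo{\Sigma} = \dgbnfofo{\Sigma}{\Sigma}$. So the statement to be proved is just the equivalence ``$\phi$ is invariant under quotients if and only if $\phi$ is equivalent to an $\ofo$-sentence in basic form'', and the whole task reduces to bridging between ``equivalent to \emph{some} $\ofo$-sentence'' (which Theorem~\ref{t:qinv} already delivers) and ``equivalent to an $\ofo$-sentence \emph{in basic form}''.

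For the right-to-left direction I would observe that any sentence of the form $\bigvee \dbnfofo{\Sigma}$ is a sentence of $\ofo$, and hence is invariant under taking quotients by Proposition~\ref{p:m-qinv}; therefore if $\phi$ is equivalent to such a sentence, $\phi$ is invariant under quotients as well. For the left-to-right direction I would first invoke Proposition~\ref{p:invq}: if $\phi \in \llang(A)$ (with $\llang \in \{\ofoe,\ofoei\}$) is invariant under quotients then $\phi \equiv \phi^{\tinvq}$, where $\phi^{\tinvq}$ is $\phi^{\tmodb}$ for $\ofoei$ and $\phi^{\tmoda}$ for $\ofoe$, and in both cases is a sentence of $\ofo$. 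It then only remains to put this $\ofo$-sentence into basic form, which is precisely what Fact~\ref{fact:ofonormalform} provides: $\phi^{\tinvq}$ is effectively equivalent to some $\bigvee \dbnfofo{\Sigma}$. Chaining the two equivalences gives $\phi \equiv \bigvee \dbnfofo{\Sigma}$, as required.

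Effectiveness then follows by composing effective steps. The translation $(-)^{\tinvq}$ is effective by Proposition~\ref{p:invq}, and the basic-form procedure of Fact~\ref{fact:ofonormalform} is effective; their composition produces the normal form effectively from $\phi$, establishing the final clause of the corollary.

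I would expect essentially no real obstacle here, as all the genuine content lives in Theorem~\ref{t:qinv}. The only point deserving care is the shape-matching between the two conjuncts of $\dbnfofo{\Sigma}$: the translated sentence $\phi^{\tmodb} = \bigvee(\bigwedge_i \exists x_i.\tau_{T_i}(x_i) \land \forall x. \bigvee_{S\in\Sigma}\tau_S(x))$ has its existential part indexed by $\vlist{T}$ but its universal part by $\Sigma$. Rather than appeal to Fact~\ref{fact:ofonormalform}, one could also argue directly: since $\Sigma \subseteq \vlist{T}$ (Proposition~\ref{prop:bfofoei-sigmapi}), any disjunct with $\vlist{T} \not\subseteq \Sigma$ is unsatisfiable (a witness for a type $T_i \notin \Sigma$ would violate the universal conjunct) and may be discarded, while every surviving disjunct has $\vlist{T} = \Sigma$ as sets and is thus equivalent to $\dbnfofo{\Sigma}$, yielding the stated form immediately. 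I would mention this alternative but carry out the cleaner route through Fact~\ref{fact:ofonormalform}.
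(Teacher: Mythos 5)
Your proof is correct and follows the route the paper intends: the displayed normal form is exactly the $\ofo$ basic form $\bigvee\dbnfofo{\Sigma}$, so the corollary is obtained by combining Proposition~\ref{p:invq} (equivalence to the effective $\ofo$-translation) with Proposition~\ref{p:m-qinv} for the easy direction and the passage to $\ofo$ basic form for the hard one. Your direct shape-matching observation (unsatisfiable disjuncts with $\vlist{T}\not\subseteq\Sigma$ may be discarded, the rest collapse to $\dbnfofo{\Sigma}$) is a valid and slightly more self-contained alternative to invoking Fact~\ref{fact:ofonormalform}, but both yield the same conclusion.
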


In our companion paper \cite{companionWEAK} on automata, we need versions of these results for 
the monotone and the continuous fragment.
For this purpose we define some slight modifications of the translations 
$(\cdot)^{\tmoda}$ and $(\cdot)^{\tmodb}$ which map positive and syntactically 
continuous sentences to respectively positive and syntactically continuous 
formulas.

\begin{theorem}
\label{t:inv1}
There are effective translations 
$(\cdot)^{\tmoda}: \ofoe^+ \to \ofo^{+} $
 and 
 $(\cdot)^{\tmodb}: {\ofoei}^+ \to \ofo^+$ 
such that 
$\phi \equiv \phi^{\tmoda}$ (respectively, $\phi \equiv \phi^{\tmodb}$) iff 
$\phi$ is invariant under quotients.
Moreover, we may assume that 
$(\cdot)^{\tmodb}: \cont{{\ofoei}(A)}{B}\cap {\ofoei}^+  \to \cont{\ofo(A)}{B} \cap \ofo^+$,
for any $B \subseteq A$.
\end{theorem}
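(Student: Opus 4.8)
The plan is to recycle the translations of Definition~\ref{d:7-21} and the key property of Proposition~\ref{p-1P}, but to (i) feed them the \emph{positive} normal forms of Corollaries~\ref{cor:ofoepositivenf} and~\ref{cor:ofoeipositivenf} rather than the general ones, and (ii) replace the full types $\tau_S$ occurring in the clauses by their positive variants $\tau^+_S$. Concretely, after first rewriting the input positive sentence into positive basic form (with $\Sigma\subseteq\Pi\subseteq\vlist{T}$, using Proposition~\ref{prop:bfofoei-sigmapi}), I would set, on each disjunct,
\[
\big(\posdbnfofoe{\vlist{T}}{\Pi}\big)^{\tmoda} \isdef
\bigwedge_i \exists x_i.\tau^+_{T_i}(x_i) \land \forall x.\bigvee_{S\in\Pi}\tau^+_S(x),
\qquad
\big(\posdbnfofoei{\vlist{T}}{\Pi}{\Sigma}\big)^{\tmodb} \isdef
\bigwedge_i \exists x_i.\tau^+_{T_i}(x_i) \land \forall x.\bigvee_{S\in\Sigma}\tau^+_S(x),
\]
extending both by distribution over disjunctions. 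Since these expressions contain no negative literal and no equality, they visibly lie in $\ofo^+$, which already pins down the shape of the codomain.

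The first step is to establish the positive analogue of Proposition~\ref{p-1P}: for every one-step model $(D,V)$ and every positive basic disjunct $\alpha=\posdbnfofoei{\vlist{T}}{\Pi}{\Sigma}$,
\[
(D,V)\models\alpha^{\tmodb}\text{ iff }(D\times\omega,V_\pi)\models\alpha
\]
(and symmetrically for $\tmoda$). The proof mirrors that of Proposition~\ref{p-1P} mutatis mutandis, reading ``the colour of $d$ contains $S$'' wherever the original argument reads ``the type of $d$ is $S$''; the two ingredients used there, namely that every type in $\Sigma$ is witnessed because $\Sigma\subseteq\vlist{T}$ and that the $\dqu$-conjunct forbids infinitely many elements outside $\Sigma$, are insensitive to this change. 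With this property in hand, the equivalence ``$\phi\equiv\phi^{\tmoda}$ (resp.\ $\phi\equiv\phi^{\tmodb}$) iff $\phi$ is invariant under quotients'' follows exactly as in Proposition~\ref{p:invq}: the right-to-left direction is Proposition~\ref{p:m-qinv} applied to $\phi^{\tmodb}\in\ofo^+\subseteq\ofo$, while the left-to-right direction uses that $(D,V)$ is a quotient of its $\omega$-product $(D\times\omega,V_\pi)$ together with the displayed equivalence.

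For the refinement to continuity I would feed the translation not an arbitrary positive basic form but the one furnished by Corollary~\ref{cor:ofoeicontinuousnf}(\ref{pt:ofoeimonotone}): a $B$-continuous sentence in $\ofoei^+$ is equivalent to a disjunction of formulas $\posdbnfofoei{\vlist{T}}{\Pi}{\Sigma}$ enjoying the extra condition $B\cap\bigcup\Sigma=\nada$. Under this condition the universal conjunct $\forall x.\bigvee_{S\in\Sigma}\tau^+_S(x)$ of $\alpha^{\tmodb}$ is built entirely from atoms $a(x)$ with $a\notin B$, hence is a $B$-free formula $\psi\in\ofo(A\setminus B)$ and is therefore a legal leaf of the syntactically continuous grammar of $\ofo$; the existential conjuncts $\exists x_i.\tau^+_{T_i}(x_i)$ are always admissible there, since that grammar allows $\exists$ over conjunctions of atoms $b(x)$ with $b\in B$ and $B$-free subformulas. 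Thus each $\alpha^{\tmodb}$ lies in $\cont{\ofo(A)}{B}\cap\ofo^+$, and as this fragment is closed under disjunction, so is the full translation.

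The main obstacle I anticipate is purely a verification: confirming that passing from full types $\tau_S$ to positive types $\tau^+_S$ does not break the key property of Proposition~\ref{p-1P}. Because positive types assert \emph{containment} of the colour rather than equality, one must check that both the surjectivity-style ``$\Leftarrow$'' argument (every element of $D\times\omega$ satisfying some $\tau^+_S$ with $S\in\Sigma$) and the witnessing of the $\qu$-conjuncts still go through; both do, precisely because the normal form guarantees $\Sigma\subseteq\vlist{T}$ and the $\dqu$-conjunct is retained. Beyond Propositions~\ref{p-1P} and~\ref{p:invq} and the positive normal forms, no genuinely new idea is required.
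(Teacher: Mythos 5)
Your proposal is correct and follows essentially the same route as the paper: the same positive translations on positive basic forms, the same reduction to a positive analogue of Proposition~\ref{p-1P} combined with the quotient argument of Proposition~\ref{p:invq}, and the same use of Corollary~\ref{cor:ofoeicontinuousnf} with $B \cap \bigcup\Sigma = \nada$ to get the continuity refinement. The only difference is that you spell out the verification of the positive analogue of Proposition~\ref{p-1P}, which the paper leaves as an exercise; your check that the passage from $\tau_S$ to $\tau^+_S$ preserves both directions of that equivalence is sound.
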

\begin{proof}

We define translations $(\cdot)^{\tmoda}:\ofoe^+ \to\ofo^+$ and 
$(\cdot)^{\tmodb}: {\ofoei}^+ \to\ofo^+$ as follows. 
For $\ofoe^{+},{\ofoei}^{+}$-sentences in simple basic form we define
\[\begin{array}{lll}
     \Big( \posdbnfofoe{\vlist{T}}{\Pi} \Big)^{\tmoda} 
   & \isdef 
   & \bigwedge_{i} \exists x_i. \tau^+_{T_i}(x_i) 
     \land \forall x. \bigvee_{S\in\Pi} \tau^+_S(x),
\\   \Big( \posdbnfofoei{\vlist{T}}{\Pi}{\Sigma} \Big)^{\tmodb} 
   & \isdef 
   & \bigwedge_{i} \exists x_i. \tau^+_{T_i}(x_i) 
     \land \forall x. \bigvee_{S\in\Sigma} \tau^+_S(x),
\end{array}\]
and then we use, respectively, the Corollaries~\ref{cor:ofoepositivenf} and
\ref{cor:ofoeipositivenf} to extend these translations to the full positive
fragments $\ofoe^{+}$ and ${\ofoei}^{+}$, as we did in Definition~\ref{d:7-21}
for the full language.

We leave it as an exercise for the reader to prove the analogue of 
Proposition~\ref{p-1P} for these translations, and to show how the first
statements of the theorem follows from this.

Finally, to see why we may assume that $(\cdot)^{\tmodb}$ restricts to a map 
from the syntactically $B$-continuous fragment of ${\ofoei}^+(A)$ to the 
syntactically $B$-continuous fragment of ${\ofo}^+(A)$, assume that $\phi 
\in \ofoei(A)$ is continuous in $B \subseteq A$.
By Corollary~\ref{cor:ofoeicontinuousnf} we may assume that $\phi$ is a
disjunction of formulas of the form $\posdbnfofoei{\vlist{T}}{\Pi}{\Sigma}$,
where $B \cap \bigcup \Sigma = \nada$.
This implies that in the formula $\phi^{\tmodb}$ no predicate symbol $b \in
B$ occurs in the scope of a universal quantifier, and so $\phi^{\tmodb}$
is syntactically continuous in $B$ indeed.
\end{proof}


{\small
\bibliographystyle{plain}
\bibliography{logic}
}

\end{document}